\newtheorem{thm}{Theorem}[section]
\newtheorem{lemma}[thm]{Lemma}
\newtheorem{prop}[thm]{Proposition}
\theoremstyle{definition}
\newtheorem{defin}[thm]{Definition}
\theoremstyle{remark}
\newtheorem{remark}[thm]{Remark}
\newtheorem{example}[thm]{Example}
\numberwithin{equation}{section}
\font\smc=cmcsc10 at 12.5pt
\font\smallsmc=cmcsc10 
\def\span{\operatorname{span}}
\def\rddots{\cdot^{\cdot^{\cdot^{}}}}
\def\oskip{\par\vbox to4mm{}\par}
\def\R{{\Bbb R}} \def\Z{{\Bbb Z}} \def\CP{{\Bbb{CP}}}
\def \v{{\bf v}}\def \u{{\bf u}}\def \exp{{\rm exp}} \def \log{{\rm log}}\def \r{{\bf r}}\def \k{{\bf k}}
\def\M{{\mathcal M}}  \def \g{{\mathfrak g}}
\def \Ad{{\rm Ad}} \def \ad{{\rm ad}}  \def\A{{\mathcal A}}   
  \def\L{{\mathcal L}}    \def\S{{\Bbb S}}
\def\1{{\Bbb I}}
\def\bysame{$\raise.2em\hbox to 3em{\hrulefill}$\thinspace, }
    \long\def\symbolfootnote[#1]#2{\begingroup%
    \def\thefootnote{\fnsymbol{footnote}}\footnote[#1]{#2}\endgroup} 
\begin{document}
\thispagestyle{empty}
\oskip
\begin{center}
 \textbf{\textsc{On classification of discrete, scalar-valued Poisson brackets}}
\par\oskip
{\smc e. parodi}\let\thefootnote\relax\footnotetext{\ SISSA, Via Bonomea 265, 34136, Trieste, Italy. E-mail: {\tt parodi@sissa.it}}
\end{center}
\par\oskip
\begin{quote}\footnotesize
{\smallsmc Abstract.} We address the problem of classifying discrete differential-geometric Poisson brackets (dDGPBs) of any fixed order on target space of dimension 1. It is proved that these Poisson brackets (PBs) are in one-to-one correspondence with the intersection points of certain projective hypersurfaces. In addition, they can be reduced to cubic PB of standard Volterra lattice by discrete Miura-type transformations. Finally, improving a consolidation lattice procedure, we obtain new families of non-degenerate, vector-valued and first order dDGPBs, which can be considered in the framework of admissible Lie-Poisson group theory.\\

\noindent {\it Keywords}: Discrete Poisson brackets, discrete Miura transformations, Lie-Poisson groups.
\end{quote}
\vspace{-0.5cm} 
\begin{spacing}{0.9}
\tableofcontents
\end{spacing}
\vspace{-1.1cm}
\section{Introduction}
In this paper we deal with the following class of local Poisson brackets (PBs) 
\begin{equation} \label{dDGPB}
\begin{array}{cclr}
	\left\{ u^i_n, u^j_{n+k}\right\}_M &=& g^{ij}_k(\u_n, \ldots, \u_{n+k}),	&   0 \leq k  \leq M,\\
	\left\{ u^i_n, u^j_{n+k}\right\}_M &\equiv& 0, 					& k>M,
\end{array}
\end{equation}
defined on the phase space of infinite sequences
$$
\begin{array}{cl}
\u:& \Z \longrightarrow \M^N\\
    & n \longmapsto \u_n \doteq \left( u^i_n\right)_{i=1, \ldots, N}
\end{array}
$$
with values in the target manifold $\M^N$ of dimension $N$. The integer number $M$, called the order of the PB, can be seen as {\it the locality radius}, i.e. the radius of the maximum local interaction between neighboring lattice variables. These PBs have been introduced by B. Dubrovin in~\cite{Dub89} (see also A. Ya. Mal'tsev, \cite{Mal97}), as a discretization of the differential geometric Poisson brackets (DGPBs), defined on the loop space $\L(\M^N) \doteq \left\{ \S^1 \rightarrow \M^N\right\}$ by the formula
\begin{eqnarray}\label{DGPB}
	\left\{ u^i(x), u^j(y)\right\}_M = \sum_{k=0}^M g^{ij}_k(\u(x), \u_x(x), \ldots, \u^{(k)}(x)) \delta^{(M-k)}(x-y), 
\end{eqnarray}
where $i,j = 1, \ldots, N$ and the functions $g^{ij}_k$ are graded-homogeneous polynomials. See for details the papers by S. Novikov and B. Dubrovin~\cite{DubNov83}, \cite{DubNov84}.\\ 
PBs of type \eqref{dDGPB} are associated with lattice Hamiltonian equations of the following form
$$
\dot u^i_n = \left\{ u_n^i, H[\u]\right\}_{M}= \sum \limits_{m\in \Z}   \sum\limits_{p=1}^{N} \left\{ u_n^i, u^p_{m}\right\}_{M}  \dfrac{\strut \delta H[\u]}{\strut \delta u^p_{m}},
$$
where $H[\u] = \sum_{m \in \Z} h(\u_m, \ldots, \u_{m+K})$ for some integer $K\geq0$ and the function $h$ is defined on a finite interval of the lattice. In addition, we define the formal variational derivative as
$$
\frac{\delta H[\u]}{\delta u^p_{n}}  \doteq   \frac{\partial}{\partial u^p_{n}}  \left( 1 + T^{-1} + \ldots+ T^{-K} \right) h(\u_n, \ldots, \u_{n+K}),
$$
where $T$ is the standard shift operator, satisfying
\begin{equation}\label{T}
T^r h(\u_n, \ldots, \u_{n+K}) = h (\u_{n+r}, \ldots, \u_{n+r+K}),
\end{equation}
for any integer $r$. The local Poisson structures of many fundamental integrable systems, such as the Volterra lattices, the Toda lattices, the Bogoyavlensky lattices (see Yu.B. Suris, \cite{Sur}) belong to the class \eqref{dDGPB}. However, the theory of such discrete PBs is much less developed than the corresponding of DGPBs \eqref{DGPB} (see  the survey of O. Mokhov \cite{Mok98} and references therein).\\
A classification of first order ($M=1$) PBs \eqref{dDGPB} has been provided in \cite{Dub89}, whereas it seems that the higher order PBs have not been studied yet. Moreover, to the best of our knowledge, in the literature there are no examples of PBs \eqref{dDGPB} of order $M>2$.\\
\noindent In~\cite{Dub89}, a correspondence between the following first order PBs
\begin{equation}\label{dDGPB1}
\begin{array}{rl}
	\left\{ u^i_n, u^j_{n+1}\right\}_1 =& g_1^{ij}(\u_n, \u_{n+1})\\
	\left\{ u^i_n, u^j_{n}\right\}_1 =& g_0^{ij}(\u_n),
\end{array}
\end{equation} and certain Lie-Poisson groups was discovered. More precisely, if the matrix $g_1^{ij}$ is non-singular (i.e. $\det g_1^{ij} (\u_n, \u_{n+1}) \neq 0$) the PBs \eqref{dDGPB1} are induced by admissible Lie-Poisson group structures on the target manifold $\M^N$ (see Definition \ref{admi} below).\\
Performing a consolidation lattice procedure, which is obtained by defining new varia\-bles of a larger target manifold by the formulas $v^{i+p}_n \doteq u^i_{nM+p}$, $p=0,\ldots, M-1$, one can reduce any PB \eqref{dDGPB} to the form \eqref{dDGPB1}. However, this procedure leaves some unsolved questions:
 {\it
\begin{itemize}
\item[(i)] what are the relations between PBs \eqref{dDGPB} of order $M>1$ and admissible Lie-Poisson groups associated to their consolidations?
\item[(ii)] how to produce examples of such admissible Lie-Poisson groups? 
\end{itemize}
}
\noindent In the present paper, in order to give some partial answers, we classify scalar-valued ($N=1$) PBs \eqref{dDGPB} of any positive order $M$,
\begin{equation}\label{PB-11}
\left\{ u_n, u_{n+k}\right\}_M = g_k(u_n, \ldots, u_{n+k}), \qquad 1\leq k  \leq M.
\end{equation}
\noindent First, we observe that PBs of type \eqref{dDGPB} are invariant under local change of variable
\begin{equation}\label{loc}
u^i_n \longmapsto v^i_n = v^i (\u_n), \qquad i=1, \ldots, N,
\end{equation}
where the coefficients $g^{ij}_k$ transform according to the formula
$$
 g^{ij}_k (\u_n, \ldots, \u_{n+k}) \longmapsto \sum \limits_{p,q=1}^N\frac {\partial v^i}{\partial u^p_n} (\u_n) g^{pq}_k (\u_n, \ldots,  \u_{n+k}) \frac {\partial v^j} {\partial u^q_{n+k}} (\u_{n+k}).
$$
Two local PBs will be therefore considered equivalent if they can be related by a change of coordinates of type \eqref{loc}.\\
\noindent Let us be more precise about the classification of scalar-valued PBs \eqref{PB-11}. The coefficients $g_{k}(u_n, \ldots, u_{n+k})$ satisfy the system of $M^2$ bi-linear PDEs imposed by the Jacobi identity (see Section \ref{Classif}). It turns out that any PB \eqref{PB-11} is characterized by his leading order function $g_M(u_n, \ldots, u_{n+M})$, according to the following results.

\begin{lemma} For any PB of the form \eqref{PB-11}, there exist a set of coordinates {\rm (canonical coordinates)} and an integer $\alpha >0$, such that the leading order reduces to the form
$$
g_{M}(u_{n}, \ldots, u_{n+M})  = f^{\xi} (u_{n+\alpha}, \ldots, u_{n+\alpha+\xi}), \qquad \xi \doteq M-2\alpha,
$$ 
where the function $f^\xi$ is either constant or given by the formula
$$
f^\xi_n \doteq f^\xi (u_{n}, \ldots, u_{n+\xi}) =  \exp \left(  z_{n} \right).
$$
Here,
\begin{equation}\label{z-1}
z_{n} =  \sum_{i=0}^{\xi} \tau_{i} \, u_{n+i},
\end{equation}
and the parameters $\tau_{i}$, $i=0, \ldots, \xi$ satisfy a system of homogeneous polynomial equations (see below Theorem \ref{Thm-2}).
\end{lemma}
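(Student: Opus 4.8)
The plan is to read off from the Jacobi identity the one relation that constrains the leading coefficient $g_M$ in isolation, to normalise it with the transformation law of \eqref{loc}, and only afterwards to bring in the lower coefficients to force linearity. Throughout I regard $g_M=g_M(u_n,\ldots,u_{n+M})$ as the coefficient of $\{u_n,u_{n+M}\}$, nonvanishing on an open set because the bracket has order exactly $M$.

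First I would isolate the \emph{clean} triple. Among all triples $(u_n,u_{n+a},u_{n+b})$ the Jacobi identity collapses to a relation involving $g_M$ alone precisely for $a=M$, $b=2M$: then $\{u_n,u_{n+2M}\}\equiv 0$, and the only surviving brackets, $\{u_n,u_{n+M}\}$ and $\{u_{n+2M},u_{n+M}\}$, are both of order $M$. Expanding by Leibniz, the identity reduces to
\[
\frac{\partial g_M(u_{n+M},\ldots,u_{n+2M})}{\partial u_{n+M}}\,g_M(u_n,\ldots,u_{n+M})=\frac{\partial g_M(u_n,\ldots,u_{n+M})}{\partial u_{n+M}}\,g_M(u_{n+M},\ldots,u_{n+2M}),
\]
i.e. $\dfrac{\partial \log g_M(u_{n+M},\ldots,u_{n+2M})}{\partial u_{n+M}}=\dfrac{\partial \log g_M(u_n,\ldots,u_{n+M})}{\partial u_{n+M}}$ on $\{g_M\neq0\}$. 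The left side depends on $u_{n+M},\ldots,u_{n+2M}$ and the right on $u_n,\ldots,u_{n+M}$, so both equal a function $\lambda$ of the single shared variable $u_{n+M}$; reading this for a general window, the derivative of $\log g_M$ in its first and in its last slot is the \emph{same} function $\lambda$. Hence
\[
\log g_M = \Lambda(u_n)+\Lambda(u_{n+M})+R(u_{n+1},\ldots,u_{n+M-1}),\qquad \Lambda'=\lambda .
\]
Under \eqref{loc} with $N=1$ the leading coefficient transforms as a boundary bi-density, $g_M\mapsto v'(u_n)\,g_M\,v'(u_{n+M})$, so choosing $v'=\exp(-\Lambda)$ cancels both boundary terms at once — possible with a single change of variable exactly because the clean triple delivered the \emph{same} $\Lambda$ at the two ends. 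This leaves $\log g_M=R(u_{n+1},\ldots,u_{n+M-1})$, so $\alpha\geq1$ and the window is stripped symmetrically. Iterating peels one index off each end, and after $\alpha$ steps $g_M$ depends only on the symmetric window $u_{n+\alpha},\ldots,u_{n+M-\alpha}$, giving $\xi=M-2\alpha$.

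It remains to force $R$ to be linear, and this is the hard part, because the clean triple becomes vacuous once the endpoints are gone and no other triple constrains $g_M$ alone: every remaining configuration has a surviving bracket of order $<M$, which drags the lower coefficients into the identity. I would therefore use the near-diagonal relations from the triples $(u_n,u_{n+a},u_{n+M+a})$, $0<a<M$ (whose third term vanishes since $\{u_{n+M+a},u_n\}\equiv0$),
\[
\sum_{i=a}^{M}\frac{\partial g_M(u_{n+a},\ldots,u_{n+M+a})}{\partial u_{n+i}}\,g_i(u_n,\ldots,u_{n+i})=\frac{\partial g_a(u_n,\ldots,u_{n+a})}{\partial u_{n+a}}\,g_M(u_{n+a},\ldots,u_{n+M+a}).
\]
Substituting the reduced form $g_M=\exp R$ and separating variables forces each $\partial R/\partial u_{n+i}$ to be constant; already the toy case $M=2$, $a=1$ gives $R'(u_{n+1})\,e^{R(u_{n+1})}=$ (a function of $u_n$ only), whence $R'\equiv\tau$. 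Thus $R=\sum_i\tau_i u_{n+i}$, so $g_M=\exp(z_n)$ with $z_n$ as in \eqref{z-1} (or $g_M$ is constant, when all $\tau_i$ vanish), and reading these relations for $a$ and for $M-a$ together propagates the boundary equality $\tau_0=\tau_\xi$ inward and identifies the support as the symmetric interval. The residual conditions on the $\tau_i$ are the homogeneous polynomial equations recorded in Theorem \ref{Thm-2}. The genuine obstacle is precisely the entanglement of these near-diagonal relations with the full tower $g_1,\ldots,g_{M-1}$: disentangling them so as to conclude linearity at each level, while keeping the induction on $\alpha$ consistent, is where the bulk of the work lies.
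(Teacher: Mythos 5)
Your first and third steps do follow the paper's own mechanism: $[M,M]$ is indeed the only Jacobi relation involving $g_M$ alone, it forces the logarithmic derivatives of $g_M$ in its first and last slots to be one and the same function $\hat a$ of the single shared variable, and the point transformation \eqref{loc} with $v'=\exp(-\Lambda)$, $\Lambda'=\hat a$, removes both boundary factors; likewise, the constancy of the partial derivatives of $\log f^{\xi}$ is extracted in the paper exactly from your "near-diagonal" relations, taken in the form $[M-\alpha-p,M]$ and $[M,M-\alpha-p]$ and run as a recursion in $p$.

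The genuine gap is the sentence "Iterating peels one index off each end, and after $\alpha$ steps $g_M$ depends only on the symmetric window." After the coordinate change, $g_M$ no longer depends on its first and last arguments, so \emph{both} sides of $[M,M]$ vanish identically: the clean triple becomes vacuous and there is nothing to iterate. At this stage all you know is $g_M=f(u_{n+\alpha},\ldots,u_{n+M-\beta})$ for some $\alpha,\beta\geq 1$, and an asymmetric window (say a putative $M=3$ bracket with $g_3=\exp(u_{n+1})$, i.e. $\alpha=1$, $\beta=2$) passes the clean-triple test; excluding it necessarily involves the lower coefficients. This is precisely the second half of the paper's proof of Lemma \ref{l-1}: separation of variables in $[M,M-\beta]$ produces a nonzero constant $k$ with
$$
g_{M-\beta}(u_{n+M},\ldots,u_{n+2M-\beta})_{,\,u_{n+M}}=k\,g_M(u_{n+M-\beta+\alpha},\ldots,u_{n+2(M-\beta)}),
$$
and this identity can only make sense if the variables entering the right-hand side lie among those of the left-hand side, i.e. $M-\beta+\alpha\geq M$, forcing $\alpha\geq\beta$; the mirror relation $[M-\alpha,M]$ forces $\alpha\leq\beta$, whence $\alpha=\beta$ and $\xi=M-2\alpha$. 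Your closing remark that reading the relations "for $a$ and for $M-a$ together" identifies the symmetric support gestures at this but is not carried out, and the equality $\tau_0=\tau_\xi$ you attach to it is false in general: the boundary parameters are tied only by the homogeneous constraints of Theorem \ref{Thm-2} (e.g. $\tau_0^{\alpha+\xi}+(-)^{\alpha+\xi}\tau_\xi^{\alpha+\xi}=0$ for $(1,\xi)$-brackets, equation \eqref{p-1}), whose solutions generically have $\tau_0\neq\tau_\xi$. Finally, you concede that the linearity of $R$ is verified only in the toy case $M=2$; the paper disentangles the tower $g_1,\ldots,g_{M-1}$ by determining the lower coefficients (formula \eqref{g_k}) \emph{simultaneously} with the exponent of $f^{\xi}$ in the same recursion on $p$, which is the structure your sketch would need to reproduce before the Lemma can be considered proved.
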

\noindent In the case when the leading coefficient $g_M(u_n, \ldots, u_{n+M})$ is constant, it is not difficult to prove that all the other coefficients $\{g_k\}_{k=1, \ldots, M-1}$ are also constant. The general, non-constant case is described by the following  
\begin{thm}\label{Thm-1}
The coefficients $g_k$ of a non-constant PB \eqref{PB-11} are given, in the canoni\-cal coordinates, by suitable linear combinations of the shifted generating function $f^\xi$, according to the following formula
\begin{equation}\label{g_k-1}
g_{\alpha+\xi+p} (u_n, \ldots, u_{n+\alpha+\xi + p}) = \left( \sum_{s= {\rm max} (0, p)}^{{\rm min}(\alpha+p, \alpha)} \lambda^{s}_{p} \, T^{s} \right) f^\xi(u_{n}, \ldots, u_{n+\xi}), 
\end{equation}
where $p=-\alpha, \ldots, \alpha$, and the constants $\lambda$'s can be expressed explicitly in terms of the para\-meters $\tau$'s (see equation \eqref{a-lambda},below). 
\end{thm}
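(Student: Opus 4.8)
The plan is to treat the Jacobi identity as the system of $M^2$ bilinear PDEs it induces on the coefficients $g_k$, and to bootstrap from the leading coefficient, which the Lemma has already pinned down in canonical coordinates as the single exponential $g_M=f^\xi=\exp(z_{n+\alpha})$. Writing Jacobi for an ordered triple $(u_a,u_b,u_c)$, $a<b<c$, as $\sum_{\mathrm{cyc}}\{\{u_a,u_b\},u_c\}=0$ and expanding $\{g_{b-a},u_c\}=\sum_i \partial_{u_i}g_{b-a}\,g_{c-i}$, one gets for each admissible spread $c-a$ a bilinear relation among the $g_k$. I would first grade these relations by the total spread $c-a$: the top spread reproduces the Lemma, and each lower spread couples the already-known higher coefficients to the next unknown one, which drives a downward induction on $k$.

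First I would establish an exponential ansatz: every coefficient $g_k$ with $\xi\le k\le M$ is a finite linear combination of shifted generating functions $\exp(z_{n+s})=T^s f^\xi$, and $g_k$ vanishes for $k<\xi$. For the induction, in the Jacobi relation coupling $g_M$ to $g_k$ every derivative of $g_M$ factors as $\partial_{u_{n+i}}g_M=\tau_{i-\alpha}\exp(z_{n+\alpha})$, so the known exponential factors out and the relation collapses to a first-order linear PDE forcing $g_k$ itself into the span of the $\exp(z_{n+s})$. Linear independence of exponentials of distinct linear forms is what lets me separate the contributions and rule out any other functional dependence.

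With the ansatz $g_{\alpha+\xi+p}=\sum_s \lambda^s_p\,\exp(z_{n+s})$ in hand, I would substitute into all graded Jacobi relations and match the coefficient of each distinct exponent. Expanding the bilinear terms produces products $\exp(z_{n+s_1})\exp(z_{n+s_2})=\exp(z_{n+s_1}+z_{n+s_2})$, and the vanishing of each net coefficient splits into two kinds of conditions: coincidence conditions among the exponents $z_{n+s_1}+z_{n+s_2}$, which are homogeneous polynomial equations in the parameters $\tau_i$ (precisely the system of Theorem~\ref{Thm-2}); and, where no coincidence occurs, a linear recursion for the $\lambda^s_p$. Solving this recursion with the normalization $\lambda^\alpha_\alpha=1$ fixed by $g_M$, and imposing that $T^s f^\xi$ depend only on $u_n,\dots,u_{n+\alpha+\xi+p}$ (whence $0\le s\le\alpha+p$), yields the closed form \eqref{g_k-1} and matches the explicit coefficients of \eqref{a-lambda}.

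The main obstacle is the combinatorial bookkeeping: for every ordered triple, tracking which products $\partial g_{k_1}\,g_{k_2}$ land on the same exponent and therefore cancel, combine, or force a $\tau$-constraint. I would tame this by organizing everything around the grading by spread and by the shift index $s$, so that at each inductive step only a bounded window of coefficients is active. A second subtlety is consistency: after solving the recursion from a chosen subfamily of the relations, one must check that the resulting $g_k$ satisfy all $M^2$ Jacobi relations at once, which I expect to hold automatically thanks to the $\tau$-system. Finally, the sharper support for $p>0$ — the cuts $s\ge p$ and $s\le\alpha$, stricter than the naive domain bounds $0\le s\le\alpha+p$ — I would derive from the reflection $(s,p)\mapsto(\alpha-s,-p)$ induced by lattice inversion together with the antisymmetry of the bracket; this map carries the support of the $p$-family onto that of the $-p$-family, so it transports the domain-sharp bounds of the (easier) $p<0$ case onto the $p>0$ case, giving exactly $\max(0,p)\le s\le\min(\alpha+p,\alpha)$.
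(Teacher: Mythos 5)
Your overall strategy---downward induction through the graded Jacobi identities, factoring the known exponential leading term out of each relation, linear recursions for the $\lambda$'s solved against the normalization $\lambda^\alpha_\alpha=1$, and consistency conditions producing the $\tau$-constraints of Theorem \ref{Thm-2}---is essentially the route the paper takes (it works with the equations $[M-\alpha-p,M]$ and $[M,M-\alpha-p]$, recognizes the recursions \eqref{eq-2}, \eqref{eq-rec} as generalized Fibonacci sequences, and reads off \eqref{a-lambda} via Toeplitz determinants). However, your final step contains a genuine error. You propose to obtain the sharp support bounds $\max(0,p)\le s\le\min(\alpha+p,\alpha)$ for $p>0$ from a ``reflection $(s,p)\mapsto(\alpha-s,-p)$ induced by lattice inversion together with the antisymmetry of the bracket''. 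No such symmetry exists. Setting $\tilde u_m \doteq u_{-m}$ and using antisymmetry gives $\{\tilde u_n,\tilde u_{n+k}\}=-g_k(\tilde u_{n+k},\ldots,\tilde u_n)$: lattice inversion carries each coefficient $g_k$ to (minus) the \emph{same} coefficient with reversed arguments and reversed parameters $\tau_i\mapsto\tau_{\xi-i}$. It preserves the interaction distance $k=\alpha+\xi+p$, hence the family index $p$, and acts on the shift index as $s\mapsto\alpha+p-s$; it never exchanges the $p$-family with the $(-p)$-family, so there is nothing to transport from the easier $p<0$ case. What the stricter bounds for $p>0$ actually encode is that $g_{\alpha+\xi+p}$ depends only on the variables $u_{n+p},\ldots,u_{n+\alpha+\xi}$ (both edges cut by $p$), i.e.\ the sub-leading analogue of Lemma \ref{l-1}; this must be extracted from the Jacobi system itself, by iterating the same logarithmic-derivative and factorization arguments used to prove Lemma \ref{l-1}, which is how the paper proceeds.

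A secondary weak point: you assert that linear independence of exponentials of distinct linear forms ``rules out any other functional dependence'' of $g_k$. The relations coupling $g_M$ to $g_k$ are first-order PDEs, and their general solution contains homogeneous pieces---arbitrary functions of the interior variables, such as the term $h^{\xi-1}(u_{n+\alpha},\ldots,u_{n+\alpha+\xi-1})$ arising in $g_{M-1}$---which lie outside the span of the $T^sf^\xi$ and are not excluded by any independence argument. These admixtures are exactly the leading terms of lower-order brackets; the paper does not eliminate them by linear algebra but discards them by an explicit convention (Remark \ref{R-1}), deferring their analysis to the compatible-pairs section. As written, your argument silently proves more than is true: you need either the paper's convention or a separate argument, and the latter cannot succeed in general since such admixtures produce legitimate pencils rather than Jacobi violations.
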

\noindent This Theorem provides a complete classification of PBs of type \eqref{PB-11}. Note that the functional form of the coefficients $g_k$ is fixed by the choice of a finite number of parameters $\tau$.\\
\noindent In addition to the above results, we prove a Darboux-type theorem for PBs \eqref{PB-11}. This is done by considering the change of variable \eqref{z-1}, that is a generalization of the local one \eqref{loc}. Notice that  the formula \eqref{z-1} can be thought as a discrete analogue of Miura transformations, studied in the Hamiltonian PDEs theory (see \cite{DubZha01}). \\
Splitting all the variables into $\alpha+\xi$ families according to 
\begin{equation}\label{c}
v^{(p)}_n \doteq z_{(\alpha+\xi)(n-1)+p}, \qquad p=1, \ldots, \alpha+\xi,
\end{equation}
by direct computation, we obtain that any non-constant PB \eqref{PB-11} in the $z$-coordinates can be reduced to the following simple form
$$
\begin{array}{lcl}
  \left\{ v^{(p)}_n, v^{(p)}_{n+2}\right\} &= &\tau_0\, \tau_\xi\, \exp (v^{(p)}_{n+1})\\
  \left\{v^{(p)}_{n}, v^{(p)}_{n+1}\right\}& = & \tau_0\, \tau_\xi\,  \left[\exp (v^{(p)}_{n})+  \exp (v^{(p)}_{n+1})\right],
  \end{array} 
 $$
that are $\alpha+\xi$ copies of cubic Volterra PB (see Theorem \ref{Thm-3} below).\\
\noindent We consider next compatible pairs of PBs of type \eqref{PB-11}. Recall that a pair of PBs $(P_1, P_2)$ is said to be compatible  (or to form a {\it pencil of PBs}) if any linear combination with constant coefficients $\mu \, P_1 + \nu P_2$ is also a PB. This notion, first mentioned by F. Magri \cite{Magri79} and extended by I. Gel'fand and I. Dorfman \cite{GelfDorf79} (see also \cite{Dorf88}) provides a fundamental device for the integrability of Hamiltonian equations. In our setting, the study of compatible pair of PBs \eqref{PB-11} might lead to the classification of bi-Hamiltonian lattice equations of type
$$
\dot u_n = F(u_{n-S}, \ldots, u_n, \ldots, u_{n+S})=\left\{ u_n, H_1[u]\right\}_{M_1}= \left\{ u_n, H_2[u]\right\}_{M_2}
$$
for some integer $S \geq 0$ and local PBs of order $M_1$, $M_2$. It seems that higher order lattice equations have not been studied yet except for the Volterra type equations (i.e. $S=1$), analyzed by R.I. Yamilov and collaborators using the master symmetries approach (see review article \cite{Yam06} and references therein). 
\noindent The following result describes some necessary conditions for the classification of pencil of PBs \eqref{PB-11}. We expect these conditions to be also sufficient.

\begin{thm}
Let us consider a pair $(P,P')$ of non-constant PBs of type \eqref{PB-11} and order $M=2 \alpha+\xi$ and $M'= 2 \alpha' + \xi'$ respectively, with $M\geq M'$. Then $P$ and $P'$ form a pencil of PBs only if $\alpha= \alpha'$ and there exist coordinates on the manifold $\M^N$ such that
$$
\begin{array}{rcl}
g_M(u_n, \ldots, u_{n+M}) &=& f^\xi (u_{n+\alpha}, \ldots, u_{n+\alpha+\xi}) = \sigma_M \exp \left(\sum_{i=0}^{\xi} \tau_{i} \, u_{n+i}\right)\\
g_{M'}(u_n, \ldots, u_{n+M'}) &=& {f'}^{\xi'} (u_{n+\alpha}, \ldots, u_{n+\alpha+\xi'})= \sigma_{M'} \exp \left( \sum_{i=0}^{\xi'} \tau'_{i} \, u_{n+i}\right)
\end{array}
$$
where $\sigma_M$, $\sigma_{M'}$ are some constants and $\tau_{p} = \tau'_{p} = \tau_{\xi-\xi'+p},$ for any $p=0,\ldots, \xi'$.
\end{thm}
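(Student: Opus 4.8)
The plan is to reduce the pencil condition to a single bilinear functional equation and then analyse its top lattice-order part. Since antisymmetry and the biderivation (Leibniz) property are preserved under forming $\mu P + \nu P'$, the only obstruction to $\mu P + \nu P'$ being a Poisson bracket is the Jacobi identity, which is quadratic in $(\mu,\nu)$: the $\mu^2$ and $\nu^2$ coefficients vanish because $P$ and $P'$ are themselves Poisson brackets, so $(P,P')$ form a pencil if and only if the symmetric bilinear cross term
\begin{equation*}
\mathcal{J}(P, P') \doteq \sum_{\text{cyc}} \Big[ \sum_m \tfrac{\partial \{u_a, u_b\}_P}{\partial u_m}\, \{u_m, u_c\}_{P'} + (P \leftrightarrow P') \Big] = 0
\end{equation*}
vanishes identically, the cyclic sum running over the three lattice arguments $a,b,c$. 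First I would fix once and for all the canonical coordinates for the bracket $P$ of larger order, so that by the Lemma and Theorem \ref{Thm-1} every coefficient $g_k$ of $P$ is an explicit shifted multiple of $f^\xi = \sigma_M \exp\big(\sum_{i=0}^{\xi} \tau_i\, u_{n+i}\big)$, and I would express $g_{M'}$ together with all lower coefficients of $P'$ in these same coordinates, recording the a priori unknown offset of the support of $g_{M'}$.

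The decisive structural input is that the leading coefficients are exponentials, so their partial derivatives are proportional to themselves: $\partial g_M / \partial u_{n+i} = \tau_i\, g_M$, and similarly for $g_{M'}$. Consequently, in $\mathcal{J}(P, P')$ the highest-order contributions, namely those built from $g_M$ and $g_{M'}$ and their derivatives (these realise the maximal total lattice span), factor as a product of two nonvanishing exponentials times a finite Laurent polynomial in the shift operator $T$ whose coefficients are polynomials in the $\tau$'s and $\tau'$'s. Since the exponential prefactor never vanishes, the top-order part of the pencil condition is equivalent to the vanishing of each coefficient of this shift polynomial, i.e. to a finite algebraic system. The next step is to compare the largest and smallest shifts occurring: the extremal-shift coefficients are single products of the form $\tau_{\text{ext}}\tau'_{\text{ext}}$ and cannot be cancelled, so they can vanish only if the two supports are aligned. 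Recalling $M = 2\alpha+\xi$ and $M' = 2\alpha'+\xi'$, this alignment pins down the offset of $g_{M'}$ in $P$'s coordinates and forces $\alpha = \alpha'$; in particular the canonical coordinates of $P$ are then canonical for $P'$ as well, which is exactly the common coordinate system asserted in the statement.

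With the supports aligned, the remaining interior coefficients of the shift polynomial must all vanish, and expanding these conditions yields, on one hand, the matching relations $\tau'_p = \tau_p$ for $p = 0,\ldots,\xi'$ (the shorter exponent is the initial segment of the longer one) and, on the other hand, the self-consistency $\tau_p = \tau_{\xi-\xi'+p}$ forced by the overlap of the two exponents; these are precisely the relations $\tau_p = \tau'_p = \tau_{\xi-\xi'+p}$, and one checks they are compatible with the polynomial equations of Theorem \ref{Thm-2} satisfied separately by $\tau$ and $\tau'$. The main obstacle is the shift bookkeeping needed to isolate the genuine top-order part: because \emph{every} lower coefficient $g_k$ of both brackets is itself a shifted copy of the same exponential (Theorem \ref{Thm-1}), terms coming from $g_{M-1}$, $g_{M'-1}$, and so on can a priori reach the same extremal lattice shifts, and one must verify, most cleanly through the explicit coefficients in \eqref{g_k-1}, that they do not conspire to cancel the extremal contribution, so that the alignment argument and hence the conclusion $\alpha=\alpha'$ remain valid. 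Once that bookkeeping is under control, the rest reduces to the finite coefficient comparison described above.
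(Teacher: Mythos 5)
Your reduction of the pencil condition to the vanishing of the symmetric cross term of the Jacobi identity is correct, and it is also the paper's starting point. The genuine gap comes immediately after: you fix the canonical coordinates of $P$ and then assert that the derivatives of $g_{M'}$ are proportional to $g_{M'}$ itself (``and similarly for $g_{M'}$''), i.e.\ that in these \emph{same} coordinates the leading coefficient of $P'$ is an exponential of a linear form, with only an unknown support offset left to determine. This is not available at that stage. The classification theorem gives the exponential form of $P'$ only in $P'$'s \emph{own} canonical coordinates, which in general differ from those of $P$ by a nontrivial point transformation. Expressed in $P$'s coordinates, the leading coefficient of $P'$ has only the factored form of Lemma \ref{l-1} and equation \eqref{f'}, namely $g'_{M'} = a'(u_n)\, F'(u_{n+\alpha'},\ldots,u_{n+\alpha'+\xi'})\, a'(u_{n+M'})$ with an unknown, generally non-constant boundary factor $a'$ and with the exponent of $F'$ a sum of possibly nonlinear functions of single variables. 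Its logarithmic derivatives are then not constants, its support is the whole interval $\{n,\ldots,n+M'\}$ rather than a shifted window of length $\xi'$, and your machinery of a shift polynomial with constant coefficients polynomial in the $\tau$'s and $\tau'$'s does not apply. Note that the existence of a single coordinate system in which \emph{both} leading coefficients are exponentials of linear forms is part of the conclusion of the theorem, so it cannot be taken as input; your closing remark that ``the canonical coordinates of $P$ are then canonical for $P'$ as well'' is deduced from the alignment argument, which itself presupposes the exponential form of $g_{M'}$ in those coordinates --- the reasoning is circular at exactly this point.

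The paper closes this gap with Lemma \ref{l123}: keeping both leading coefficients in the general factored forms \eqref{f}, \eqref{f'}, the cross Jacobi equation $[M,M']$ evaluated on the extreme triple $u_n, u_{n+M'}, u_{n+M+M'}$ forces $\left(\log a_M\right)_{,u} = \left(\log a'_{M'}\right)_{,u}$, hence $a_M = k\, a'_{M'}$, so that one local change of variables normalizes both brackets simultaneously; only then do the equations $[M,M'-\alpha]$ and \eqref{eq-4} yield $\alpha\geq\alpha'$ and $\alpha\leq\alpha'$ by comparing which lattice variables each side of the resulting identities can depend on. Once this simultaneous normalization is in place, the remainder of your outline --- matching coefficients of shifted exponentials in the cross equations to get $\tau'_0=\tau_0$, $\tau'_{\xi'}=\tau_\xi$, then the interior relations and finally \eqref{eq-egu} --- is essentially the paper's recursive argument in Theorem \ref{Thm-4}, and your bookkeeping worry about lower coefficients reaching extremal shifts is handled there because the chosen equations $[M',M-\alpha-p]$, $[M-\alpha-p,M']$ have a triangular structure that excludes such interference. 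So the later stages of your plan are sound in spirit, but as written the proof is incomplete without the simultaneous-normalization step.
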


\noindent The structure of the present paper is the following: in Section 2 we describe our classification procedure, proving formula \eqref{g_k-1} for the coefficients $g_k$. In Section 3, using the change of coordinates \eqref{z-1}, suggested by the leading coefficient $g_M$, we prove that any PB of type \eqref{dDGPB} can be reduced to the cubic PB of Volterra lattice (Theorem \ref{Thm-3}). In Section 4, we present some necessary conditions for the compatibility of $(\alpha,\xi)$-brackets (Theorem \ref{Thm-4} and Lemma \ref{L-const}).
Finally, with Section 5, we introduce the concept of admissible Lie-Poisson groups and describe how to produce some new classes of  non-degenerate vector-valued $(N>1)$ first order dDGPBs (Theorem \ref{Thm-5}).

% Classification of (alpha, xi)-brackets

\section{Classification of $(\alpha, \xi)$-brackets}\label{Classif}
This Section is devoted to the classification of the scalar-valued PBs given by the formulas
\begin{equation} \label{PB-1}
\begin{array}{cclcr}
	\left\{ u_n, u_{n+k}\right\}_M &=& g_k(u_n, \ldots, u_{n+k}), & & 1\leq k  \leq M.
\end{array}
\end{equation}
The coefficients $g_{k}(u_n, \ldots, u_{n+k})$ are {\it locally analytic functions} (see R. Yamilov \cite{Yam06} for the precise definition), satisfying, for all values of the independent variables $u_n$, $n \in \Z$, the following bi-linear PDEs given by Jacobi identity
\begin{equation*}\label{J1}\tag*{[p, q]}
\left\{\left\{ u_n, u_{n+p}\right\}, u_{n+p+q}\right\}= \left\{u_n, \left\{u_{n+p}, u_{n+p+q}\right\}\right\} + \left\{\left\{ u_n,u_{n+p+q}\right\}, u_{n+p}\right\},
\end{equation*}
explicitly,
$$
\begin{array}{c}
			   
			      \sum_{i=0}^ p g_p (u_n, \ldots, u_{n+p})_{, u_{n+p-i}}  g_{q+i} (u_{n+p-i}, \ldots, u_{n+p+q})+\\ 
			     - \sum_{i=0}^q  g_q (u_{n+p}, \ldots, u_{n+p+q})_{, u_{n+p+i}} g_{p+i} (u_{n}, \ldots, u_{n+p+i}) \\
			      \shortparallel\\
			      \sum_{i=0}^{p} g_{p+q} (u_n, \ldots, u_{n+p+q})_{, u_{n+i}} g_{p-i} (u_{n+i}, \ldots, u_{n+p})+ \\
			     - \sum_{i=0}^{q} g_{p+q} (u_n, \ldots, u_{n+p+q})_{, u_{n+p+i}} g_{i} (u_{n+p}, \ldots, u_{n+p+i})             
\end{array}                                
$$
where $p, q = 1, \ldots, M$ and $g_0 (\cdot) \equiv 0$, $g_{k} (\cdot) \equiv 0$, $k>M$.

\begin{example}{[{\it Volterra lattice or discrete KdV equation}]}\\
\noindent The well-known Volterra lattice (VL) \cite{Volterra} is defined by the following equations of motion
$$
\dot u_n = u_n \left( u_{n+1} - u_{n-1}\right) \qquad n \in \Z.
$$
Performing the local change of variable $u_n \longmapsto \log\; u_n $, we obtain equations
\begin{equation}\label{VL-eq}
\dot u_n = \exp (u_{n+1}) - \exp (u_{n-1}),
\end{equation}
that admit the following bi-hamiltonian representation (see~\cite{FadTak86})
$$
\dot u_n = \left\{ u_n, H_1[\u]\right\}_1 =  \left\{ u_n, H_2[\u]\right\}_2, 
$$
where
\begin{equation}\label{bi-VL}
\begin{array}{lrcl}
 H_1(u) = \sum_{k} \exp(u_k)                                                             & \left\{ u_n, u_{n+1}\right\}_1 &=&1,\\
\multirow{2}{*}{$H_2 (u) = \frac{1}{2} \sum_{k} u_k$}     &  \left\{ u_n, u_{n+2}\right\}_2 &= & \exp(u_{n+1}) \\
                                                                                                               &  \left\{ u_n, u_{n+1}\right\}_2 &= & \exp(u_{n}) + \exp(u_{n+1}).
 \end{array}
\end{equation}
\noindent These PBs belong to the class \eqref{PB-1}. Moreover, the coefficients of the cubic PB $\left\{ \, \cdot \,, \, \cdot \, \right\}_2$ are characterized by suitable linear combination of the leading order term, given by the exponential functions
$
f^0_n \doteq f^0 (u_n) = \exp(u_n).
$ 
This will be the typical behavior of non-constant PBs \eqref{PB-1}.
\end{example}

\subsection{The leading-order coefficient.}
By the definition of locality radius, the leading-order term $g_M$ might depend on variables $u_n, \ldots, u_{n+M}$, i.e.
$
g_M =g_M(u_n, \ldots, u_{n+M}).
$

%lemma a
\begin{lemma}\label{l-1} 
There exist {\rm canonical coordinates} and an integer $\alpha >0$ such that the leading term $g_M$ reduce to the form
$$ 
g_M(u_n, \ldots, u_{n+M})= f^\xi_{n+\alpha} \doteq f^\xi(u_{n+\alpha}, \ldots,  u_{n+\alpha+\xi}), \qquad  \xi \doteq M-2\alpha,
$$
where
\begin{equation}\label{ast}
\frac{\partial f^\xi_{n}}{\partial u_{n}} \,  \frac{\partial f^\xi_{n}}{\partial u_{n+\xi}} \neq 0
\end{equation}
if $f^\xi$ is a non-constant function.
\end{lemma}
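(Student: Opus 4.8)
The plan is to read off the structure of $g_M$ from the ``top'' Jacobi identities, namely those relations $[p,q]$ with $p+q>M$, for which $g_{p+q}\equiv 0$ and the whole right-hand side vanishes. Setting $p=M$, only the extreme terms survive, and dividing by $g_M$ (which is not identically zero, $M$ being the locality radius) one is left with
\[
\sum_{i=0}^{M-q}(\log g_M)_{,u_{n+M-i}}(u_n,\ldots,u_{n+M})\,g_{q+i}(u_{n+M-i},\ldots,u_{n+M+q})=(g_q)_{,u_{n+M}}(u_{n+M},\ldots,u_{n+M+q}),
\]
whose right-hand side depends only on $u_{n+M},\ldots,u_{n+M+q}$. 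The case $q=M$ (identity $[M,M]$) is the cleanest: it gives $(\log g_M)_{,u_{n+M}}(u_n,\ldots,u_{n+M})=(\log g_M)_{,u_{n+M}}(u_{n+M},\ldots,u_{n+2M})$, where on the left $u_{n+M}$ is the last slot and on the right the first slot. Since the two members share only the variable $u_{n+M}$, they must each equal one and the same single-variable function $\phi$; hence $g_M=e^{\Phi(u_n)}\,H(u_{n+1},\ldots,u_{n+M-1})\,e^{\Phi(u_{n+M})}$ with $\Phi'=\phi$.

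First I would use this factorization to secure the margin. The boundary factor is the \emph{same} function at both ends, so a single local change of variable \eqref{loc} with $v'=e^{-\Phi}$ multiplies $g_M$ by $v'(u_n)v'(u_{n+M})$ and cancels it exactly; in the resulting canonical coordinates $g_M$ no longer depends on $u_n$ nor on $u_{n+M}$, which already forces $\alpha\ge 1$. (If $g_M$ is constant there is nothing to do and the trivial form holds; this is the excluded constant case.)

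Next I would propagate the same mechanism inward by downward induction on $q$. Feeding into the displayed $[M,q]$ relation the information already obtained for the slots $q+1,\ldots,M$, the requirement that the left-hand side be independent of $u_n,\ldots,u_{n+q-1}$ isolates the single term carrying $(\log g_M)_{,u_{n+q}}$ and shows that this derivative depends only on $u_{n+q},\ldots,u_{n+M}$. Applying the reflection $n\mapsto -n$, which preserves the class \eqref{PB-1} (it sends a bracket of order $M$ to another such bracket, reversing the arguments and changing the sign of each $g_k$) and hence all the identities, gives the mirror statement that $(\log g_M)_{,u_{n+M-q}}$ depends only on $u_n,\ldots,u_{n+M-q}$. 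Intersecting the two families of constraints slot by slot shows that each $(\log g_M)_{,u_{n+s}}$ depends on $u_{n+s}$ alone, so $g_M$ splits as a product $\prod_{s}e^{\Phi_s(u_{n+s})}$; the pairing $s\leftrightarrow M-s$, together with $\Phi_0=\Phi_M$ already found, matches the two ends, and the first index $\alpha$ at which a non-constant factor survives yields a window symmetric about its centre, $\xi=M-2\alpha$, with genuine dependence on the two endpoints $u_{n+\alpha}$ and $u_{n+\alpha+\xi}$, which is exactly \eqref{ast}.

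The main obstacle I expect is precisely this symmetrization: turning the one-sided information coming from $[M,q]$ and its reflection into the statement that the surviving support is a \emph{symmetric interval}, i.e. that a non-constant factor at slot $s$ forces one at slot $M-s$ and that no interior gaps occur. This requires the full content of the $[M,q]$ relations (independence with respect to \emph{all} of the left-block variables, not merely the extreme one) and the first-order ODEs linking each interior log-derivative to $\phi$ that they generate, together with care over the degenerate subcases in which some $g_k$ vanish. By contrast the outermost step $[M,M]$, which simultaneously supplies the gaugeable boundary factor and the equality of the two end-functions, is clean and essentially immediate.
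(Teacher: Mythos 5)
Your first half coincides with the paper's proof and is correct: the identity $[M,M]$ forces $(\log g_M)_{,u_{n+M}}(u_n,\ldots,u_{n+M})=(\log g_M)_{,u_{n+M}}(u_{n+M},\ldots,u_{n+2M})$, the two sides share only $u_{n+M}$, hence both equal a single one-variable function, giving $g_M=e^{\Phi(u_n)}H(u_{n+1},\ldots,u_{n+M-1})e^{\Phi(u_{n+M})}$, and the gauge \eqref{loc} with $v'=e^{-\Phi}$ removes the boundary factors. Your downward induction is also salvageable in the one-sided form you state: assuming $(\log g_M)_{,u_{n+s}}$ depends only on $u_{n+s},\ldots,u_{n+M}$ for $s>q$, every term of $[M,q]$ other than the one carrying $(\log g_M)_{,u_{n+q}}$ is independent of $u_n,\ldots,u_{n+q-1}$ (the unknown coefficients $g_{q+i}$ enter only through $u_{n+M-i},\ldots,u_{n+M+q}$), so differentiating with respect to those variables does isolate that term; combined with the reflection this yields the splitting $\log g_M=\sum_s\Phi_s(u_{n+s})$.

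However, the step you explicitly defer --- turning this into the statement that the window of genuine dependence is a \emph{symmetric} interval --- is a genuine gap, not a technicality: it is the actual content of the lemma (the existence of a single $\alpha$ with $\xi=M-2\alpha$). The product structure together with $\Phi_0=\Phi_M$ cannot deliver it: nothing in that structure distinguishes a putative leading term $\exp(u_{n+1}+u_{n+2})$ with $M=4$ (window $[1,2]$, not centered) from an admissible one, so the symmetry must come from Jacobi identities beyond those you have exploited, and no ``pairing $s\leftrightarrow M-s$'' inside the already-derived constraints will produce it. The paper closes this with a short, specific argument: writing $g_M=f(u_{n+\alpha},\ldots,u_{n+M-\beta})$ in the gauged coordinates, take $q=M-\beta$ in your displayed relation; since $(\log g_M)_{,u_{n+M-i}}=0$ for $i<\beta$, only the term $i=\beta$ survives, giving $(\log f)_{,u_{n+M-\beta}}=(g_{M-\beta})_{,u_{n+M}}(u_{n+M},\ldots,u_{n+2M-\beta})\cdot f(u_{n+M-\beta+\alpha},\ldots,u_{n+2(M-\beta)})^{-1}$. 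The two sides depend on disjoint groups of variables, hence both equal a constant $k\neq0$ (nonzero by \eqref{ast}), and the resulting identity $(g_{M-\beta})_{,u_{n+M}}=k\,f(u_{n+M-\beta+\alpha},\ldots,u_{n+2(M-\beta)})$ is consistent only if the first argument of $f$, on which $f$ genuinely depends, lies among $u_{n+M},\ldots,u_{n+2M-\beta}$, i.e. $\alpha\geq\beta$; the mirror identity $[M-\alpha,M]$ gives $\alpha\leq\beta$, whence $\alpha=\beta$. This support-matching step (or an equivalent of it) is exactly what your proposal lacks, and without it the lemma is not proved.
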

\vspace{-0.7cm}
\begin{proof}
Let us consider the bi-linear PDEs $\ref{J1}$, with $p,q =1, \ldots, M$. At first, we focus our attention on equation 
\begin{equation*}\tag*{$[M, M]$}
\left\{\left\{ u_n, u_{n+M}\right\}, u_{n+2M} \right\} = \left\{u_n, \left\{u_{n+M}, u_{n+2M} \right\}\right\}
\end{equation*}
that provides us
$$
\log\, g_M(u_n, \ldots, u_{n+M})_{, u_{n+M}} = \log\, g_M(u_{n+M}, \ldots, u_{n+2M})_{, u_{n+M}} = \hat a (u_{n+M}) 
$$
for some arbitrary function $\hat a(u_{n+M})$. Solving this logarithmic equation, we obtain the following factorization
$$
g_M(u_n, \ldots, u_{n+M}) = a(u_{n}) f^{M-\alpha-\beta}(u_{n+\alpha}, \ldots, u_{n+M-\beta}) a(u_{n+M}) 
$$
where $\log \, a(u_n)_{, \, u_n} =  \hat a(u_n)$ and $f_{n+\alpha}^{M-\alpha-\beta} =  f^{M-\alpha-\beta}(u_{n+\alpha}, \ldots, u_{n+M-\beta})$ is a constant function or such that
$$
 \frac{\partial f_{n+\alpha}^{M-\alpha-\beta}}{\partial u_{n+\alpha}} \, \frac{\partial  f_{n+\alpha}^{M-\alpha-\beta}}{\partial u_{n+M-\beta}} \neq 0,
$$ for some integers $\alpha, \beta \geq 1$.\\
Performing a local change of the variables $u_n\longmapsto \tilde u_n =  \varphi (u_n)$, we can reduce to 
$$
g_M(\tilde u_n, \ldots, \tilde u_{n+M})  =   f^{M-\alpha-\beta}(\tilde u_{n+\alpha}, \ldots, \tilde u_{n+M-\beta}).
$$

\noindent Finally, we prove $\alpha=\beta$. Indeed, on the one hand, from equation
\begin{equation*}\tag*{$[M, M-\beta]$}
\left\{\left\{ u_n, u_{n+M}\right\}, u_{n+2M- \beta} \right\} = \left\{u_n, \left\{u_{n+M}, u_{n+2M-\beta} \right\}\right\}
\end{equation*}
after some elementary computations, we have
$$
\begin{array}{c}
\log\, g_M(u_{n+\alpha}, \ldots, u_{n+M- \beta})_{, u_{n+M-\beta}} \\
\shortparallel\\
g_{M-\beta}(u_{n+M}, \ldots, u_{n+2M-\beta})_{, u_{n+M}} g_M(u_{n+M-\beta+\alpha}, \ldots, u_{n+ 2(M-\beta)})^{-1}
\end{array}
$$
where the variables appearing on the left-hand side do not intersect with those appearing on the right-hand side. Then, there exists a non-zero constant $k$, such that 
$$
g_{M-\beta}(u_{n+M}, \ldots, u_{n+2M-\beta})_{, u_{n+M}} = k \, g_M(u_{n+M-\beta+\alpha}, \ldots, u_{n+ 2(M-\beta)})
$$
and this PDE makes sense only if  $\alpha \geq \beta$.\\
On the other hand, repeating the same argumentations for equation
\begin{equation*}\tag*{$[M-\alpha, M]$}
\left\{\left\{ u_{n}, u_{n+M-\alpha}\right\}, u_{n+2M-\alpha} \right\} = \left\{u_{n}, \left\{u_{n+M-\alpha}, u_{n+2M-\alpha} \right\}\right\}
\end{equation*}
that is
$$
\begin{array}{c}
g_{M-\alpha} (u_{n}, \ldots, u_{n+M-\alpha})_{, u_{n+M-\alpha}} g_M(u_{n+\alpha}, \ldots, u_{n+ M-\beta})^{-1}\\
\shortparallel\\
 \log\, g_{M}(u_{n+M}, \ldots, u_{n+2M-\beta-\alpha})_{, u_{n+M}} 
\end{array}
$$
we arrive at $\alpha \leq \beta$, that implies $\alpha = \beta$.
\end{proof} 
\noindent For any fixed order $M$, the leading order functions, given by Lemma~\ref{l-1}, define essentially different classes of PBs \eqref{PB-1}.

% ALPHA-XI brackets
\begin{defin}\label{alphaPBs}
Let $(\alpha, \xi)$ be a pair of non-negative integers with $\alpha\geq1$, we call $\left( \alpha, \xi \right)$-brackets the class of non-constant PBs \eqref{PB-1} of order $M= 2 \alpha+ \xi$ expressed in the canonical coordinates, i.e. $g_M(u_n, \ldots, u_{n+M})= f^\xi(u_{n+\alpha}, \ldots,  u_{n+\alpha+\xi})$.
\end{defin}
\noindent For the constant case $f\equiv\sigma_M$ one can immediately prove the following

\begin{prop}\label{const-P}
If there exists a set of coordinates reducing the leading order term to the constant form $g_M(u_n, \ldots, u_{n+M})= \sigma_M$, for some non-zero constant $\sigma_M$,  then all the coefficients $g_k$, $k=1,\ldots, M$ are constant in such coordinates, i.e.
$$
\left\{ u_n, u_{n+k}\right\}_M = g_k (u_n, \ldots, u_{n+k}) = \sigma_k, \qquad k=1, \ldots, M,
$$
where $\sigma_k$ are complex constants.
\end{prop}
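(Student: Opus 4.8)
The plan is to extract everything from the Jacobi identities, exploiting the single structural fact that their right-hand side vanishes whenever $p+q\ge M$: in that range $g_{p+q}$ is either identically zero (if $p+q>M$) or the constant $\sigma_M$ (if $p+q=M$), so all of its partial derivatives drop out. I would first record the two ``extreme'' consequences. Writing out $[k,M]$, every term of the right-hand side and of the $g_M$-derivative block disappears, leaving only $\sigma_M\,\partial g_k/\partial u_{n+k}=0$; symmetrically $[M,k]$ leaves $\sigma_M\,\partial g_k/\partial u_{n+M}=0$. Since $\sigma_M\ne 0$, this shows that every $g_k$ is independent of its first and last arguments. In particular $g_1$, which has only the arguments $u_n,u_{n+1}$, is already constant, which anchors the induction.

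Next I would remove the interior variables by a downward induction on $k$ (the base case $k=M$ being the hypothesis). Assuming $g_{k+1},\dots,g_M$ constant, to reach the argument $u_{n+m}$ with $1\le m\le k-1$ I would use the identity $[k,\,M-k+m]$; since its two orders sum to $M+m>M$, the right-hand side is zero. After discarding the terms killed by the inductive hypothesis (constancy of the higher coefficients), by the already established independence of each coefficient of its first argument, and by the extreme-variable vanishing above, the identity collapses to a relation of the shape
\[
\sigma_M\,\frac{\partial g_k}{\partial u_{n+m}}
=\sum_{i\ge 1}\sigma_{k+i}\,\frac{\partial g_{M-k+m}}{\partial u_{n+k+i}}
+(\text{constant terms}).
\]
The crucial point is that the left-hand side is a function of the arguments $u_{n+1},\dots,u_{n+k-1}$ alone, whereas the right-hand side depends only on the disjoint block $u_{n+k+1},\dots$; hence both sides are constant, so $\partial g_k/\partial u_{n+m}$ is a constant. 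Running this over all $m$ forces every second derivative of $g_k$ to vanish, i.e. $g_k$ is an \emph{affine} function of $u_n,\dots,u_{n+k}$.

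Finally I would upgrade \emph{affine} to \emph{constant}. Substituting the affine forms into a Jacobi identity that retains a genuine product of a coefficient with a derivative (for instance $[1,M-1]$, whose right-hand side vanishes), the product of two affine factors produces, after matching the coefficient of a linear monomial in $u$, a relation that is quadratic in the unknown slopes; this is exactly the nonlinearity one sees in the Volterra example, and it forces the slopes to vanish. Hence each $g_k$ is constant, which is the assertion.

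I expect the main obstacle to be the interior step: the separation-of-variables argument is clean only when the coupling coefficient $g_{M-k+m}$ and the multipliers $g_{k+i}$ are already known to be constant, which is delicate for the ``deep middle'' arguments whose companion coefficients have order close to $k$ and so are not yet covered by a naive downward induction. Organising the induction so that every coefficient appearing as a multiplier has already been reduced --- most plausibly by peeling the arguments inward from both ends and interleaving the ``affine $\Rightarrow$ constant'' step --- is where the genuine bookkeeping lies; by comparison the extreme-variable reduction and the final quadratic-term argument are routine.
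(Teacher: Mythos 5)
Your opening step is correct and is the natural one: for $p+q\ge M$ the right-hand side of $[p,q]$ vanishes (there $g_{p+q}$ is either $0$ or the constant $\sigma_M$), and the two identities $[k,M]$, $[M,k]$ reduce to $\sigma_M\,\partial g_k/\partial u_{n+k}=0$ and $\sigma_M\,\partial g_k/\partial u_{n+M}=0$, so every coefficient loses its extreme arguments and $g_1$ is constant. (For what it is worth, the paper itself gives no proof of this Proposition --- it is stated as ``immediate'' --- so your attempt is being judged on its own.) The genuine gap is in your interior step. Written out, the surviving part of $[k,M-k+m]$ is
\begin{equation*}
\sum_{i=1}^{k-m}\frac{\partial g_k}{\partial u_{n+k-i}}\;g_{M-k+m+i}\bigl(u_{n+k-i},\ldots,u_{n+M+m}\bigr)
=\sum_{i=1}^{M-k}\frac{\partial g_{M-k+m}}{\partial u_{n+k+i}}\;g_{k+i}\bigl(u_{n},\ldots,u_{n+k+i}\bigr),
\end{equation*}
and under your downward induction only the multipliers $g_{M-k+m+i}$ with $M-k+m+i>k$, i.e.\ $i>2k-M-m$, are known constants. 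The terms with $1\le i\le 2k-M-m$ carry the \emph{unknown} coefficients of order $\le k$, and for $i\ge 2$ such a factor depends on $u_{n+k-i+1},\ldots,u_{n+M+m-1}$, i.e.\ on variables of \emph{both} blocks. So for $m\le 2k-M-2$ (the deep middle) the identity is bilinear in unknown functions with straddling arguments: your displayed relation with ``constant terms'' is not what the identity says, and the separation-of-variables conclusion does not follow. Worse, this happens at the very first stage of the induction: for $k=M-1$ the deep-middle range is $3\le m\le M-4$, nonempty as soon as $M\ge 7$, so affineness of $g_{M-1}$ cannot be established before anything is known about the lower coefficients. Consequently the two-phase plan (``affine everywhere first, then kill the slopes at the end via $[1,M-1]$'') cannot be executed in that order.

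What actually closes the argument is the quadratic mechanism you postpone to the very end, used \emph{interleaved} with the separation step. A concrete illustration with $M=5$: $[4,4]$ separates and gives $\partial g_4/\partial u_{n+1}=\partial g_4/\partial u_{n+3}=c$; then $[4,3]$ contains the product term $c\,g_4(u_{n+3},\ldots,u_{n+7})$, and one must first differentiate that identity in $u_{n+4}$ to get $c^2=0$, hence $c=0$; only then does $[4,3]$ separate and yield $\partial g_4/\partial u_{n+2}=c'$ constant; finally $[4,2]$ produces $(c')^2=0$ and simultaneously forces $g_2'=0$. This loop --- separate, differentiate a bilinear term to produce the square of a slope, conclude the slope vanishes, separate again --- run coefficient-by-coefficient and variable-by-variable, is the actual content of the proof. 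It is a missing idea rather than bookkeeping: your final paragraph correctly locates where the trouble sits, but the proposal as written supplies no mechanism to get past it, and the step it rests on fails precisely in the cases that matter.
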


\subsection{Classification theorem} 
\begin{thm}\label{Thm}
For any PB \eqref{PB-1} of order $M$, there exist a set of coordinates and an integer $\alpha\geq1$, such that the coefficients $g_{k}(u_n, \ldots, u_{n+k})$, $k=1,\ldots, M$ are given by linear combination of the suitably shifted  function $f^\xi$ (see Lemma \ref{l-1})
\begin{equation}\label{g^alpha_M}
f^\xi (u_{n}, \ldots, u_{n+\xi}) = \exp \left( \sum_{i=0}^{\xi} \tau_{i} \, u_{n+i}\right),
\end{equation}
where $\left\{\tau_{i}\right\}_{i=0, \ldots, \xi}$ are constrained complex parameters. Explicitly, we have
\begin{equation}\label{g_k}
g_{\alpha+\xi + p} (u_n, \ldots, u_{n+\alpha+\xi + p}) = \left( \sum_{s= {\rm max} (0, p)}^{{\rm min}(\alpha+p, \alpha)} \lambda^{s}_{p} \, T^{s} \right) f^\xi(u_{n}, \ldots, u_{n+\xi}), 
\end{equation}
where $\xi \doteq M-2\alpha$, $p=-\alpha, \ldots, \alpha$, $T$ is the shift operator \eqref{T} and $\lambda^{s}_{s-r}$ are scalars such that
\begin{itemize}
%\item[-] are non-zero only if $p = 0, \ldots, \alpha$ and $k=\xi, \ldots, M$,
\item[(i)] they satisfy the multiplication rule
\begin{equation}\label{multi} 
 \lambda^{s}_{s-r} =  \lambda^{\alpha}_{\alpha-r} \,  \left( \lambda^{\alpha}_0 \right)^{-1} \,   \lambda^{\alpha}_{\alpha-s}\qquad s,r=0, \ldots, \alpha,
 \end{equation}
\item[(ii)] denoting $\theta \doteq {\rm min} (\alpha, \xi)$, they can be expressed by explicit formulas in terms of the $\theta+1$ parameters $\tau_0, \tau_1, \ldots, \tau_{\theta-1}; \tau_{\xi}$
\begin{equation}\label{a-lambda}
\begin{array}{lcl}
 \lambda^{\alpha}_{\alpha-r} &=& \det \A_r \left( \{(-)^s \tau_0^{-1} \, \tau_{s}\}_{s\geq0}\right) \quad r=0, \ldots, \alpha-1\\
 \lambda^{\alpha}_{0} &=& \tau_0 \, (\tau_{\xi})^{-1}  
 \end{array}
\end{equation}
where $\A_r$ is the band-Toeplitz matrix
\begin{equation}\label{T1}
\A_{r} \left( {\bf a} \right) = \left( 
\begin{array}{ccccc}
a_1 & a_0 & 0 & \ldots & 0\\
a_2 & a_{1} &a_0 & \ddots &\vdots\\
\vdots& \vdots&\vdots&\ddots&0\\
a_{r-1}&   \ldots &\ldots& a_{1}& a_{0}\\
a_{ r}& \ldots & \ldots &a_{2}& a_{1}
\end{array}
\right)
\end{equation}
associated to the sequence ${\bf a} = (a_0, a_1, \ldots )$.
 \end{itemize}
\end{thm}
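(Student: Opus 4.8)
The plan is to exploit the Jacobi identities $[p,q]$ in a systematic, inductive fashion, using the already-established structure of the leading coefficient $g_M = f^\xi_{n+\alpha}$ (Lemma \ref{l-1}) as the seed from which all lower coefficients are reconstructed. Writing $\xi = M - 2\alpha$, the target formula \eqref{g_k} asserts that each $g_{\alpha+\xi+p}$ is a finite linear combination $\sum_s \lambda^s_p \, T^s f^\xi$ of shifts of the single function $f^\xi$. The first step is to determine the functional form of $f^\xi$ itself. From the logarithmic equation arising in the proof of Lemma \ref{l-1}, one reads off that $\log f^\xi$ has mixed partials that decouple variable-by-variable, forcing $\log f^\xi$ to be affine in $u_n,\dots,u_{n+\xi}$; together with the non-degeneracy \eqref{ast}, which guarantees the end-coefficients $\tau_0,\tau_\xi$ are nonzero, this yields the exponential ansatz \eqref{g^alpha_M}.

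Next I would descend from the top order to the bottom. Having fixed $g_M$, I would feed the identities $[M,M-j]$ and $[M-j,M]$ for increasing $j$ into the scheme: each such identity is bilinear and relates $g_{M-j}$ to products of $g_M$ with derivatives of already-known coefficients. Because $f^\xi$ is exponential, its logarithmic derivative $\partial_{u_{n+i}} \log f^\xi = \tau_i$ is constant, so these PDEs collapse into \emph{algebraic} relations among exponentials of the form $\exp(z_m)$. Matching the supports (the sets of lattice variables actually appearing) on the two sides of each identity forces $g_{\alpha+\xi+p}$ to be supported exactly on $u_n,\dots,u_{n+\alpha+\xi+p}$ and to be a combination of the shifts $T^s f^\xi$ with $s$ ranging over $\max(0,p) \le s \le \min(\alpha+p,\alpha)$ — precisely the summation range in \eqref{g_k}. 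This support-matching is the structural heart of the argument and delivers the \emph{shape} of the answer; the scalars $\lambda^s_p$ enter as the yet-undetermined coefficients of each surviving shift.

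It then remains to pin down the $\lambda$'s. Substituting the linear-combination ansatz back into a generic identity $[p,q]$ and cancelling the common exponential factors produces quadratic relations among the $\lambda^s_p$. I would first extract the multiplicative constraint \eqref{multi}, $\lambda^s_{s-r} = \lambda^\alpha_{\alpha-r}(\lambda^\alpha_0)^{-1}\lambda^\alpha_{\alpha-s}$, by comparing the coefficient of a single monomial across two identities that share it; this separability says the $\lambda$-matrix is rank-one up to the diagonal normalization $\lambda^\alpha_0$, so that everything is governed by the boundary row $\{\lambda^\alpha_{\alpha-r}\}$. The explicit evaluation \eqref{a-lambda} of that boundary row as band-Toeplitz determinants \eqref{T1} in the normalized sequence $\{(-)^s\tau_0^{-1}\tau_s\}$ comes from solving the resulting linear recursion in $r$: the recursion is exactly the one whose solution is given by the Trudi/Jacobi--Trudi-type determinantal formula, and the identification $\lambda^\alpha_0 = \tau_0\tau_\xi^{-1}$ fixes the normalization at the far end.

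The main obstacle I anticipate is bookkeeping rather than conceptual: the support-matching step requires carefully tracking, for each pair $(p,q)$, which shifted copies $T^s f^\xi$ can appear and showing that all cross-terms outside the admissible range $[\max(0,p),\min(\alpha+p,\alpha)]$ must cancel identically. One must verify that the system of Jacobi identities is consistent — i.e. that the $\lambda$'s forced by different identities agree — and that the determinantal formulas genuinely satisfy \emph{every} remaining identity, not merely the ones used to derive them; this consistency check is where the polynomial constraints on the $\tau$'s (promised in Theorem \ref{Thm-2}) are generated, and it is the most delicate part of the verification.
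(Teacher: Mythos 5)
Your overall route coincides with the paper's: descend from $g_M$ through the identities $[M-j,M]$, $[M,M-j]$, match supports to fix the shape of each $g_{\alpha+\xi+p}$, extract a rank-one multiplication rule for the $\lambda$'s, and solve the resulting linear recursions by band-Toeplitz (generalized Fibonacci) determinants, with the leftover consistency conditions becoming the $\tau$-constraints of Theorem \ref{Thm-2}. Those later stages are sound and are exactly what the paper does. However, there is a genuine gap at your first step, and it propagates. You claim the exponential form of $f^\xi$ can be read off from the logarithmic equation in the proof of Lemma \ref{l-1}, because the mixed partials of $\log f^\xi$ ``decouple variable-by-variable''. That equation, which comes from the single identity $[M,M]$, is
$$
\partial_{u_{n+M}}\log g_M(u_n,\ldots,u_{n+M}) \;=\; \partial_{u_{n+M}}\log g_M(u_{n+M},\ldots,u_{n+2M}) \;=\; \hat a(u_{n+M}),
$$
and it constrains only the dependence on the two extreme variables $u_n$, $u_{n+M}$: it yields the factorization $g_M = a(u_n)\, f^\xi(u_{n+\alpha},\ldots,u_{n+\alpha+\xi})\, a(u_{n+M})$ and nothing more. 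The interior function $f^\xi$ remains completely arbitrary, subject only to the non-degeneracy \eqref{ast}; this is precisely why Lemma \ref{l-1} asserts only the factorized form and not the exponential one. Consequently your second stage is circular as written: you justify collapsing the descent identities into algebraic relations ``because $f^\xi$ is exponential, so $\partial_{u_{n+i}}\log f^\xi=\tau_i$ is constant'', but the constancy of the $\tau_i$ is exactly what those identities have to prove.

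The repair is the paper's actual argument, and it stays inside your framework: for each $p=0,1,\ldots$ the pair of identities $[M-\alpha-p,M]$ and $[M,M-\alpha-p]$ is analyzed by separation of variables. For instance, $[M-\alpha,M]$ reduces to
$$
g_{M-\alpha}(u_n,\ldots,u_{n+M-\alpha})_{,\,u_{n+M-\alpha}} \;=\; \bigl(\log f^\xi_{n+M}\bigr)_{,\,u_{n+M}}\, f^\xi_{n+\alpha},
$$
where the factor $\bigl(\log f^\xi_{n+M}\bigr)_{,\,u_{n+M}}$ depends on variables disjoint from everything else in the equation, hence must be a constant $\tau_0$; the symmetric identity gives $\tau_\xi$, the next pair gives $\tau_1,\tau_{\xi-1}$, and so on. Thus the exponential ansatz is an \emph{output} of the descent, produced hand in hand with the formulas for the lower coefficients, not an input to it. A secondary omission: at each step the identities determine $g_{M-p}$ only up to an additive arbitrary function of interior variables (the paper's $h^{\xi-1}$ at $p=1$); your support-matching does not eliminate such terms, and the paper must argue separately (Remark \ref{R-1}) that they are leading coefficients of lower-order brackets and defer them to the analysis of compatible pairs in Section \ref{compa}.
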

\noindent Here and below we adopt the notation
$ \tau_{p} \equiv 0$ if $ p<0$ or $p>\xi$.

\begin{remark}
The scalars appearing on formula \eqref{g_k} can be easily visualized looking at the rows of the following rhombus
\begin{equation}\label{rhombi}
\begin{array}{ll ll lll lll}
g_{(\alpha+\xi)+\alpha}&\longrightarrow \qquad&&&&&\lambda^{\alpha}_{\alpha} \\																					
g_{(\alpha+\xi)+\alpha-1}&\longrightarrow \qquad&&&&\lambda^{\alpha-1}_{\alpha-1}& \lambda^{\alpha}_{\alpha-1}\\
%g_{(\alpha+\xi)+\alpha-2}&\longrightarrow \qquad&&&&\lambda^{\alpha-2}_{\alpha-2}   &  \lambda^{\alpha-1}_{\alpha-2}& \lambda^{\alpha}_{\alpha-2}\\
\qquad\vdots&&&&\rddots&\vdots&\vdots\\
g_{(\alpha+\xi)+1}&\longrightarrow \qquad&&\lambda^{1}_{1}  & \ldots & \lambda ^{\alpha-1}_{1} &\lambda^{\alpha}_{1}\\
g_{(\alpha+\xi)}&\longrightarrow \qquad&\lambda^{0}_{0} & \lambda^{1}_{0}  & \ldots & \lambda ^{\alpha-1}_{0} & \lambda^{\alpha}_{0}\\
g_{(\alpha+\xi)-1}&\longrightarrow \qquad&\lambda^{0}_{-1} & \lambda^{1}_{-1} & \ldots & \lambda ^{\alpha-2}_{-1} &\ \lambda^{\alpha-1}_{-1} &\\
%g_{(\alpha+\xi)-2}&\longrightarrow \qquad&\lambda^{0}_{-2} &\ldots  & \ldots & \lambda ^{\alpha-2}_{-2} & &\\
\qquad \vdots&&\ \ \vdots&\ \ \vdots &\rddots&&&\\
g_{(\alpha+\xi)-\alpha+1}&\longrightarrow \qquad&\lambda^{0}_{-\alpha+1}&  \lambda^{1}_{-\alpha+1} &&&&\\
g_{(\alpha+\xi) -\alpha}&\longrightarrow \qquad&\lambda^{0}_{-\alpha} &&& &&
\end{array}
\end{equation}
\end{remark}
\noindent We complete the description above, specifying the non-trivial constraints that the Jacobi identity imposes on parameters $\tau$ in the following
\begin{thm}\label{Thm-2}
Let $(\alpha, \xi)$ be a pair of non-negative integers and define the sequence $\{\sigma_q\}_{q \geq 0} =\{ (-)^q \lambda^\alpha_{\alpha-q}\}_{q \geq 0}$. Then, the $(\alpha,\xi)$-brackets are in one-to-one correspondence with the intersection points of the $\theta$ projective hypersurfaces in $\CP^\theta$ defined by the homogeneous polynomial equations 
\begin{itemize}
\item[(i)] if $\xi \geq 2\alpha-1$, for any $p=0,\ldots, \alpha-1$,
\begin{equation}\label{cp1}
(\tau_{0})^{p+1} \det \A_{\xi-p} \left( \{\sigma_q\}_{q\geq 0}\right)  = (\tau_\xi)^{p+1} \det \A_p \left( \{ \sigma_{\alpha-q} \}_{q\geq 0}\right) 
\end{equation}
\item[(ii)] if $\xi < \alpha$, for any $p=0, \ldots, \xi-1$, 
\begin{equation}\label{cp2}
(\tau_{0})^{\xi-p} \tau_p =( \tau_\xi)^{\xi-p+1} \det \A_{\xi-p} \left(  \{ \sigma_{\alpha-q}\}_{q\geq 0} \right) 
\end{equation}
\item[(iii)] if $ \alpha \leq \xi < 2\alpha-1$, for any  $p=0, \ldots, \xi-\alpha$
\begin{equation}\label{cp3}
(\tau_{0})^{p+1} \det \A_{\xi-p} \left( \{\sigma_q\}_{q\geq 0}\right)  = (\tau_\xi)^{p+1} \det \A_p \left( \{ \sigma_{\alpha-q} \}_{q\geq 0}\right)  
\end{equation}
and, for any $p= \xi-\alpha+1, \ldots, \alpha-1$
\begin{equation}\label{cp4}
(\tau_{0})^{\xi-p} \tau_p =( \tau_\xi)^{\xi-p+1} \det \A_{\xi-p} \left(  \{ \sigma_{\alpha-q}\}_{q\geq 0} \right).
\end{equation}
\end{itemize}
\end{thm}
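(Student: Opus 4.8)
The plan is to start from the classification already obtained in Theorem~\ref{Thm}: in the canonical coordinates the coefficients $g_k$ have the form \eqref{g_k}, with $f^\xi$ the exponential \eqref{g^alpha_M}, the scalars $\lambda^s_{s-r}$ tied together by the multiplication rule \eqref{multi}, and the top row $\lambda^\alpha_{\alpha-r}$ given by the band-Toeplitz determinants \eqref{a-lambda}. The remaining content of Theorem~\ref{Thm-2} is to read off the constraints that the Jacobi identities $[p,q]$ not yet consumed in the proof of Theorem~\ref{Thm} impose on the parameters $\tau_0,\ldots,\tau_\xi$. First I would substitute \eqref{g_k} into each such $[p,q]$. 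Since every $g_k$ is a linear combination of shifts $T^s f^\xi=\exp(\sum_i\tau_i u_{n+s+i})$ and differentiation merely multiplies an exponential by a $\Z$-linear combination of the $\tau_i$, each side of $[p,q]$ is a finite sum of exponentials $\exp(\sum_j c_j u_{n+j})$. These monomials are linearly independent for distinct exponent vectors, so matching coefficients turns $[p,q]$ into a finite system of polynomial relations among the $\lambda$'s and $\tau$'s, which becomes homogeneous once the common exponential factor is cleared.

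Next I would eliminate the lower rows of the rhombus \eqref{rhombi}: by \eqref{multi} every $\lambda^s_{s-r}$ is a monomial in the top-row entries, so all relations can be rewritten in the single sequence $\sigma_q=(-)^q\lambda^\alpha_{\alpha-q}$. The computational engine is the Hessenberg--Toeplitz recurrence satisfied by the determinants \eqref{T1}: because the super-diagonal entry of $\A_r(\{(-)^s\tau_0^{-1}\tau_s\})$ equals $(-)^0\tau_0^{-1}\tau_0=1$, expansion along the last column gives $\sum_{k=0}^{r}\tau_k\,\lambda^\alpha_{\alpha-r+k}=0$ for $1\le r\le\alpha-1$, equivalently $\sum_{k=0}^{r}(-)^k\tau_k\,\sigma_{r-k}=0$, together with $\sigma_0=\lambda^\alpha_\alpha=1$ and the exceptional value $\sigma_\alpha=(-)^\alpha\tau_0\tau_\xi^{-1}$ fixed by the second line of \eqref{a-lambda}. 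Applying the same recurrence to $\A_r(\{\sigma_q\})$ and to $\A_r(\{\sigma_{\alpha-q}\})$ then lets me recognise the products appearing after coefficient matching as the determinants $\det\A_{\xi-p}(\{\sigma_q\})$ and $\det\A_p(\{\sigma_{\alpha-q}\})$ occurring in \eqref{cp1}--\eqref{cp4}, and to pin down the precise powers $(\tau_0)^{p+1}$, $(\tau_\xi)^{p+1}$ (respectively $(\tau_0)^{\xi-p}$, $(\tau_\xi)^{\xi-p+1}$) needed to clear denominators and obtain honestly homogeneous equations.

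The case split is dictated by the supports that collide inside these sums. Since $\sigma_q$ is supported on $0\le q\le\alpha$ while $\tau_p$ is supported on $0\le p\le\xi$, the overlap between the two families of exponents produced by a given $[p,q]$ depends on how $\xi$ compares with $\alpha$ and with $2\alpha-1$: when $\xi\ge 2\alpha-1$ the ranges separate cleanly and every residual identity takes the balanced determinant form \eqref{cp1}; when $\xi<\alpha$ the leading exponent forces the mixed form \eqref{cp2} carrying a bare $\tau_p$ on the left; and for $\alpha\le\xi<2\alpha-1$ the two regimes coexist, the first $\xi-\alpha+1$ values of $p$ yielding \eqref{cp3} and the remainder \eqref{cp4}. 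In each regime I would verify that the number of independent relations is exactly $\theta=\min(\alpha,\xi)$, the others being either automatically satisfied or consequences of \eqref{multi} and the recurrence, so that only the $\theta+1$ parameters $\tau_0,\ldots,\tau_{\theta-1},\tau_\xi$ survive, the remaining $\tau$'s being determined through the recurrence.

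Finally, homogeneity of \eqref{cp1}--\eqref{cp4} in $(\tau_0,\ldots,\tau_{\theta-1},\tau_\xi)$ makes their common zero locus a well-defined subset of $\CP^\theta$ cut out by $\theta$ hypersurfaces, the scaling ambiguity matching the coordinate rescaling permitted by \eqref{loc}. The correspondence is then a bijection: to a bracket I attach the point $[\tau_0:\cdots:\tau_{\theta-1}:\tau_\xi]$, which lies on all $\theta$ hypersurfaces by the necessity just proved; conversely any such point reconstructs all $\lambda$'s through \eqref{multi} and \eqref{a-lambda}, hence all $g_k$ through \eqref{g_k}, and since the equations were extracted from the complete list of Jacobi identities, the reconstructed $g_k$ satisfy every $[p,q]$ and define a genuine $(\alpha,\xi)$-bracket. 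I expect the main obstacle to be exactly the middle step: showing that the coefficient matching collapses the Jacobi system to precisely the determinant identities \eqref{cp1}--\eqref{cp4} with the stated powers of $\tau_0,\tau_\xi$, and that no further independent constraint hides among the higher $[p,q]$ --- that is, establishing sufficiency as well as necessity and pinning down the count $\theta$. The band-Toeplitz recurrence is what keeps the bookkeeping tractable, but the three-way case analysis on the exponent supports is delicate.
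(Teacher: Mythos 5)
Your proposal takes essentially the same route as the paper's own proof: both substitute the classified form \eqref{g_k} into the Jacobi identities of type $[M-\alpha-p,M]$ and $[M,M-\alpha-p]$, solve the resulting linear recurrences for the $\lambda$'s and $\tau$'s via the band-Toeplitz (generalized Fibonacci) determinant formula, and obtain \eqref{cp1}--\eqref{cp4} as consistency conditions between the ``left-sided'' expressions for $\tau_p$ in terms of $\tau_0$ (the paper's \eqref{e-det}) and the ``right-sided'' ones in terms of $\tau_\xi$ (the paper's \eqref{eq-3}), with the three-way case split governed by how the two determined ranges overlap. Like the paper, you leave sufficiency (that the remaining identities $[p,q]$ impose no further independent constraints) as a claimed routine verification, so the two arguments agree in both strategy and level of detail.
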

\noindent The case $\theta=0$ is described below in the Example \ref{(a,0)}.
\subsection{Proof}
In our computations we often encounter generalized Fibonacci sequences. Therefore, it might be useful to remind some elementary topics about them.
\begin{remark}\label{fib}
For any positive integer $k\geq2$,  the $k$-generalized Fibonacci sequence $\{F_n\}_{n\geq0}$ is defined by the order $k$ linear homogeneous recurrence relation
\begin{equation}\label{fib-eq}
\begin{array}{rcl}
F_n &=& \sum_{i=1}^{k} (-)^{i+1} a_i \, F_{n-i}, \qquad {\rm for} \ n\geq 1,\\
F_0 &\equiv&1
\end{array}
\end{equation}
for arbitrary coefficients $a_i$, $i=0, \ldots, k$.\\
The generating function for  $\{F_n\}_{n\geq0}$ is given by
$$
\varphi(t) = \sum_{n\geq0} F_n t^n = \frac{1}{1-a_1 t + a_2 t^2 - \ldots + (-)^k a_k t^k}
$$
and, using the theory of lower triangular Toeplitz matrices (see for example \cite{Tre} or \cite{Yang08}), one can express the $k$-generalized Fibonacci numbers in terms of determinants of matrices with entries given by the coefficients of their recurrence equation, that is
\begin{equation}\label{e-det}
F_n = \det (a_{1-i+j})_{1 \leq i, j \leq n} = \det \A_n \left( {\bf a}\right),
\end{equation}
where ${\bf a} = (a_0 \equiv 1, a_1, \ldots, a_k, a_{k+1} \equiv 0, \ldots)$ and $\A$ the band Toeplitz matrix defined by $\eqref{T1}$.
\end{remark}
\noindent Let us consider a PB of type \eqref{PB-1}. According to the Lemma \ref{l-1} we can choose certain coordinates on the target manifold $\M^N$ such that the leading term $g_M$ reduces to the following form 
$
f^\xi_{n+\alpha} =f^\xi (u_{n+\alpha},\ldots, u_{n+\alpha+\xi}).
$ 
In our opinion, the simplest way to obtain the form of coefficients $g_k$, $k=1, \ldots, M$ comes from considering equations 
\begin{equation*}\tag*{$[ M-\alpha-p, M ]$}
\left\{ \left\{ u_n, u_{n+M-\alpha-p} \right\}, u_{n+2M-\alpha-p} \right\} =  \left\{ u_n, \left\{ u_{n+M-\alpha-p} , u_{n+2M-\alpha-p} \right\} \right\} 
\end{equation*}
\begin{equation*}\tag*{$[ M,M-\alpha-p ]$}
\left\{ \left\{ u_n, u_{n+M} \right\}, u_{n+2M-\alpha-p} \right\} =  \left\{ u_n, \left\{ u_{n+M} , u_{n+2M-\alpha-p} \right\} \right\} 
\end{equation*}
for any $p=0,\ldots, \alpha+\xi-1$.
\subsubsection{Formula \eqref{g_k}} We start a detailed analysis of some initial cases (i.e. $p=0,1$), that will suggest us how to organize the general computations. For the sake of simplicity, we suppose $\xi>\alpha$.\\ 
\noindent $p=0$. Equations $[ M-\alpha, M ]$ and $[ M,M-\alpha ]$ yield $\left(\log f^\xi_n\right)_{, u_n} = \tau_0$ and $\left(\log f^\xi_n\right)_{, u_{n+\xi}} = \tau_\xi$ for some non-zero constants $\tau_0$, $\tau_\xi$ (see condition \eqref{ast}) and
$$
\begin{array}{ccl}
g_{\alpha+\xi} (u_{n}, \ldots, u_{n+\alpha+\xi}) &=& \lambda^{0}_{0} \, f^\xi_n +\ldots +  \lambda^{\alpha}_{0}\, f^\xi_{n+\alpha},\\
\end{array}
$$
where dots stay for any arbitrary function depending at the most on the variables $u_{n+1},\ldots,$ $u_{n+\alpha+\xi-1}$ and $\lambda^0_0 \doteq (\tau_0)^{-1} \, \tau_\xi$, $\lambda^\alpha_0 \doteq \tau_0 \, (\tau_\xi)^{-1}.$\\
$p=1$. Analogously, equations $[ M-\alpha -1, M ]$ and $[ M, M-\alpha -1]$ provide us $\left(\log f^\xi_n\right)_{, u_{n+1}} = \tau_1$ and $\left(\log f^\xi_n\right)_{, u_{n+\xi-1}} = \tau_{\xi-1}$ for some constants $\tau_1$, $\tau_{\xi-1}$ and
$$
\begin{array}{c}
		  g_{\alpha+\xi-1} (u_{n}, \ldots, u_{n+\alpha+\xi-1})_{, u_{n+\alpha+\xi-1}} \\
		  \shortparallel\\
		  \tau_0 \, g_{M-1} (u_{n}, \ldots, u_{n+2\alpha+\xi-1})+ \tau_1 \, f^\xi(u_{n+\alpha}, \ldots, u_{n+\alpha+\xi})\\
		  \\
		    g_{\alpha+\xi-1} (u_{n}, \ldots, u_{n+\alpha+\xi-1})_{, u_{n}} \\
		    \shortparallel\\
		    \tau_\xi \, g_{M-1} (u_{n-\alpha}, \ldots, u_{n+\alpha+\xi-1})+ \tau_{\xi-1} \, f^\xi(u_{n-1}, \ldots, u_{n+\xi-1}).
\end{array}
$$
It necessarily follows that the constants $\tau_1$, $\tau_{\xi-1}$ are non-zero and
$$
g_{M-1} =g_{M-1} (u_{n+\alpha-1}, \ldots, u_{n+\alpha+\xi}) = \lambda^{\alpha-1}_{\alpha-1} f^\xi_{n+\alpha-1} + h^{\xi-1}(u_{n+\alpha}, \ldots, u_{n+\alpha+\xi-1}) +  \lambda^{\alpha}_{\alpha-1} f^\xi_{n+\alpha}
$$
with $\lambda^{\alpha-1}_{\alpha-1} \doteq (\tau_{\xi})^{-1} \, \tau_{\xi-1}$ and $\lambda^\alpha_{\alpha-1}\doteq -(\tau_0)^{-1} \, \tau_{1}$.

\begin{remark}\label{R-1}
Notice that the function $h^{\xi-1}_{n+\alpha}$ can be understood as the leading order function of a $(\alpha, \xi-1)$-bracket: $h^{\xi-1}_{n+\alpha} \doteq f^{\xi-1}  (u_{n+\alpha}, \ldots, u_{n+M-\alpha-1})$.\\ 
We decide to forget about the contributions provided by the leading order functions ${f}^{\xi'}$ of lower order PBs (i.e. $M'< M$), postponing to Section \ref{compa} the problem of classifying the compatible pairs of PBs \eqref{PB-1}.
\end{remark}
\noindent According to Remark \ref{R-1}, we solve equations $[ M-\alpha -1, M ]$ and $[ M, M-\alpha -1]$, finding
$$
\begin{array}{ll}
 g_{\alpha+\xi-1} (u_{n}, \ldots, u_{n+\alpha+\xi-1}) =&\lambda^{0}_{-1} \,  f^\xi_{n} + \ldots + \lambda^{\alpha-1}_{-1} \, f^\xi_{n+\alpha-1}	,				
\end{array}
$$
where $\lambda^{0}_{-1} \doteq (\lambda^{\alpha}_0)^{-1} \, \lambda^\alpha_{\alpha-1}$, $\lambda^{\alpha-1}_{-1}\doteq \lambda^\alpha_0 \, \lambda^{\alpha-1}_{\alpha-1}$ and dots stay for any arbitrary function depending at the most on variables $u_{n+1},\ldots, u_{n+\alpha+\xi-2}$.
\noindent Iterating this procedure, one can show that the coefficients $\{g_k\}_{k=1, \ldots, M}$ depend on lattice variables according to formulas
$$
\begin{array}{cl}
%g_{M-k} = g_{M-k}(u_{n+{\rm max} \{0, \alpha-k\}}, \ldots , u_{n+{\rm min} \{M-k, \alpha+\xi\}}),& \qquad k=0, \ldots, M-1
g_{\alpha+\xi+k} = g_{\alpha+\xi+k}(u_{n+k}, \ldots , u_{n+\alpha+\xi}),  & k=0, \ldots, \alpha\\
g_{\alpha+\xi-k} = g_{\alpha+\xi-k}(u_{n}, \ldots , u_{n+\alpha+\xi-k}),    & k=1, \ldots, \alpha+\xi-1
\end{array}
$$
and
$$
f^\xi (u_{n}, \ldots, u_{n+\xi})= \lambda^\alpha_\alpha \, \exp \left( \sum_{i=0}^{\xi} \tau_{i} \, u_{n+i}\right),
$$
where 
\begin{itemize}
\item[(i)] $\lambda^\alpha_\alpha$ is a free multiplicative constant, that can be normalized (i.e. $\lambda^\alpha_\alpha \equiv 1$) choosing a suitable rescaling of the coordinates: $u_n \longmapsto k \, u_n$, for some constant $k$,
\item[(ii)]  $\left\{\tau_{i}\right\}_{i=0, \ldots, \xi}$ are non-zero complex parameters, thanks to condition \eqref{ast}.\\
We denote $\tau_{i} \equiv 0$, if $i<0$ or $i>\xi$.
\end{itemize}
\noindent Moreover, solving the equations $\left[M-\alpha-p, M \right]$ and the symmetric ones $\left[M,  M-\alpha-p \right]$, we obtain the formulas \eqref{g_k} for coefficients $\{g_k\}_{k=1, \ldots, M}$.
\subsubsection{The multiplication rule for constants $\lambda$'s} 
Let us focus our attention on equations $\left[ M-\alpha-p, M \right]_{p=0, \ldots, \alpha+\xi-1}$ (analogous results come from equations $\left[ M, M-\alpha-p \right]$). The constraints on constants $\lambda$'s can be encoded into the following linear system 
\begin{equation}\label{eq-gen}
\left(
\begin{array}{c}
\tau_{\xi}\, \lambda^{\alpha-p}_{-p}\\
0\\
\vdots\\
\vdots\\
0
\end{array}
\right)
=
\left(
\begin{array}{ccccc}
\lambda^{\alpha-p}_{\alpha-p} 		& 0 							    & \ldots  &\ldots&0\\
\lambda^{\alpha-p+1}_{\alpha-p}	&  \lambda^{\alpha-p+1}_{\alpha-p+1} &\ddots  &\ldots &\vdots\\
 \vdots 						& \vdots 						    &\ddots  & 0&\vdots\\
 \lambda^{ \alpha-1}_{\alpha-p} 	&  \lambda^{ \alpha-1}_{\alpha-p+1}     & \ldots & \lambda^{\alpha-1}_{\alpha-1}& 0\\
 \lambda^{\alpha}_{\alpha-p} 		& \lambda^{\alpha}_{\alpha-p+1}	    &  \ldots &  \lambda^{\alpha}_{\alpha-1} & \lambda^{ \alpha}_{\alpha}
\end{array}
\right)
\,
\left(
\begin{array}{c}
\tau_{0}\\
\tau_{1}\\
\vdots\\
\tau_{p-1}\\
\tau_{p}
\end{array}
\right)
\end{equation}
where $\lambda^{s}_{s-r}$ is non-zero iff $s,r=0, \ldots, \alpha$, and recursively on $p$ we have defined
\begin{itemize}
\item[(i)] $\lambda^{\alpha-p}_{-p} \doteq \lambda^\alpha_0 \, \lambda^{\alpha-p}_{\alpha-p}$, 
\item[(ii)] $\lambda^{\alpha-p+s}_{\alpha-p} \doteq - \tau^{-1}_{0} \, \left[ \sum^{s}_{i=1} \tau_{i} \,\lambda^{\alpha-p+s}_{\alpha-p+i}\right],$ $s=1, \ldots, p$.
\end{itemize}
Combining $(i)$ and $(ii)$, by induction on $p$ one can prove the following multiplication rule 
\begin{equation}\label{qmulti}
\lambda^{s}_{s-r} = \lambda^{s}_s \, \lambda^{\alpha}_{\alpha-r}, \qquad s,r =0, \ldots, \alpha,
\end{equation}
that allows us to describe all constants $\lambda$'s in terms of the constants $\{ \lambda^\alpha_{\alpha-s}, \lambda^{\alpha-s}_{\alpha-s}\}_{s=0, \ldots, \alpha}$, appearing on the edges of rhombus \eqref{rhombi}.
% Indeed, for any $p=0, \ldots, 2\alpha$, $q=0, \ldots, \alpha$ 
%$$
%\begin{array}{lcl}
%\lambda^{\alpha-q}_{\alpha-p} &=&-\tau^{-1}_{0} \sum^{p-q}_{i=1} \tau_{i} \,\lambda^{\alpha-q}_{\alpha-p+i} =  - \tau^{-1}_0 \lambda^\alpha_q \,  \sum^{p-q}_{i=1} \tau_{i} \,\lambda^{ \alpha}_{\alpha-p+i} \\
%&=& \lambda^{\alpha-q}_{\alpha-q}\,  \lambda^{ \alpha}_{\alpha-(p-q)} 
%\end{array}
%$$
%that is
%$$
%\tau_{0} \, \lambda^{\alpha-q}_{\alpha-q-(p-q)}= \tau_\xi\,  \lambda^{\alpha}_q \,  \lambda^{ \alpha}_{\alpha-(p-q)}  \Longrightarrow \lambda^{\alpha-q}_{\alpha-q-(p-q)}= \lambda^{\alpha}_q \,  \left( \lambda^{\alpha}_0 \right)^{-1}\, \lambda^{ \alpha}_{\alpha-(p-q)}.
%$$

\subsubsection{Constants $\lambda$'s as function of parameters $\tau$'s}
\noindent We start looking at the last row of the matrix appearing on \eqref{eq-gen}. If $0\leq p < \alpha$, we have the following order $p$ recursive relations
\begin{equation}\label{eq-2}
	\lambda^{ \alpha}_{\alpha-p} = - (\tau_{0})^{-1} \, \sum^{p}_{i=1} \tau_{i} \,\lambda^{ \alpha}_{\alpha-p+i}.
\end{equation}
Denoting $F_{p} \doteq \lambda^{ \alpha}_{\alpha-p} $, $ a_p \doteq (-)^p\tau_{0}^{-1}\, \tau_p$, we recognize the $p$-generalized Fibonacci sequence (see equation \eqref{fib-eq}). Then, according to Remark \ref{fib}, we can express 
\begin{equation*}
 \lambda^{\alpha}_{\alpha-p} \left( \tau_{0}, \ldots, \tau_p\right)= \det \A_p \left( \{(-)^s\tau_{0}^{-1}\, \tau_{s}\}_{s\geq0} \right) \qquad p=0, \ldots, \alpha-1 .
\end{equation*}
To write down the constants $\{\lambda^{\alpha-s}_{\alpha-s}\}_{s=0, \ldots, \alpha}$ as functions of the constants  $\{\lambda^{\alpha}_{\alpha-s}\}_{s=0,\dots, \alpha}$, we look at the equations
\begin{equation*}\tag*{$[ M-q, M-\alpha+q ]_{q=0, \ldots, \alpha}$}
\left\{ \left\{ u_n, u_{n+M-q} \right\}, u_{n+2M-\alpha} \right\} =  \left\{ u_n, \left\{ u_{n+M-q} , u_{n+2M-\alpha} \right\} \right\}.
\end{equation*}
They split into
$$
\begin{array}{cll}
(a) & g_{(\alpha+\xi)+\alpha-q} (u_{n+\alpha-q}, \ldots, u_{n+\alpha+\xi})_{, u_{n+\alpha+\xi}}& = \lambda^\alpha_{\alpha-q} \, f^{\xi}(u_{n+\alpha}, \ldots, u_{n+\alpha+\xi})_{, u_{n+\alpha+\xi}}\\
(b) & g_{(\alpha+\xi)+q} \quad(u_{n+q}, \ldots, u_{n+\alpha+\xi)})_{, u_{n+q}} &= \lambda^{q}_{q} \, f^{\xi} (u_{n+q}, \ldots, u_{n+q+\xi)})_{, u_{n+q}}
\end{array}
$$
that are compatible only if
\begin{equation}\label{eq}
\lambda_{q}^{q} = (\lambda^{\alpha}_0)^{-1} \,  \lambda^{\alpha}_{\alpha-q}.
\end{equation}
Substituting \eqref{eq} in the formula \eqref{qmulti}, we hold the multiplication rule \eqref{multi}.

\subsubsection{Constraints for parameters $\tau$'s}
\noindent  Supposing $\xi>\alpha$, we can analyze when $\alpha \leq p \leq \xi$.  Analogously to \eqref{eq-2}, we have the following linear homogeneous recurrence relation of order $\alpha$
\begin{equation}\label{eq-rec}
\tau_p = - \sum_{q=1}^\alpha \lambda^{\alpha}_{\alpha-q}\tau_{p-q}, \qquad {\rm for\ any\ } p\geq \alpha.
\end{equation}
For any $q=1,\ldots, \alpha$, denoting $a_q \doteq (-)^{q} \lambda^\alpha_{\alpha-q}$, we arrive at
$
\tau_{p} = \sum_{q=1}^\alpha (-)^{q+1}a_q\, \tau_{p-q}
$
that is
\begin{equation}\label{e-det}
\tau_p \left( \tau_{0}, \ldots, \tau_{\alpha-1}; \tau_{\xi}\right) = \tau_{0} \, \det \A_p \left(\{ (-)^{s} \lambda^\alpha_{\alpha-s}\}_{s\geq 0} \right),  \qquad p=\alpha, \ldots, \xi.
\end{equation}

\noindent Finally, when $\xi +1 \leq p <  \alpha+\xi$, we obtain analogue relations with respect to the case  $1\leq p <  \alpha$: the equations $\left[ M-\alpha-p, M \right]$ give us
$$
\sum^{q}_{i=0}  \lambda^{\alpha}_{q-i} \tau_{\xi-i} = 0, \qquad q =1, \ldots, \alpha-1.
$$
Following again Remark \ref{fib}, for any $q=1, \ldots, \alpha-1$, we obtain
\begin{equation}\label{eq-3}
\tau_{\xi-q} \left( \tau_{0}, \ldots, \tau_{\alpha-1}; \tau_{\xi}\right) = \tau_{\xi}\, \det \A_q \left( \{(-)^s (\lambda^{\alpha}_0)^{-1}\, \lambda^\alpha_s \}_{s\geq0} \right) 
\end{equation}

\noindent Observe that formulas \eqref{e-det} and \eqref{eq-3} have to be consistent: this provides us the constraints on parameters $\tau$'s. According to the values of $\alpha$ and $\xi$, we immediately find out the homogeneous polynomial equations given by Theorem \ref{Thm-2}.

\subsubsection{Remaining equations}\noindent Due to the relation \eqref{eq}, the symmetric set of equations $\left[ M, M-\alpha-p\right]_{p=0, \ldots, \alpha+\xi-1}$ gives us the same contraints. \\
Moreover, replacing expression \eqref{g_k} on the other equations $\left[ p,q\right]_{p,q =1, \ldots, M}$, with straightforward computations that generalize the previous ones, we can obtain that the relations for the parameters 
$\tau_{0}, \ldots, \tau_{\theta-1}; \tau_\xi$ describe above are necessary and sufficient.

\subsection{Examples}
Looking at the number of the lattice variables which the leading order function can depend on, we study in detail the minimal and the maximal cases.

\subsubsection {$(\alpha, 0)$-brackets} \label{(a,0)}The leading order coefficient can be choosen of the form
$g_{2 \alpha} = f^0(u_{n+\alpha}) =  \exp (\tau_{0} \, u_{n+\alpha})$ and substituting in the formulas for the constants $\lambda$'s we obtain
$$
\begin{array}{rcl}
  					\left\{ u_n, u_{n+2\alpha}\right\}_M &= &  \exp \left(\tau_{0} \, u_{n+\alpha} \right)\\
					\left\{ u_n, u_{n+ \alpha}\right\}_M & = & \exp \left( \tau_{0} \, u_{n} \right) + \exp \left( \tau_{0} \, u_{n+\alpha} \right)
\end{array}
$$
where $\tau_{0}$ is a free complex constant. Notice that the cubic PB of Volterra lattice \eqref{bi-VL} belongs on the class of $(1,0)$-brackets.

\subsubsection {$(1, \xi)$-brackets} These PBs are given in the canonical coordinates by the formulas
\begin{eqnarray*}
\begin{array}{rcl} 
  \left\{ u_n, u_{n+M}\right\} &= &  \exp(z_{n+1})\\
  \left\{ u_n, u_{n+M-1}\right\}&= &  \tau_{0}^{-1}\, \tau_{\xi} \exp(z_{n})+\tau_{0}\, \tau_{\xi}^{-1} \exp(z_{n+1})\\
  \left\{ u_n, u_{n+M-2}\right\}&= & \exp(z_{n})\ \\ 
   \end{array} 
\end{eqnarray*}
where $z_n \doteq \sum_{i=0}^{\xi}  (- \lambda)^i \tau_0 \, u_{n+i}$ and the pair $(\tau_{0}$, $\tau_{\xi})$ belongs to the set of points of $\CP^1$, described by the following equation (see equation \eqref{cp1})
\begin{equation}\label{p-1}
\tau_0^{\alpha+\xi}+ (-)^{\alpha+\xi} \tau_\xi^{\alpha+\xi}=0 \qquad {\rm where}\quad [\tau_0: \tau_\xi] \in \CP^1,\ \alpha=1.
\end{equation}
\noindent Analogously, the $(\alpha, 1)$-brackets are parametrized by the same points \eqref{p-1}, with $\xi=1$.

\subsubsection {A multi-parameters example: $(2, 2)$-brackets} This is the first non trivial case in which the parameters $\tau$'s are described by the intersection points of certain hypersufaces. From the Theorems \ref{Thm} and \ref{Thm-2}, it follows that the $(2,2)$-brackets are characterized by three parameters $\tau_0, \tau_1, \tau_2$, constrained to satisfy the following systems 
$$
\begin{array}{rcl}
(\tau_0)^2 &=& (\tau_2)^2\\
(\tau_1)^{2} &=& 2 \tau_0 \, \tau_{2}.
\end{array} 
$$
%reduction
\section{A Darboux-type theorem}\label{Darboux}
\noindent In Theorem \ref{Thm} we have proven that, up to local point-wise change of coordinates, the leading order function $g_M $ of any PB \eqref{PB-11} can be considered of the following form
$
g_{M} = f^{\xi}_{n+\alpha}  = \exp \left( \sum_{i=0} ^ \xi \tau_{i} \, u_{n+\alpha+i}\right),
$
for some pair of non-negative integers $(\alpha,\xi)$.  \\
In the present Section, we deal with the problem of reduction of PBs \eqref{PB-1} to a canonical form by more general changes of variable than the local ones. Let us give the following
\begin{defin}
Let $(s_1, s_2)$ be a pair of non negative integers, we call a discrete Miura-type transformation any map of the form $ z_n = \varphi (u_{n-s_1}, \ldots, u_{n+s_2})$ that is a {\it canonical transformation}, i.e. a change of coordinates  preserving the PBs. 
\end{defin}
\noindent Notice that these discrete Miura-type transformations, as the continuous ones, are {\it differential substitutions} (i.e. they depend on $u_{n+1}, u_{n+2}, \ldots$) and therefore, only formally invertible. Defined new coordinates according to
\begin{equation}\label{z-2}
z_n= \sum_{i=0}^\xi \tau_i \, u_{n+i},
\end{equation}
we prove by direct computation the following
\begin{thm}\label{Thm-3}
Any $\left( \alpha,\xi\right)$-bracket is mapped by the Miura-type transformation \eqref{z-2} into the following $\left( \alpha+\xi, 0\right)$-bracket,
\begin{eqnarray}\label{z-PB}
\begin{array}{lcl}
  \left\{ z_n, z_{n+2(\alpha+\xi)}\right\} &= & \tau_{0} \, \tau_{\xi}\, \exp (z_{n+\alpha+\xi})\\
  \left\{ z_n, z_{n+\alpha+\xi}\right\}& = &\tau_{0} \, \tau_{\xi}\, \left[ \exp (z_{n})+  \exp (z_{n+\alpha+\xi})\right].
  \end{array}
\end{eqnarray}
\end{thm}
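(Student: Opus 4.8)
The plan is a direct computation organized around two generating polynomials. Since the substitution \eqref{z-2} gives $z_n=\sum_{i=0}^{\xi}\tau_i u_{n+i}$, the leading function of Theorem~\ref{Thm} becomes a pure shifted exponential, $f^\xi(u_n,\ldots,u_{n+\xi})=\exp(z_n)$, and formula \eqref{g_k} rewrites every coefficient $g_k$ as a finite linear combination of shifted exponentials $\exp(z_{n+s})$ weighted by the scalars $\lambda^s_p$. Thus each elementary bracket $\{u_a,u_b\}$ (which is $g_{b-a}$ when $b>a$, $-g_{a-b}$ when $b<a$, and $0$ when $a=b$) is a combination of shifted exponentials in the new variables, and by bilinearity the expansion $\{z_n,z_{n+d}\}=\sum_{i,j=0}^{\xi}\tau_i\tau_j\{u_{n+i},u_{n+d+j}\}$ reduces to bookkeeping of the coefficients of $\exp(z_{n+r})$. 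I would introduce $\tau(t)\doteq\sum_{i=0}^{\xi}\tau_i t^i$ and $\Lambda(t)\doteq\sum_{q=0}^{\alpha}\lambda^\alpha_{\alpha-q}t^q$, the latter being, by \eqref{a-lambda} and Remark~\ref{fib}, the generating polynomial of the generalized Fibonacci sequence attached to $a_s=(-)^s\tau_0^{-1}\tau_s$.

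The decisive step is to establish the exact polynomial identity
\begin{equation*}
\Lambda(t)\,\tau(t)=\tau_0\bigl(1+t^{\alpha+\xi}\bigr).
\end{equation*}
Its low-degree part is automatic: because the $\lambda^\alpha_{\alpha-q}$ are the Fibonacci numbers of $a_s$, Remark~\ref{fib} gives $\Lambda(t)=\tau_0/\tau(t)$ through order $t^{\alpha-1}$, whence $[t^r](\Lambda\tau)=\tau_0\,\delta_{r,0}$ for $0\le r\le\alpha-1$. The vanishing of the middle coefficients $\alpha\le r\le\alpha+\xi-1$ and the value $[t^{\alpha+\xi}](\Lambda\tau)=\lambda^\alpha_0\,\tau_\xi=\tau_0$ (using $\lambda^\alpha_0=\tau_0\tau_\xi^{-1}$ from \eqref{a-lambda}) are exactly the consistency between the forward relation \eqref{e-det} and the backward relation \eqref{eq-3}; that is, they are precisely the constraints of Theorem~\ref{Thm-2}, which I would reinterpret as the single statement that $\tau(t)$ divides $1+t^{\alpha+\xi}$ (compare the $(1,\xi)$ example \eqref{p-1}). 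I expect this reinterpretation, carried out uniformly across the three regimes of Theorem~\ref{Thm-2}, to be the main obstacle.

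Granting the identity, the computation closes cleanly. Applying the multiplication rule \eqref{multi} to $\lambda^{r-i}_{\,d+j-i-\alpha-\xi}$ factorizes the coefficient of $\exp(z_{n+r})$ in the positive-difference part of $\{z_n,z_{n+d}\}$ as $(\lambda^\alpha_0)^{-1}\,[t^r](\Lambda\tau)\,[t^{\alpha+\xi+r-d}](\Lambda\tau)$, a product of two single Fibonacci convolutions. Since the identity forces $[t^m](\Lambda\tau)=\tau_0$ for $m\in\{0,\alpha+\xi\}$ and $0$ otherwise, the product is nonzero only for $d=\alpha+\xi$ (at $r=0$ and $r=\alpha+\xi$) and for $d=2(\alpha+\xi)$ (at $r=\alpha+\xi$); every other shift $1\le d<2(\alpha+\xi)$ with $d\ne\alpha+\xi$ gives zero. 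Evaluating with $(\lambda^\alpha_0)^{-1}=\tau_\xi\tau_0^{-1}$ yields the coefficient $\tau_0\tau_\xi$ in each surviving case, which is exactly the right-hand side of \eqref{z-PB}: the $d=2(\alpha+\xi)$ term reproduces $\{z_n,z_{n+2(\alpha+\xi)}\}=\tau_0\tau_\xi\exp(z_{n+\alpha+\xi})$, and the $d=\alpha+\xi$ term reproduces $\tau_0\tau_\xi[\exp(z_n)+\exp(z_{n+\alpha+\xi})]$. The top bracket can in fact be checked in one line without the identity: among the terms of $\{z_n,z_{n+2(\alpha+\xi)}\}$ only $(i,j)=(\xi,0)$ has index difference $\le M$, and it gives $\tau_\xi\tau_0\,g_M(u_{n+\xi},\ldots,u_{n+\xi+M})=\tau_0\tau_\xi\exp(z_{n+\alpha+\xi})$, consistently with \eqref{ast}.

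Finally I would dispose of the negative-difference contributions, which are present for the small shifts. Since $1+t^{\alpha+\xi}$ is palindromic, the identity passes to the reversed polynomials $\tilde\tau(t)=\sum_i\tau_{\xi-i}t^i$ and the correspondingly reversed $\tilde\Lambda$, giving $\tilde\Lambda\,\tilde\tau=\tau_\xi(1+t^{\alpha+\xi})$; applying the same factorization to the symmetric equations $[M,M-\alpha-p]$ shows that the negative-difference part of each $\{z_n,z_{n+d}\}$ vanishes on the same range, so that no spurious brackets survive. Assembling these computations yields \eqref{z-PB} and exhibits the image as the asserted $(\alpha+\xi,0)$-bracket.
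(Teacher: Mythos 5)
Your proposal is correct, and it proves \eqref{z-PB} by the same underlying strategy as the paper --- a direct expansion of $\{z_n,z_{n+d}\}=\sum_{i,j}\tau_i\tau_j\{u_{n+i},u_{n+d+j}\}$ into shifted exponentials via formula \eqref{g_k} --- but your organization of the computation is genuinely different. The paper proceeds by case analysis on the shift (Steps 1--5 of its proof), tabulating the admissible index sets $J^{\pm}_a(i,p)$ and checking regime by regime that each partial sum is killed by the recurrence \eqref{eq-rec}. You instead compress all the relevant relations --- \eqref{eq-2} for the coefficients of $t^1,\ldots,t^{\alpha-1}$, \eqref{eq-rec} for $t^{\alpha},\ldots,t^{\xi}$, the backward relations $\sum_{i=0}^{q}\lambda^{\alpha}_{q-i}\tau_{\xi-i}=0$ preceding \eqref{eq-3} for $t^{\xi+1},\ldots,t^{\alpha+\xi-1}$, and $\lambda^{\alpha}_{0}=\tau_0\tau_{\xi}^{-1}$ for the top term --- into the single identity $\Lambda(t)\tau(t)=\tau_0\bigl(1+t^{\alpha+\xi}\bigr)$, and then use the multiplication rule \eqref{multi} to factor the coefficient of $\exp(z_{n+r})$ as $(\lambda^{\alpha}_{0})^{-1}\,[t^r](\Lambda\tau)\,[t^{\alpha+\xi+r-d}](\Lambda\tau)$; I have checked that this factorization is exact term by term under the paper's vanishing conventions for $\lambda^{s}_{s-q}$ and $\tau_i$. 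This buys uniformity: all shifts $d$ are handled at once, and the survival of exactly $d=\alpha+\xi$ and $d=2(\alpha+\xi)$ with coefficient $\tau_0\tau_\xi$ is read off immediately, where the paper needs separate tables and three sub-cases per table. Two small repairs. First, what you flag as the ``main obstacle'' is not one: the identity need not be re-derived from the three regimes of Theorem \ref{Thm-2}, since its coefficient equations are verbatim the recurrences already established in the proof of Theorem \ref{Thm}, valid for any $(\alpha,\xi)$-bracket; you may simply cite them. Second, the negative-difference part requires neither the Jacobi equations $[M,M-\alpha-p]$ nor palindromy: antisymmetry gives $\{u_{n+i},u_{n+d+j}\}=-g_{i-d-j}(u_{n+d+j},\ldots,u_{n+i})$ for $i>d+j$, and the same factorization yields the coefficient $-(\lambda^{\alpha}_{0})^{-1}\,[t^{\alpha+\xi+r}](\Lambda\tau)\,[t^{r-d}](\Lambda\tau)$, which vanishes for every $d\geq 1$ because the first factor forces $r=0$ while the second forces $r\geq d$ (also, with the reversals $\tilde\tau(t)=t^{\xi}\tau(1/t)$ and $\tilde\Lambda(t)=t^{\alpha}\Lambda(1/t)$ the reversed identity carries the prefactor $\tau_0$, not $\tau_\xi$, unless you renormalize $\tilde\Lambda$ by $(\lambda^{\alpha}_{0})^{-1}$). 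With these adjustments your argument is complete and, in my view, cleaner than the published one.
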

\begin{remark}
Subdividing all the particles-variables into $\alpha+\xi$ families, according to 
$$
v^{(p)}_n \doteq z_{(\alpha+\xi)(n-1)+p}, \qquad p=1, \ldots, \alpha+\xi,
$$
the PB \eqref{z-PB} can be presented in the following simple form
$$
\begin{array}{lcl}
  \left\{ v^{(p)}_n, v^{(p)}_{n+2}\right\} &= & \tau_{0} \, \tau_{\xi}\,  \exp (v^{(p)}_{n+1})\\
  \left\{v^{(p)}_{n}, v^{(p)}_{n+1}\right\}& = & \tau_{0} \, \tau_{\xi}\,  \left[\exp (v^{(p)}_{n})+  \exp (v^{(p)}_{n+1})\right].
  \end{array}
 $$
Therefore, it splits into $\alpha+\xi$ copies of the cubic Volterra PB \eqref{bi-VL}.\\
Furthermore, any constant PB of order $M$, $\left\{u_{n}, u_{n+k}\right\}= \sigma_{k}$, with $k=1, \ldots, M,$ can be reduced by the lattice splitting $v^{(p)}_n \doteq u_{M(n-1)+p},$
 to the following PB 
\begin{equation}\label{c-PB}
\begin{array}{ccl}
  \left\{v^{(p)}_{n}, v^{(q)}_{n+1}\right\}& = & \sigma_{M+q-p} \\
  \left\{ v^{(p)}_n, v^{(q)}_{n}\right\} &= &\sigma_{q-p},
  \end{array}
 \end{equation}
 where $\sigma_0 \equiv 0$ and $p,q=1, \ldots, M$.
Notice that, if all constants $\sigma_k $ are normalized (i.e. $\sigma_k \equiv 1$, $k=1, \ldots, M$), the PB \eqref{c-PB} becomes the quadratic PB for the {\it Bogoyavlensky lattice} (BL) of order $M$  (see the definition of BL in  Suris \cite{Sur}, Chapter 17).
\end{remark}
\begin{proof}
We are interested on the expression of $\left( \alpha, \xi \right)$-brackets in $z$-coordinates. Let be $p=0,1, \ldots$, we have to compute
\begin{equation*}\label{zz}
\begin{array}{rl}
\left\{z_n , z_{n+p} \right\}  &=  \{  \sum_{i=0}^{\xi} \tau_{i} \, u_{n+i},  \sum_{j=0}^{\xi} \tau_j \,  u_{n+p+j} \}=\\
					     &=  \sum_{i=0}^{\xi} \tau_{i} \,\left[ \sum_{j\in J^+_a (i,p)} \tau_j \,  \{ u_{n+i}, u_{n+p+j}\}+\sum_{j\in J^-_a (i,p)} \tau_j \,  \{ u_{n+p+j}, u_{n+i}\} \right]
					      \end{array}
\end{equation*}
where we have defined the sets of admissible $j$'s,
$$
\begin{array}{l}
J^+_a (i,p) \doteq \left\{0, \ldots \xi\right\} \cap  \{i+\xi-p, \ldots, i+2\alpha+\xi-p\}\\
J^-_a  (i,p) \doteq \left\{0, \ldots \xi\right\} \cap \{i-2\alpha-\xi-p, \ldots, i-\xi-p\}.
\end{array}
$$
\noindent In the following, to avoid some technicalities, we suppose $\xi> \alpha$. At first, we notice that when $p > 2(\alpha+\xi)$,  $\left\{z_n , z_{n+p} \right\}$ vanishes.\\
\noindent Let us start considering the set $J^-_a(i,p)$. It is non-empty only if $i= \xi$ and $p=0$. When $p=0$, $J^+_a(i,p)$ is non-empty only if $i=0$, then our summation vanishes, indeed
$$
\begin{array}{rcl}
\left\{z_n , z_{n} \right\} & = & \{  \sum_{i=0}^{\xi} \tau_{i} \, u_{n+i},  \sum_{j=0}^{\xi} \tau_j \,  u_{n+j} \}\\
			              &= & \sum_{j>i} \tau_{i} \, \tau_j \,    \{ u_{n+i}, u_{n+j} \} - \sum_{i>j} \tau_{i} \, \tau_j \,    \{ u_{n+j}, u_{n+i} \} =0.  				      
 \end{array}
$$
\noindent If $p \geq1$, we can reduce to evaluate 
\begin{equation}\label{zzz}
\begin{array}{rcl}
\left\{z_n , z_{n+p} \right\} & = & \sum_{i=0}^{\xi} \tau_{i} \,\left[ \sum_{j\in J^+_a (i,p)} \tau_j \,  \{ u_{n+i}, u_{n+p+j}\} \right].
\end{array}
\end{equation}
\noindent In the following, we give some details about the complete computation. Enforcing a recursive procedure on $p$, for any $i$ that runs from $0$ to $\xi$, we describe the set of admissible $j$: $J^+_{a}(i,p)$. \\
\noindent {\bf Step 1}: $p= 2(\alpha+\xi)$. Then $J^+_a (i, p) \neq \varnothing$ if and only if $i=\xi$ and
$$
\left\{ z_n, z_{n+2(\alpha+\xi)} \right\} = \tau_{0} \, \tau_\xi\,  \left\{ u_{n+\xi} , u _{n+2(\alpha+\xi)} \right\}=\tau_{0} \, \tau_\xi\, \exp(z_{n+\alpha+\xi}).
$$
\noindent {\bf Step 2}: $p= 2(\alpha+\xi)-q$, $q=1, \ldots, \xi$. In the following table: fixed $p$, we describe the non-empty sets $J^+_{a}(i,p)$, as the index $i$ changes.
$$
\begin{array}{ | c ll  ccc cr |}
\hline
 p &   i: J^+_a(i,p) \neq \varnothing&  			  &		& J^+_a (i,p)&	       	&   &  \\
\hline
\hline
2(\alpha+\xi)-q     							   &  i= \xi       & \{ q-2\alpha &\ldots &\ldots & \ldots &q-1 &q \} \\ 
			      				     		            &\quad \vdots     & \{  \ldots	  &\ldots&\ldots&\ldots&\ldots&\ldots \} \\   
	  		   				     	   	            & i= \xi-q+2\alpha     & \{  \quad	     0 &1&\ldots&\ldots&2\alpha-1 &2\alpha \} \\ 
	  		   				     	   	            &\quad \vdots     & \{  				  &\ddots&\ldots&\ldots&\ldots&\ldots \} \\ 
	  		   				     	   	            & i= \xi-q+\alpha     & \{  				      &&0&\ldots&\alpha-1 &\alpha \} \\ 
	  		   				     	   	            &\quad \vdots     & \{  				  &&&\ddots& \ldots&\ldots \} \\ 
	  		   				     	   	            & i= \xi-q+1     & \{  								  &&&&0 &1 \} \\ 
	  		   				     	   	            & i= \xi-q     & \{  								  &&&& &0 \} \\ 
\hline	  
\end{array}
$$
Looking at sets $J^+_{a}(i,p)$, we distinguish three cases depending on the number of elements $\# J^+_a (i,p)$:
$$
\begin{array}{cclclc}
(i) & &&\# J^+_{a}(i,p)&=& 2\alpha+1\\
(ii) &1& \leq & \# J^+_{a}(i,p) &\leq& \alpha+1\\
(iii)&\alpha+1&< & \# J^+_{a}(i,p) &<& 2\alpha+1
\end{array} 
$$
\noindent (i) {\it When $\# J^+_{a}(i,p)= 2\alpha+1$}, we have
$$
\sum_{i=0}^{\xi}  \tau_{i} \, \sum_{j\in\{i+\xi-p,\ldots, i+2\alpha+\xi-p\}}  \tau_j \,  \{ u_{n+i}, u_{n+p+j}\} = \sum_{i=0}^{\xi}  \tau_{i} \,\sum^{2\alpha}_{t=0}  \tau_{r-t} \{ u_{n+i}, u_{n+i+M-t}\} 
$$
where $r\doteq i+M-p$. Now,
$$
\begin{array}{rl}
\sum^{2\alpha}_{t=0}  \tau_{r-t} \{ u_{n+i}, u_{n+i+M-t}\} &= \sum^\alpha_{s=0} \sum^{s+\alpha}_{t=s}  \tau_{r-t} \lambda^{\alpha-s}_{\alpha-t} f^\xi (u_{n+\alpha-s}, \ldots, u_{n+\alpha+\xi-s})\\
												 &=  \sum^\alpha_{s=0} \lambda^{\alpha-s}_{\alpha-s} f^\xi (u_{n+\alpha-s}, \ldots, u_{n+\alpha+\xi-s}) \, \sum^{s+\alpha}_{t=s}  \tau_{r-t} \lambda^{\alpha}_{\alpha-t+s}  
\end{array}
$$
and
$\sum^{s+\alpha}_{t=s}  \tau_{r-t} \lambda^{\alpha}_{\alpha-t+s} = \sum^{\alpha}_{q=0}  \tau_{r-s-q} \lambda^{\alpha}_{\alpha-q}  \equiv 0,$
according to the recurrence relations \eqref{eq-rec} for the parameters $\tau$'s.\\
\noindent (ii) {\it When} $1 \leq \# J^+_{a}(i,p) \leq \alpha+1$, the set $J^+_{a}(i,p)$ is given by $J^+_{a}(i,p) = \left\{ 0, \ldots, r\right\},$ for some $0 \leq r \leq \alpha$. The summation \eqref{zzz} can be written in the following way
$$
\sum_{i=\xi-q}^{\xi-q+\alpha}  \tau_{i} \,\left[ \sum_{j\in\{0, \ldots, r\}}  \tau_j \,  \{ u_{n+i}, u_{n+p+j}\} \right] \qquad  r \doteq i-\xi+q.
$$
Noticing that $\sum\limits_{j\in\{0, \ldots, r\}}  \tau_j \,  \{ u_{n+i}, u_{n+p+j}\} = \sum\limits_{s=0}^{r} \tau_{r-s}\{ u_{n+i}, u_{n+i+M-s}\} = \tau_{0} \, \lambda^{\alpha-r}_{\alpha-r} f^\xi_{n+i+\alpha-r}$ we obtain
$$
\sum_{i=\xi-q}^{\xi-q+\alpha}  \tau_{i} \,\left[ \sum_{j\in\{0, \ldots, r\}}  \tau_j \,  \{ u_{n+i}, u_{n+p+j}\} \right] =   \left[\sum_{i=\xi-q}^{\xi-q+\alpha} \tau_{i} \, \lambda^{\alpha}_{i-\xi+q}  \right] \, f^\xi_{n+\alpha+\xi-q}
$$
and $\sum_{i=\xi-q}^{\xi-q+\alpha} \tau_{i} \, \lambda^{\alpha}_{i-\xi+q} = \sum_{t=0}^{\alpha} \tau_{\xi-q+t} \,  \lambda^{\alpha}_{t} \equiv 0$, see the recurrence \eqref{eq-rec}.

\noindent (iii) {\it When} $\alpha+1 < \# J^+_{a}(i,p) < 2\alpha+1$, the set $J^+_{a}(i,p)$ is $J^+_{a}(i,p) = \left\{ 0, \ldots, r\right\},$ for some $\alpha < r \leq 2\alpha$. With analogous calculations, one can directly prove that
the summation $\sum_{j\in\{0, \ldots, r\}}  \tau_j \,  \{ u_{n+i}, u_{n+p+j}\} $ vanishes.

\noindent {\bf Step 3}: $p= 2\alpha+\xi-q$, $q=1,\ldots, \alpha+1$.  The sets $J^+_{a}(i,p)$ are given by 
$$
\begin{array}{ | c ll  ccc cr |}
\hline
 p &  i: J^+_a (i,p) \neq \varnothing&  			  &		& J^+_a (i,p)&	       	&   &  \\
\hline
\hline
2\alpha+\xi-q     							   &  i= \xi       & \{ \xi-2\alpha+q &\ldots & \xi & & & \} \\ 
				     		            &\quad \vdots     & \{  	\quad	\ldots		  &\ldots&\ldots&\ddots&& \} \\  
 								           &\quad \vdots     & \{  	\quad	\ldots		  &\ldots&\ldots&\ldots&\ddots& \} \\  
	  		   				     	   	            & i= \xi-q    & \{  \quad	     \xi-2\alpha &\ldots&\ldots&\ldots&\xi-1 &\xi \} \\ 
	  		   				     	   	            &\quad \vdots     & \{  	\quad \ldots			  &\ldots&\ldots&\ldots&\ldots&\ldots \} \\ 
	  		   				     	   	            & i= 2\alpha - q     & \{  				    \quad 0  &\ldots&\ldots&\ldots&\ldots &2\alpha \} \\ 
			    				     	   	            &\quad \vdots     & \{  				  &\ddots&\ldots&\ldots & \ldots&\ldots \} \\ 
	  		   				     	   	            &\quad \vdots     & \{  				  &&\ddots&\ldots & \ldots&\ldots \} \\ 
	  		   				     	   	            & i= 0     & \{  								  &&&0&\dots &q \} \\ 
\hline	  
\end{array}
$$
We find out again the two situations 
$$
\begin{array}{cclclcc}
(i) & && \# J^+_{a}(i,p)&=& 2\alpha+1 & \checkmark\\
(ii) & \alpha+1&<& \# J^+_{a}(i,p) &<& 2\alpha+1& \checkmark
\end{array} 
$$
that we have been already studied in the previous Step. Then the summation \eqref{zzz} vanishes.

\noindent {\bf Step 4}: $p= \alpha+\xi$. In this case, we have
$$
\begin{array}{ | ll ll  ccc ccr |}
\hline
 p & J^+_{i,p} & i: J^+_a \neq \varnothing&  			  &		& J^+_a (i)&	       	&   &&  \\
\hline
\hline
\alpha+\xi     &    \{i-\alpha, \ldots, i+\alpha\}	   &  i= \xi       & \{ \xi-\alpha &\ldots &\ldots &\xi && & \} \\ 
	   		&				     	   	            &\quad \vdots     & \{  	\quad	\ldots		  &\ldots&\ldots&\ldots&\ddots&& \} \\ 
				   		&				     	   	            &\quad \vdots     & \{  	\quad	\ldots		  &\ldots&\ldots&\ldots&\ldots&\ddots& \} \\ 
 	  		   &				     	   	            & i= \xi-\alpha     & \{  \xi-2\alpha   &\ldots&\ldots&\ldots&\ldots&\ldots &\xi \} \\ 
	  		   &				     	   	            &\quad \vdots     & \{  		\vdots		  &\ldots&\ldots&\ldots&\ldots&\ldots&\ldots \} \\ 
	  		   &				     	   	            & i= \alpha    & \{  	\quad	0		      &\ldots&\ldots&\ldots&\ldots& &2\alpha \} \\ 
	  		   &				     	   	            &\quad \vdots     & \{  				 & \ddots&\ldots&\ldots&\ldots& \ldots&\ldots \} \\ 
			   &				     	   	            &\quad \vdots     & \{  				 & &\ddots&\ldots&\ldots& \ldots&\ldots \} \\ 
	  		   &				     	   	            & i= 0     & \{  							&	  &&0&\ldots&\ldots &\alpha \} \\ 
\hline	  
\end{array}
$$
three different cases
$$
\begin{array}{cclclcc}
(i) & && \# J^+_{a}(i,p)&=& 2\alpha+1 & \checkmark\\
(ii) & \alpha+1&<& \# J^+_{a}(i,p) &<& 2\alpha+1& \checkmark\\
(iii) &&& \# J^+_{a}(i,p) &=& \alpha+1&?
\end{array} 
$$
Only the last one provides some contributions. Indeed, 
$$
\begin{array}{rcl}
\sum_{j\in\{0, \ldots, \alpha\}}  \tau_j \,  \{ u_{n+i}, u_{n+p+j}\} &=& \tau_{0} \, \lambda^{0}_{0} f^\xi (u_{n+i}, \ldots, u_{n+i+\xi})
\end{array}
$$
and
$$
\begin{array}{rcl}
\sum_{j\in\{\xi-\alpha, \ldots, \xi\}}  \tau_j \,  \{ u_{n+i}, u_{n+p+j}\} &=& \tau_{\xi}\, \lambda^{\alpha}_{0} f^\xi (u_{n+\alpha+i}, \ldots, u_{n+i+\alpha+\xi}).
\end{array}
$$
Our summation \eqref{zzz} becomes
$$
  \left\{ z_n, z_{n+\alpha+\xi}\right\}= \tau_0 \, \tau_\xi \, f^\xi (z_n) +  \tau_0 \, \tau_\xi \, f^\xi (z_{n+\alpha+\xi}) =  \tau_0 \, \tau_\xi \left[ \exp (z_{n})+  \exp (z_{n+\alpha+\xi})\right].
$$

\noindent Finally, {\bf Step 5}: $p= 0, \ldots, \xi+\alpha-1$ can be analyzed similarly to Step 2 and Step 3. 
\end{proof}

\section{Compatible pairs}\label{compa}
We are now in a position to study the compatible pairs $(P,P')$ of PBs \eqref{PB-1}. We provide some necessary conditions that we expect to be also sufficient. This might be the starting point for a future classification of the still little-understood bi-hamiltonian higher order scalar-valued difference equations. \\

\noindent Let $P$ and $P'$ two PBs of type \eqref{PB-1}. Their leading order functions are given respectively by
\begin{equation}\label{f}
g_M (u_n, \ldots, u_{n+M}) = a_M(u_n)\, f^\xi (u_{n+\alpha}, \ldots, u_{n+\alpha+\xi}) \, a_M(u_{n+M})
\end{equation}
\begin{equation}\label{f'}
g'_{M'} (u_n, \ldots, u_{n+M'}) = a'_{M'}(u_n)\, f^{\xi'}(u_{n+\alpha'}, \ldots, u_{n+\alpha'+\xi' }) \, a'_{M'}(u_{n+M'})
\end{equation}
where $f^\xi_{n+\alpha} = \sigma_M \exp \left( \sum_{p=0}^{\xi} \tau_p \, u_{n+\alpha+p}\right)$ and $f^{\xi'}_{n+\alpha'}= \sigma_{M'} \exp \left( \sum_{p=0}^{\xi'} \tau'_{p} \, u_{n+\alpha'+p}\right)$\\
for some non-zero constant $\sigma_M$ and $\sigma_{M'}$. It is not restrictive to suppose $M\geq M'$.

\begin{lemma}\label{l123}
A pair of non-constant PBs \eqref{PB-1} $(P,P')$ forms a pencil of PBs only if there exists a local change of variables, reducing the leading coefficients \eqref{f} and \eqref{f'} to the formulas
\begin{equation}\label{g}
g_M (u_n, \ldots, u_{n+M}) = f^\xi (u_{n+\alpha}, \ldots, u_{n+\alpha+\xi}) 
\end{equation}
and
\begin{equation}\label{g'}
g'_{M'} (u_n, \ldots, u_{n+M'}) ={f}^{\xi'} (u_{n+\alpha}, \ldots, u_{n+\alpha+\xi'}) 
\end{equation}
for some functions $f^\xi$ and ${f}^{\xi'}$. Notice that $\alpha=\alpha'$.
\end{lemma}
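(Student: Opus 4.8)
The plan is to recognize that the pencil condition adds exactly one new constraint to the two (already assumed) Jacobi identities for $P$ and $P'$: expanding the Jacobi identity $\ref{J1}$ for the combined bracket $\mu P+\nu P'$ as a homogeneous quadratic in $(\mu,\nu)$, the coefficients of $\mu^2$ and $\nu^2$ reproduce the individual identities, while the mixed $\mu\nu$-term yields the bilinear \emph{compatibility identity}, in which every double bracket carries one factor from $P$ and one from $P'$. First I would write this mixed identity in the same index form as $\ref{J1}$, replacing one copy of the coefficients $g_k$ by $g'_k$ (recalling $g_k\equiv0$ for $k>M$ and $g'_k\equiv0$ for $k>M'$).

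Next I would isolate the two leading coefficients by evaluating the compatibility identity at the top indices $[M,M']$ and, symmetrically, $[M',M]$. Exactly as in the proof of Lemma \ref{l-1}, locality kills every term except the one pairing $\{u_n,u_{n+M}\}_P$ with $\{u_{n+M},\dots\}_{P'}$ (and its reflection), so after dividing through one is left with the single scalar relation $(\log g'_{M'})_{,u_{n+M}}(u_{n+M},\dots,u_{n+M+M'})=(\log g_M)_{,u_{n+M}}(u_n,\dots,u_{n+M})$. The two sides depend on disjoint blocks of variables overlapping only in $u_{n+M}$, hence both equal a function of $u_{n+M}$ alone; comparing with the factorizations \eqref{f} and \eqref{f'} (legitimate since $\alpha,\alpha'\ge1$, so $f^\xi$ and $f^{\xi'}$ do not reach the outer variables) this function is at once $\hat a=(\log a_M)'$ and $\hat a'=(\log a'_{M'})'$. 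Therefore $\hat a=\hat a'$, i.e. $a'_{M'}=c\,a_M$ for a nonzero constant $c$.

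This is the crux: the \emph{same} boundary factor governs both leading terms. Consequently the local change $v=\int du/a_M$, which transforms $g_k\mapsto v'(u_n)\,g_k\,v'(u_{n+k})$, simultaneously strips $a_M$ from $g_M$ and, because $a'_{M'}\propto a_M$, strips $a'_{M'}$ from $g'_{M'}$ up to the harmless constant $c^2$ that I absorb into $f^{\xi'}$. This produces precisely the normalized leading coefficients \eqref{g} and \eqref{g'}, so that in these coordinates $P$ and $P'$ are a genuine $(\alpha,\xi)$- and $(\alpha',\xi')$-bracket in the sense of Definition \ref{alphaPBs}.

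It remains to force $\alpha=\alpha'$, and I expect this to be the main obstacle. When $M=M'$ it is immediate: the pencil member $\mu P+\nu P'$ is itself a PB of order $M$ with leading coefficient $\mu f^\xi+\nu f^{\xi'}$, so Lemma \ref{l-1} applies and requires $(\log(\mu f^\xi+\nu f^{\xi'}))_{,u_{n+k_0}}$ to depend only on the leftmost active variable $u_{n+k_0}$, $k_0=\min(\alpha,\alpha')$; since one exponential involves $u_{n+k_0}$ while the other does not, this ratio is genuinely multivariate unless $\alpha=\alpha'$, giving a contradiction otherwise. When $M>M'$ the two leading terms decouple in every pencil member, so the gap information must instead be dug out of the \emph{next} compatibility equations $[M,M'-p]$ for small $p$, the mixed analogues of the equations $[M,M-\beta]$ and $[M-\alpha,M]$ used in Lemma \ref{l-1} to establish gap symmetry of a single bracket. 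The difficulty is bookkeeping: these identities couple the interior exponents of $g_M$ to the boundary derivatives of $g'_{M'},g'_{M'-1},\dots$, and I would track the admissible position of the support of $f^{\xi'}$, showing that consistency of the resulting over-determined relations (again of generalized-Fibonacci type, cf. Remark \ref{fib}) forces the right gap of $f^{\xi'}$ to be exactly $\alpha$; the reflected equations $[M'-p,M]$ then fix the left gap, yielding $\alpha'=\alpha$ and completing the proof.
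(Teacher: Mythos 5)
Your first step coincides with the paper's: a single mixed Jacobi identity of type $[M,M']$ (the $\mu\nu$-term of the pencil condition) forces $\bigl(\log a_M\bigr)' = \bigl(\log a'_{M'}\bigr)'$, hence $a_M \propto a'_{M'}$, and one local change of variables then normalizes both leading coefficients simultaneously. That part is correct and is essentially the paper's argument.

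The gap is in the claim $\alpha=\alpha'$, which is part of the lemma. For $M>M'$ you explicitly stop at a plan ("the difficulty is bookkeeping\dots I would track the admissible position of the support\dots"), invoking an over-determined family of relations of generalized-Fibonacci type whose consistency you hope forces the gap; this is not carried out, and it is also a detour. The paper settles it with the \emph{single} equation $[M,M'-\alpha]$ and pure support/locality reasoning, no recurrences needed: on the left-hand side $\{\{u_n,u_{n+M}\},u_{n+M+M'-\alpha}\}$, the interior support of $g_M=f^\xi$ ends at $u_{n+M-\alpha}$, whose distance to $u_{n+M+M'-\alpha}$ is exactly $M'$, so only the pairing through $g'_{M'}$ survives and the left side is manifestly nonzero; on the right-hand side the inner bracket is $P'$'s coefficient $g'_{M'-\alpha}$ differentiated in its \emph{left-edge} variable $u_{n+M}$, and by the dependence pattern of the classified coefficients of $P'$ this derivative vanishes identically unless $\alpha\geq\alpha'$. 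The resulting identity \eqref{eq-4} then equates a function supported in $[n+M,\,n+M+M'-\alpha]$ with $\tau_\xi\, f^{\xi'}$ shifted to start at $u_{n+M-\alpha+\alpha'}$, which is consistent only if $\alpha'\geq\alpha$; together, $\alpha=\alpha'$. Your $M=M'$ sub-argument is also imprecise as stated: Lemma \ref{l-1} controls the derivative of $\log g_M$ in the edge variable $u_{n+M}$, not in the leftmost \emph{active} variable, so the contradiction you want actually requires the finer facts from the proof of Theorem \ref{Thm} (constancy of $(\log f^\xi)_{,u_{n+\alpha}}$, coming from the equations $[M-\alpha,M]$), or equivalently the invariance of interior mixed second derivatives of $\log g_M$ under point transformations. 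In short: the normalization half is right and matches the paper; the $\alpha=\alpha'$ half is missing its proof in the generic case $M>M'$, precisely where the paper's one-equation support argument does the work.
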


\begin{proof}
Let $P$ and $P'$ PBs defined respectively by leading function \eqref{f} and \eqref{f'}. From equation
\begin{equation*}\tag*{$[M,M']$}
\left\{ \left\{ u_n, u_{n+M'} \right\}, u_{n+M+M'} \right\} =  \left\{ u_n, \left\{ u_{n+M'} , u_{n+M+M'} \right\} \right\}
\end{equation*}
we find out
$
\left( \log\, a_{M} (u_n)\right)_{, {u_n}} = \left( \log\, a'_{M'} (u_n)\right)_{, {u_n}},
$
that implies $a_M (u_n) = k\, a'_{M'}(u_n)$ for some constant $k$. A suitable change of coordinates leads us to \eqref{g} and \eqref{g'}, where the constant $k$ has been absorbed in the multiplicative constant $\sigma_{M'}$. We hold $\alpha= \alpha'$ looking, for example, at equation
\begin{equation*}\tag*{$[M,M'-\alpha]$}
\left\{ \left\{ u_n, u_{n+M} \right\}, u_{n+M+M'-\alpha} \right\} =  \left\{ u_n, \left\{ u_{n+M} , u_{n+M+M'-\alpha} \right\} \right\}
\end{equation*}
that makes sense if and only if $M'-\alpha \leq M'-\alpha'$, i.e. $\alpha \geq \alpha'$. In such case we have
\begin{equation}\label{eq-4}
\begin{array}{c}
{f^\xi_{n+\alpha}}_{, u_{n+\alpha+\xi}}\, f^{\xi'}_{n+\alpha+\xi+\alpha'} = g'_{\alpha+\xi'} (u_{n+M}, \ldots, u_{n+\alpha+\xi+ M'})_{, u_{n+M}}  \, f^\xi_{n+\alpha}
\end{array}
\end{equation}
that gives 
$
g'_{\alpha+\xi'} (u_{n+M}, \ldots, u_{n+M+M'-\alpha})_{, u_{n+M}} =  \tau_\xi \, f^{\xi'}( u_{n+\alpha+\xi+\alpha'}, \ldots, u_{n+\alpha+\alpha'+\xi+\xi'}). 
$
This equation makes sense if $\alpha \leq \alpha'$. It follows $\alpha= \alpha'$, so it is not restrictive to suppose $\xi>\xi'$.
\end{proof}

\begin{thm} \label{Thm-4}
A pair of non-constant PBs \eqref{PB-1} $(P,P')$, defined by leading functions \eqref{g} and \eqref{g'} for certain parameters $\tau$ and $\tau'$ satisfying algebraic constraints of Theorem \ref{Thm-2}, forms a pencil of PBs only if  
\begin{equation}\label{eq-egu}
\tau_{p} = \tau'_p = \tau_{\xi-\xi'+p},
\end{equation}
for any $p=0, \ldots, \xi'$.
\end{thm}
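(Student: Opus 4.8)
The plan is to translate the pencil condition into a system of \emph{mixed} Jacobi identities and then to run the same edge-recursion analysis that proved Theorem~\ref{Thm}, but now coupling the two generators $f^\xi$ and $f^{\xi'}$. Since the Jacobi expression $[p,q]$ attached to \eqref{PB-1} is bilinear in the structure functions, the Jacobiator of $\mu P+\nu P'$ splits as $\mu^2 J(P,P)+2\mu\nu\,J(P,P')+\nu^2 J(P',P')$; as $J(P,P)$ and $J(P',P')$ already vanish, $(P,P')$ is a pencil iff the cross part $J(P,P')=0$, i.e.\ iff the $\mu\nu$-terms of each identity $[p,q]$ vanish for all admissible $p,q$. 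By Lemma~\ref{l123} I may assume $\alpha=\alpha'$, the reduced leading forms \eqref{g} and \eqref{g'}, and $\xi\ge\xi'$, so that the only unknowns left are the two parameter strings $\tau_0,\dots,\tau_\xi$ and $\tau'_0,\dots,\tau'_{\xi'}$, each already constrained by Theorem~\ref{Thm-2}.

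First I would re-derive, for the mixed identities, the logarithmic-edge relations used in Theorem~\ref{Thm}. Just as $[M-\alpha,M]$ and $[M,M-\alpha]$ gave $(\log f^\xi)_{,u_{n+\alpha}}=\tau_0$ and $(\log f^\xi)_{,u_{n+\alpha+\xi}}=\tau_\xi$, the cross identities $[M-\alpha-p,\,M']$ for small $p$ probe the \emph{left} edge of the two generators and should force the low-index parameters to coincide, $\tau'_p=\tau_p$ for $p=0,\dots,\xi'$; equation \eqref{eq-4} of Lemma~\ref{l123}, which already ties $\tau_\xi$ to the lowest coefficient of $P'$, is the first instance of this coupling. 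The symmetric family $[M',\,M-\alpha-p]$ probes the \emph{right} edge and should yield the shifted matching $\tau'_p=\tau_{\xi-\xi'+p}$ for the same range of $p$. Each family produces an order-$\alpha$ homogeneous recurrence of the type \eqref{eq-rec}, which I would solve through the generalized-Fibonacci and band-Toeplitz determinant formalism of Remark~\ref{fib}, exactly as in the single-bracket computation.

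Combining the two matchings is what delivers the theorem: the width-$(\xi'+1)$ generator $f^{\xi'}$ must reproduce the logarithmic slopes of the width-$(\xi+1)$ generator $f^\xi$ both when aligned to its left end and when aligned to its right end, and since $\tau'_p=\tau_p$ and $\tau'_p=\tau_{\xi-\xi'+p}$ hold simultaneously, eliminating $\tau'_p$ forces the translation symmetry $\tau_p=\tau_{\xi-\xi'+p}$ on the overlap indices. This is precisely \eqref{eq-egu}. I would close by checking that the remaining mixed identities $[p,q]$ are consistent with \eqref{eq-egu}, expanding every $g_k$ of $P$ and every $g'_k$ of $P'$ into shifts of $f^\xi$ and $f^{\xi'}$ via formula \eqref{g_k}, and verifying that they impose no further relation, in agreement with the paper's expectation that these conditions are also sufficient.

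The main obstacle I anticipate is bookkeeping rather than conceptual: I must track, in each mixed identity, which coefficients $g_k$ of $P$ and $g'_k$ of $P'$ enter, over which interleaved ranges of lattice variables, with two exponential generators of \emph{different} support widths $\xi+1$ and $\xi'+1$. The delicate point is aligning the left-anchored recurrence (anchored at $\tau_0$) against the right-anchored one (anchored at $\tau_\xi$) so that the shift $\xi-\xi'$ emerges cleanly; here one must check that the nonvanishing of all $\tau_i$ guaranteed by \eqref{ast}, together with the constraints of Theorem~\ref{Thm-2}, makes the two recurrences simultaneously solvable rather than overdetermined on the overlap indices $0,\dots,\xi'$.
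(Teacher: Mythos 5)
Your proposal is correct and follows essentially the same route as the paper: the paper also reduces the pencil condition to the mixed Jacobi identities $[M',M-\alpha-p]$ and $[M-\alpha-p,M']$ for $p=0,\ldots,\alpha+\xi'$, extracts the edge matchings $\tau'_0=\tau_0$, $\tau'_{\xi'}=\tau_\xi$, then iterates to get $\lambda'^{\alpha}_{\alpha-p}=\lambda^{\alpha}_{\alpha-p}$ and the pairings $\tau_p=\tau'_p$, $\tau_{\xi-p}=\tau'_{\xi'-p}$, from which \eqref{eq-egu} follows via the order-$\alpha$ recurrences \eqref{eq-rec}. Your explicit splitting of the Jacobiator into $\mu^2$, $\mu\nu$, $\nu^2$ parts is left implicit in the paper but is exactly the justification it relies on.
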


\begin{proof}
Let $P$ and $P'$ be a pair of PBs \eqref{PB-1} respectively, of order $M=2\alpha+\xi$ and $M'= 2 \alpha+\xi'$ (i.e. $M > M'$), leading functions \eqref{g} and \eqref{g'} and suppose that any linear combination
$\mu \, P + \nu \, P'$, for $\mu, \nu $ arbitrary constants, is a PB of order $M$ (i.e. it satisfies the bi-linear PDEs, coming from Jacobi identity).\\ 
Analogously to the procedure followed in the proof of Theorem \ref{Thm}, we find necessary conditions looking recursively at equations
$$
\begin{array}{lll}
\left[ M', M-\alpha-p \right],& \left[ M-\alpha-p,M' \right] & p=0, \ldots, \alpha+\xi'
\end{array}
$$
\noindent When $p=0$, equation 
\begin{equation*}\tag*{$[M', M-\alpha]$}
\left\{ \left\{ u_n, u_{n+M'} \right\}, u_{n+M+M'-\alpha} \right\} =  \left\{ u_n, \left\{ u_{n+M'} , u_{n+M+M'-\alpha} \right\} \right\} 
\end{equation*}
gives us
$$
 \log f^{\xi'}(u_{n+\alpha}, \ldots, u_{n+\alpha+\xi'})_{, u_{n+\alpha+\xi'}} =	\lambda^{0}_{0}\, \log f^\xi (u_{n+M'}, \ldots, u_{n+M'+\xi})_{, u_{n+M'}} 
$$
that enable us to identify $\tau'_{\xi'} \equiv  \tau_\xi.$ Analogously, equation$[M-\alpha, M']$ provides $\tau'_0 \equiv  \tau_{0}$.

\noindent Iterating this procedure, when $1\leq p \leq \alpha$, equations $\left[ M-\alpha-p, M' \right]$ restricted to the function $f^{\xi'}$ originate some constraints that can be organized in the following matrix form
\begin{equation*}
\left(
\begin{array}{c}

\tau_{\xi}\, \lambda'^{\alpha-p}_{-p}\\
0\\
\vdots\\
\vdots\\
0
\end{array}
\right)
=
\left(
\begin{array}{ccccc}
\lambda'^{\alpha-p}_{\alpha-p} 		& 0 							    & \ldots  &\ldots&0\\
\lambda'^{\alpha-p+1}_{\alpha-p}	&  \lambda'^{\alpha-p+1}_{\alpha-p+1} &\ddots  &\vdots&\vdots\\
 \vdots 						& \vdots 						    &\ddots  & \ddots&\vdots\\
 \lambda'^{ \alpha-1}_{\alpha-p} 	&  \lambda'^{ \alpha-1}_{\alpha-p+1}     & \ldots & \lambda'^{\alpha-1}_{\alpha-1} & 0\\
 \lambda'^{\alpha}_{\alpha-p} 		& \lambda'^{\alpha}_{\alpha-p+1}	    &  \ldots &  \ldots &  \lambda'^{\alpha}_{\alpha} 
\end{array}
\right)
\,
\left(
\begin{array}{c}
\tau_{0}\\
\tau_{1}\\
\vdots\\
\tau_{p-1}\\
\tau_p
\end{array}
\right)
\end{equation*}
and recursively on $p$, we obtain that $\lambda'^{\alpha}_{\alpha-p} \equiv \lambda^{\alpha }_{\alpha-p}$, for any $p=0, \ldots, \alpha.$ Adding the contribution coming from equations $\left[ M',M-\alpha-p\right]$, the following constraints on the parameters $\tau$'s hold
$$
\begin{array}{rcl}
\tau_p &=& \tau'_{p}\\
\tau_{\xi-p} &=& \tau'_{\xi'-p}
\end{array} \qquad p=0, \ldots, \alpha-1.
$$
The formula \eqref{eq-egu} immediately follows.
\end{proof}

\subsection{Compatibility with constant brackets}
We devote this sub-section to study the compatibility of pairs $(P,P')$ of PBs \eqref{PB-1}, where $P'$ can be reduced by local change of coordinates to the constant form.

\begin{lemma}\label{L-const}
Any non-constant PB \eqref{PB-1} of order $M=2\alpha+\xi$, with $\xi>0$ is compatible with a constant bracket of order $M'$ only if $M'\leq \alpha$. Moreover, the constant coefficients $\{\sigma_i\}_{i=1, \ldots, M'}$ have to satisfy the following recurrences
\begin{equation}\label{eq-5}
\begin{array}{rl}
\sigma_{\alpha-k} &=  \left( \sum_{i=0}^k  \lambda^{\alpha-i}_{\alpha-k} \right) \sigma_{\alpha},  \quad  k=0, \ldots, \alpha,\\
\sigma_0 &\equiv 0.
\end{array}
\end{equation}
\end{lemma}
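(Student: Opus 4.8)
The plan is to reduce compatibility to the vanishing of the cross-term of the Jacobi identity and then to read both assertions off a short list of mixed equations, in the spirit of the proofs of Theorem~\ref{Thm} and Theorem~\ref{Thm-4}. A pair $(P,P')$ is compatible iff $\mu P+\nu P'$ is Poisson for all $\mu,\nu$; since the identity $[p,q]$ is bilinear in the coefficients and both $P$ (coefficients $g_k$) and $P'$ (constant coefficients $\sigma_k$) are separately Poisson, the $\mu^2$- and $\nu^2$-parts drop out, so compatibility is equivalent to the vanishing, for all admissible $p,q$, of the cross-term. Because the $\sigma_k$ are constant, every derivative of a $\sigma$-coefficient vanishes and this cross-term reduces to
\[
\sum_{i=0}^{p}(g_p)_{,u_{n+p-i}}\sigma_{q+i}-\sum_{i=0}^{q}(g_q)_{,u_{n+p+i}}\sigma_{p+i}=\sum_{i=0}^{p}(g_{p+q})_{,u_{n+i}}\sigma_{p-i}-\sum_{i=0}^{q}(g_{p+q})_{,u_{n+p+i}}\sigma_{i},
\]
which I denote $[p,q]_{\times}$. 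Substituting the classification \eqref{g_k}, each $g_k$ and each of its derivatives is a linear combination of the shifts $T^{s}f^\xi$, which are linearly independent as functions; hence $[p,q]_{\times}$ splits into one scalar relation per shift, and on the leading exponential $g_M=T^{\alpha}f^\xi$ one has $(g_M)_{,u_{n+\alpha+l}}=\tau_{l}\,g_M$.

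To obtain the bound $M'\le\alpha$ I argue by contradiction. Assume $M'>\alpha$ and consider $[M,M'-\alpha]_{\times}$, which is legitimate since $M'-\alpha\ge1$. As $p+q=M+M'-\alpha>M$ we have $g_{p+q}\equiv0$, so the right-hand side is zero; in the left-hand side the second sum is empty (all $\sigma_{M+i}\equiv0$), and in the first sum the requirements $(g_M)_{,u_{n+M-i}}\neq0$, i.e. $\alpha\le i\le M-\alpha$, and $\sigma_{(M'-\alpha)+i}\neq0$, i.e. $i\le\alpha$, are compatible only at $i=\alpha$. Thus $[M,M'-\alpha]_{\times}$ collapses to $\tau_\xi\,g_M\,\sigma_{M'}=0$; since $\tau_\xi\neq0$ by \eqref{ast} and $g_M\not\equiv0$, this forces $\sigma_{M'}=0$, contradicting the assumption that $P'$ has order $M'$. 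Hence $M'\le\alpha$, and from here on $\sigma_m\equiv0$ for $m>\alpha$.

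For the recurrences \eqref{eq-5} I run the family $[\,p,\,M-p\,]_{\times}$ with the small index $p\in\{0,\dots,\alpha\}$ first, the mixed analogue of the equations exploited in Theorem~\ref{Thm}. When $\xi>\alpha$ the leading factor $g_p$ vanishes (as $p<\xi$), $g_{p+q}=g_M$, and the right-hand side is the clean expression $g_M\sum_{j\ge1}\bigl(\tau_{p-\alpha-j}-\tau_{p-\alpha+j}\bigr)\sigma_j$, while after matching supports the left-hand side contributes to the top shift $T^{\alpha}f^\xi=g_M$ only through $\sigma_\alpha$. Collecting the coefficient of $g_M$ yields, for each $p$, a lower-triangular convolution
\[
\sum_{k=0}^{p}\tau_{p-k}\,\sigma_{\alpha-k}=\tau_0\,\lambda^{\alpha}_{\alpha-p}\,\sigma_\alpha,\qquad p=0,\dots,\alpha,
\]
with $\tau_0\neq0$ on the diagonal, which determines $\sigma_{\alpha-1},\dots,\sigma_0$ successively from $\sigma_\alpha$. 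Solving it — for instance by generating functions, using \eqref{eq-2} in the form $\tau(t)\bigl(\sum_r\lambda^{\alpha}_{\alpha-r}t^r\bigr)=\tau_0$ — expresses each $\sigma_{\alpha-k}$ as a fixed polynomial in the $\lambda^{\alpha}_{\alpha-r}$ times $\sigma_\alpha$. Finally I reorganize that polynomial, through the multiplication rule \eqref{multi}, the relation \eqref{eq} and the $\tau$-constraints of Theorem~\ref{Thm-2}, into the row-sum $\sum_{i=0}^{k}\lambda^{\alpha-i}_{\alpha-k}$ of the rhombus, which is exactly \eqref{eq-5}; the boundary value $\sigma_0\equiv0$ is the instance $k=\alpha$.

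The hard part will be this last identification: turning the $\tau$-convolution produced by $[p,M-p]_{\times}$ into the bare $\lambda$-row-sums of \eqref{eq-5}. The per-shift relations and the triangular system are only equivalent to the stated closed form after the full $\lambda$--$\tau$ dictionary (\eqref{multi}, \eqref{eq}, \eqref{eq-rec}) and the projective constraints of Theorem~\ref{Thm-2} are invoked, and the precise matching of coefficients is delicate. A secondary difficulty is the casework in $\alpha$ versus $\xi$: the reduction above assumed $\xi>\alpha$ so that $g_p\equiv0$ in the small slot, and the complementary ranges $\xi<\alpha$ and $\alpha\le\xi<2\alpha-1$ (which already split Theorem~\ref{Thm-2} into cases) require the same bookkeeping for the surviving intermediate coefficients $g_k$. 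Once these shifts are accounted for, the symmetric family $[M-p,p]_{\times}$ is checked to give no new information, completing the necessity of \eqref{eq-5}.
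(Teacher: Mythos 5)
Your overall route coincides with the paper's: reduce compatibility to the vanishing of the $\mu\nu$ cross-term of the Jacobi identity, obtain the bound $M'\le\alpha$ from the equation $[M,M'-\alpha]$, and obtain the recurrences from the family $[p,M-p]$, $p=1,\dots,\alpha$. The first half is essentially the paper's own argument, up to two small omissions: the paper first uses $[M,M']$ to show that the point-wise factors $a_M(u_n)$, $a_{M'}(u_n)$ of the two leading coefficients agree up to a constant, so that a \emph{single} local change of variables makes $P$ canonical and $P'$ constant simultaneously (you tacitly assume both normal forms hold in the same chart, which is part of what must be proved); and when $M'=M$ your claim that the second sum on the left of $[M,M'-\alpha]_{\times}$ is empty fails, since $\sigma_M\neq0$ — the offending term still vanishes, but because its shifts of $f^\xi$ involve variables disjoint from those of $g_M$, not because the $\sigma$'s are zero.

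The genuine gap is in the convolution system. In $[p,M-p]_{\times}$ the only surviving term of $-\sum_i\bigl(g_{M-p}\bigr)_{,u_{n+p+i}}\sigma_{p+i}$, where $g_{M-p}=\sum_{s=\alpha-p}^{\alpha}\lambda^{s}_{\alpha-p}\,T^{s}f^\xi$ by \eqref{g_k}, is the one with $i=s=\alpha-p$ (because $\sigma_{p+i}\neq0$ forces $i\le\alpha-p$, while $\tau_{i-s}\neq0$ forces $i\ge s\ge\alpha-p$); its coefficient is therefore $\tau_0\,\lambda^{\alpha-p}_{\alpha-p}$, the \emph{diagonal} entry of the rhombus, not $\tau_0\,\lambda^{\alpha}_{\alpha-p}$ as you wrote. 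The correct system,
\begin{equation*}
\sum_{k=0}^{p}\tau_{p-k}\,\sigma_{\alpha-k}=\tau_0\,\lambda^{\alpha-p}_{\alpha-p}\,\sigma_\alpha ,\qquad p=1,\dots,\alpha,
\end{equation*}
is the paper's, and it yields \eqref{eq-5} by a short induction using only the defining recursion of the $\lambda$'s and the multiplication rule \eqref{qmulti}; none of the projective constraints of Theorem \ref{Thm-2} is needed, so the step you postpone as ``the hard part'' is in fact immediate once the coefficient is right. With your coefficient the conclusion becomes false: for a $(1,1)$-bracket, where $\tau_0^2+\tau_1^2=0$, your $p=1$ equation reads $\tau_1\sigma_1=\tau_0\lambda^{1}_{0}\sigma_1=(\tau_0^2/\tau_1)\sigma_1$, i.e. $-2(\tau_0^2/\tau_1)\sigma_1=0$, forcing $\sigma_1=0$; but \eqref{eq-5} is identically satisfied there (since $\lambda^{1}_{0}+\lambda^{0}_{0}=(\tau_0^2+\tau_1^2)/(\tau_0\tau_1)=0$), and the paper's examples exhibit nonzero first-order constant brackets compatible with $(1,\xi)$-brackets precisely for $\xi=0,1$. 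The mismatch cannot be absorbed by the $\tau$-constraints: it would require $\lambda^{\alpha}_{p}=\lambda^{\alpha}_{0}\,\lambda^{\alpha}_{\alpha-p}$, which fails whenever $(\tau_0/\tau_\xi)^2\neq1$. (What is true is that the mirror family $[M-p,p]_{\times}$ produces $\sum_{k=0}^{p}\tau_{\xi-p+k}\,\sigma_{\alpha-k}=\tau_\xi\,\lambda^{\alpha}_{\alpha-p}\,\sigma_\alpha$; your system hybridizes the left-hand side of one family with the right-hand side of the other.)
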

\begin{proof}
Let us fix a pair of PBs \eqref{PB-1} $(P,P')$ such that their leading functions are of the form 
$$
\begin{array}{rcl}
g_M(u_n, \ldots, u_{n+M}) &=& a_M(u_n) \, f^\xi (u_{n+\alpha}, \ldots, u_{n+\alpha+\xi}) \, a_M(u_{n+M})\\
g'_{M'}(u_n, \ldots, u_{n+M'}) &=& a_{M'}(u_n) \, \sigma_{M'} \, a_{M'}(u_{n+M'}), \ \sigma_{M'} \neq 0.
\end{array}
$$
As in the proof of Lemma \ref{L-const}, from equation $\left[ M,M'\right]$, we obtain $\left( \log\, a_{M} (u_n)\right)_{, {u_n}} = \left( \log\, a'_{M'} (u_n)\right)_{, {u_n}}$, that is $a_{M} (u_n) = k \, a_{M'}(u_{n})$, for some constant $k$.\\
Then, after a change of variables, we can reduce to consider
$$
\begin{array}{rcl}
g_M(u_n, \ldots, u_{n+M}) &=& f^\xi (u_{n+\alpha}, \ldots, u_{n+\alpha+\xi})\\
g'_{M'}(u_n, \ldots, u_{n+M'}) &=&\sigma_{M'}.
\end{array}
$$
Let us now suppose $\alpha< M' \leq M$, then from the equation
\begin{equation*}\tag*{$[M, M'-\alpha]$}
\left\{ \left\{ u_n, u_{n+M} \right\}, u_{n+M+M'-\alpha} \right\} =  \left\{ u_n, \left\{ u_{n+M} , u_{n+M+M'-\alpha} \right\} \right\}  
\end{equation*}
we obtain
$ f^\xi (u_{n+\alpha}, \ldots, u_{n+\alpha+\xi})_{, u_{n+\alpha+\xi}} \sigma_{M'} = 0 $ that implies $\sigma_{M'} \equiv 0$.\\
\noindent Let us suppose $\xi>0$ and $k=1, \ldots, \alpha$, then $M-k >\alpha$. We focus our attention on equations 
\begin{equation*}\tag*{$[k,M-k]$}
\left\{ \left\{ u_n, u_{n+k} \right\}, u_{n+M} \right\} =  \left\{ u_n, \left\{ u_{n+k} , u_{n+M} \right\} \right\} + \left\{ \left\{ u_n, u_{n+M} \right\}, u_{n+k} \right\}, 
\end{equation*}
obtaining the constraints system $ \lambda^{\alpha-k}_{\alpha-k} \sigma_\alpha = \tau^{-1}_0  \sum_{i=0}^{k} \sigma_{\alpha+i-k}\, \tau_{i} $ that can be written in the form \eqref{eq-5}. \\
\end{proof}
\begin{remark}
When $\xi=0$, we immediately find that any $(\alpha, 0)$-bracket is compatible with a constant PB of order $\alpha$, given by the formulas
$$
\begin{array}{ccl}
\{ u_n, u_{n+\alpha} \} &=& \sigma_\alpha\\
\{ u_n, u_{n+\alpha-s} \} &\equiv& 0, \qquad s=1,\ldots, \alpha-1.
\end{array}
$$

\end{remark}

\subsection{Examples.} We complete this Section, adding the details for the relevant family of $(1,\xi)$-brackets, where we are able to prove that our necessary conditions are also sufficient.
\noindent Let $P$ be a $(1, \xi)$-bracket,
\begin{itemize}
\item[-] if $\tau_{0} \neq \tau_\xi$, according to Theorem \ref{Thm-4}, there are not $(1,\xi')$-brackets $P'$, such that the pair $(P,P')$ forms a pencil of PBs. Looking for constant brackets, we have that $P$ is compatible only with first order constant brackets if and only if $\xi = 0,1$.\\
\item[-] if $\tau_{0} = \tau_\xi$, then necessarily $M=2K$ for some positive integer $K$. Any pair of $(1, 2q)$-brackets, with $q=1, \ldots, K$, defined by building functions of the form
$$
f^{2q} (u_{n+1}, \ldots, u_{n+2q-1}) =  \exp(z^{(2q)}_{n+1}),
$$
where $z^{(2q)}_n \doteq \sum_{i=0}^{2(q-1)} (- \tau_{0})^{i} u_{n+i}$, forms a pencil.
\end{itemize}

\section{On non-degenerate, vector-valued PBs}\label{Vector}
In order to provide some examples of vector-valued PBs, we can define new lattice variables, according to the following 
\begin{prop}\label{p-consoli}
Let us consider a pair of positive integers $(M,K)$, with $K \leq M$. Any scalar-valued PB \eqref{PB-1} of order $M$, according to the formula
$$
v_n^{1+p} \doteq u^1_{nK+p}, \ \ \ p=0, \ldots, K-1,
$$
is transformed into a non-degenerate PB (i.e. the leading order is given by a non-singular matrix) of order $A$ and target space of dim. $K$ iff $M=A \, K$, for some positive integer $A$.
\end{prop}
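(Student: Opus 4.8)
The plan is to observe first that the consolidation $v_n^{1+p}\doteq u_{nK+p}$ is nothing but a bijective relabelling of the lattice variables, so antisymmetry and the Jacobi identity are inherited automatically and the consolidated object is a Poisson bracket for free; the only substantive content of the statement concerns its \emph{order} and the \emph{non-degeneracy} of its leading matrix. Writing $v_n^i$ for the variable attached to $u_{nK+(i-1)}$, the component bracket $\{v_n^i,v_{n+r}^j\}$ is the scalar bracket of $u_{nK+(i-1)}$ with $u_{(n+r)K+(j-1)}$, hence it is governed by the single shift $k=rK+(j-i)$ and, by \eqref{PB-1} together with antisymmetry, is nonzero only when $1\le k\le M$. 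First I would use this to identify the order $A'$ of the consolidated bracket as the largest $r$ for which the $K\times K$ matrix $(G_{r})^{ij}=g_{rK+(j-i)}$ (with the convention $g_k\equiv0$ for $k\le0$ or $k>M$) is not identically zero; since $j-i$ ranges over $\{-(K-1),\dots,K-1\}$, a short bookkeeping computation gives $A'=\lceil M/K\rceil$.

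For the forward (``if'') implication, suppose $M=AK$. Then the order is $A'=A$ and the leading entry is $(G_A)^{ij}=g_{M+(j-i)}$, which is nonzero only when $j\le i$ (the lower bound $M+(j-i)\ge1$ holds always since $j-i\ge-(K-1)\ge1-M$). Hence $G_A$ is lower triangular, with every diagonal entry equal to the leading coefficient $g_M\ne0$ (an exponential $f^\xi$, or a nonzero constant). Therefore $\det G_A=(g_M)^{K}\ne0$ identically, so the consolidated bracket is non-degenerate and of order $A$, as claimed.

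For the converse (``only if''), suppose $K\nmid M$ and write $M=AK+\rho$ with $0<\rho<K$, and set $\beta\doteq K-\rho\ge1$. Then the order becomes $A'=A+1$: the entry on the $\beta$-th subdiagonal, where $i-j=\beta$, equals $g_{(A+1)K-\beta}=g_M\ne0$, so $G_{A+1}$ is not identically zero, while every entry with $i-j<\beta$ corresponds to a shift $k=(A+1)K+(j-i)>M$ and therefore vanishes. Thus the leading matrix $G_{A+1}$ is supported only on $\{(i,j):i-j\ge\beta\}$. I would finish by a permutation-counting argument: a nonzero term $\operatorname{sgn}(\sigma)\prod_i (G_{A+1})^{i\,\sigma(i)}$ in the Leibniz expansion forces $i-\sigma(i)\ge\beta$ for every $i$, and summing over $i$ together with $\sum_i\sigma(i)=\sum_i i$ yields the impossible inequality $0\ge K\beta>0$. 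Hence $\det G_{A+1}\equiv0$, the leading matrix is singular, and the consolidated bracket is degenerate. Combining the two implications gives the stated equivalence.

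The main obstacle I anticipate is purely combinatorial bookkeeping: pinning down the exact band structure of the leading matrix, namely the value of $\beta$, the subdiagonal that carries $g_M$, and the verification that the order is exactly $\lceil M/K\rceil$, so that the clean triangularity in the divisible case and the support condition $i-j\ge\beta$ in the non-divisible case are both established rigorously. Once these structural facts are in place, the two determinant evaluations (product of a triangular diagonal, respectively vanishing by the permutation argument) are immediate.
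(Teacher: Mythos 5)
Your proof is correct; the main thing to note is that the paper itself gives \emph{no} proof of Proposition \ref{p-consoli} --- it is stated as an observation and immediately specialized to $K=M$, so there is no paper argument to match yours against. The closest the paper comes is the proof of Theorem \ref{Thm-5}, where (for $K=M$, hence $A=1$) the consolidated leading matrix $g^{ij}(\v_n,\v_{n+1})$ is written out and factorized as $L(\v_n)R(\v_{n+1})$ with band-triangular factors $\Lambda_L$, $\Lambda_R$; nondegeneracy there is visible from essentially the same triangular structure you exploit. Your argument supplies the missing general-$(M,K)$ bookkeeping: the consolidated shift is $k=rK+(j-i)$, the order is $\lceil M/K\rceil$, the matrix $G_A=\bigl(g_{AK+j-i}\bigr)$ is lower triangular with diagonal entries given by shifted copies of $g_M$ when $M=AK$, and when $M=AK+\rho$ with $0<\rho<K$ the leading matrix is supported on $i-j\ge\beta=K-\rho\ge1$, so every permutation term in the determinant dies by the sum argument $\sum_i\bigl(i-\sigma(i)\bigr)=0<K\beta$. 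Two cosmetic points are worth fixing in a write-up: first, $\det G_A$ is a product of $K$ \emph{shifted} copies of $g_M$, each evaluated on its own string of arguments, not literally the power $(g_M)^K$; second, the non-vanishing of these diagonal entries rests on the canonical form of $g_M$ (an exponential $f^\xi$ or a nonzero constant) from Section \ref{Classif} --- in arbitrary coordinates one should invoke the factorization $g_M=a(u_n)\,f^\xi\,a(u_{n+M})$ of Lemma \ref{l-1}, whose factors are likewise nowhere zero, or else read ``non-singular'' as ``determinant not identically zero.'' Neither point affects the validity of your proof.
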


\noindent We are interested on vector-valued PBs of first order (i.e. $A=1$), then we put $K=M$ in the above proposition. First, let us recall some preliminary definitions.\\
\noindent A Lie group $G$, with a Poisson bracket $\left\{\,\cdot\,,\,\cdot\,\right\}_G$ is a {\it Lie-Poisson group} if the multiplication $\mu: G \times G \rightarrow G$ is a mapping of Poisson manifolds, where on $G \times G$ is defined the bracket $\left\{\varphi, \psi \right\}_{G \times G}(g,h) = \left\{ \varphi\left(\,, h\right), \psi\left(\,, h\right)\right\}_G (g) + \left\{ \varphi\left(g, \,\right), \psi\left(g, \, \right)\right\}_G (h)$, with $h,g \in G.$ Let $c^{k}_{ij}$ be the structure constants of a Lie algebra $\g$. The couple $(\g, \gamma)$ is a {\it Lie bi-algebra} if and only if
\begin{itemize}
		\item[(i)] $\gamma$ is a 1-cocycle on $\g$ with values on $\g \otimes \g$, where $\g$ acts on $\g \otimes \g$ by the adjoint representation $\ad^{(2)}_\xi = \ad_\xi \otimes {\bf 1} + {\bf 1} \otimes \ad_\xi$, that means: $\delta_{\ad} \gamma =0$, i.e.
$$
	\ad^{(2)}_\xi \left(\gamma(\eta)\right)-\ad^{(2)}_\eta \left(\gamma(\xi)\right) - \gamma\left(\left[\xi, \eta\right]\right)=0,
$$
or, fixed a basis of $\g$,  
$
	\ c^\epsilon_{rs} \gamma^{pq}_\epsilon = c^p_{\epsilon s} \gamma^{\epsilon q}_r + c^q_{\epsilon s} \gamma^{p \epsilon}_r - c^p_{\epsilon r} \gamma^{\epsilon q}_s - c^q_{\epsilon r} \gamma^{p \epsilon}_s.
$
	\item[(ii)] $^{t}\gamma: \g^\ast \otimes \g^\ast \rightarrow \g^\ast$ defines a Lie bracket on $\g^\ast$: $\left[\xi, \eta\right]_{g^\ast}= ^{t} \gamma\left(\xi \otimes \eta\right).$ 
\end{itemize}
\noindent The correspondence between the Lie-Poisson groups and the Lie bi-algebras is clarified by the following 

 \begin{thm} Let $G$ a Lie group, with tangent Lie algebra $\g$: locally a Lie-Poisson structure on $G$ is uniquely (up to isomorphism) determined by a Lie algebra structure on the dual space $\g^\ast$, then $\g$ is a Lie bialgebra $\left( \g, \gamma \right)$.
 \end {thm}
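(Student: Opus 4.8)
The plan is to realize the Poisson bracket on $G$ as a bivector field and to linearize it at the identity. First I would write $\{\varphi,\psi\}_G = \langle \pi, d\varphi \wedge d\psi\rangle$ for a global section $\pi \in \Gamma(\Lambda^2 TG)$, so that skew-symmetry of the bracket is built in and the Jacobi identity becomes the vanishing of the Schouten bracket $[\pi,\pi]=0$. The multiplicativity hypothesis --- that $\mu\colon G\times G\to G$ is a Poisson map for the bracket $\{\,\cdot\,,\,\cdot\,\}_{G\times G}$ written in the statement --- translates into the statement that $\pi$ is a \emph{multiplicative} bivector, i.e. $\pi(gh)=L_{g\ast}\pi(h)+R_{h\ast}\pi(g)$. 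Evaluating this relation at $g=h=e$ forces $\pi(e)=0$, so $\pi$ vanishes at the identity and its intrinsic derivative there is well defined.

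Second, I would define the candidate cobracket $\gamma\colon\g\to\g\otimes\g$ as this intrinsic derivative: trivializing $TG$ by right translations and setting $\gamma(X)\doteq(\mathcal L_{\tilde X}\pi)(e)$ for any vector field $\tilde X$ with $\tilde X(e)=X$, which is independent of the extension precisely because $\pi(e)=0$. The key computation is to differentiate the multiplicativity relation at the identity; this yields exactly the $1$-cocycle identity $\delta_{\ad}\gamma=0$ of condition (i), equivalently $c^\epsilon_{rs}\gamma^{pq}_\epsilon = c^p_{\epsilon s}\gamma^{\epsilon q}_r + c^q_{\epsilon s}\gamma^{p\epsilon}_r - c^p_{\epsilon r}\gamma^{\epsilon q}_s - c^q_{\epsilon r}\gamma^{p\epsilon}_s$ in components. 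The skew-symmetry of $\gamma$, needed so that ${}^t\gamma$ lands in $\g^\ast$ as an alternating bracket, is inherited directly from the skew-symmetry of $\pi$.

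Third, I would extract condition (ii) from the Jacobi identity. Expanding $[\pi,\pi]=0$ to lowest nontrivial order at $e$ --- since the multiplicative bivector vanishes to first order, the leading term of $[\pi,\pi]$ is quadratic in the linearization --- produces the co-Jacobi identity for $\gamma$, that is, that ${}^t\gamma\colon\g^\ast\otimes\g^\ast\to\g^\ast$ satisfies the Jacobi identity on $\g^\ast$. This is the step that promotes ${}^t\gamma$ to a genuine Lie bracket, so that $(\g,\gamma)$ is a Lie bialgebra. For the converse, and hence for the uniqueness up to isomorphism that justifies the local formulation, I would integrate: given $(\g,\gamma)$, the cocycle $\gamma$ integrates by Lie's second theorem to a group $1$-cocycle $\Pi\colon G\to\Lambda^2\g$, and the associated multiplicative bivector $\pi$ is Poisson because $[\pi,\pi]$ is a multiplicative trivector whose linearization vanishes by co-Jacobi, hence vanishes identically on the connected group; working locally removes any monodromy obstruction.

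The main obstacle I expect is the bookkeeping in the third step: showing cleanly that the second-order term of the condition $[\pi,\pi]=0$ is \emph{precisely} the co-Jacobi identity for ${}^t\gamma$, and not merely implied by it, so that the correspondence is bijective rather than one-directional. Handling the phrase \emph{up to isomorphism} also requires care, since two Lie-Poisson structures inducing the same $\gamma$ differ by a multiplicative bivector with vanishing linearization, which on a connected group must itself vanish; this is exactly what pins down the structure locally from the data on $\g^\ast$.
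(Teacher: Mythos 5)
Your outline is correct --- it is essentially the standard Drinfel'd argument --- but it takes a genuinely different, and considerably more complete, route than the paper. The paper's proof consists of a citation to \cite{KS} for all details, plus a coordinate sketch of \emph{only} the converse direction (Lie bialgebra $\Rightarrow$ Lie-Poisson group): fixing a basis, one solves the linear system \eqref{pi_G}, namely $\partial_k \pi_G^{pq} = c^p_{sk}\pi_G^{sq} + c^q_{sk}\pi_G^{ps} + \gamma^{pq}_k$ with $\pi_G^{pq}|_e = 0$, whose Frobenius compatibility condition is exactly the $1$-cocycle identity (i); this is precisely the coordinate form of your step of integrating the Lie algebra cocycle $\gamma$ to a group cocycle $\Pi: G \to \Lambda^2\g$. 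By contrast, you also prove the forward direction (multiplicativity forces $\pi(e)=0$, the intrinsic derivative at $e$ is a $1$-cocycle, and the quadratic term of $[\pi,\pi]=0$ at $e$ is the co-Jacobi identity for ${}^t\gamma$), which the paper omits altogether; and in the converse you correctly flag that solvability of the system is not by itself enough --- one must still verify Jacobi for the integrated bivector, which you do via the rigidity of multiplicative multivector fields (a multiplicative trivector with vanishing linearization on a connected group vanishes), and this is where condition (ii) actually enters. The paper instead asserts that the bracket is Poisson ``because the compatibility condition of the system \eqref{pi_G} is guaranteed by (i)'', silently absorbing the role of the Lie bracket on $\g^\ast$. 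What the paper's coordinate route buys is the explicit tensor $\pi_G$, which is exactly the object reused later in Theorem \ref{ThmDub} through the formula $h^{\alpha\beta}(\u_n) = \pi_G^{\alpha\beta}(\u_n) + \Ad^{(2)}_{u^{-1}} k^{\alpha\beta}$; what your route buys is a self-contained, coordinate-free proof of both directions, including the local uniqueness claim, at the price of developing the multiplicative-multivector machinery.
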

 \begin{proof} All the detail of the proof can be found in \cite{KS}. Here we recall some ideas about the direction from Lie bi-algebra to Lie-Poisson group. Fixed a basis of $\left( \g, \gamma \right)$, the constants $\gamma^{p q}_k$ define a Lie bracket on $\g ^\ast$. Moreover,  solving the differential equation $$\gamma^{pq}_k = \partial_k \pi_G ^{p q}|_e,$$ we can define  
 $
 \{\phi, \psi \}_G \doteq \partial_p \phi \pi_G^{p q} \partial_q \psi,
 $
 that is a PBs because the compatibility condition of the system
\begin{equation}\label{pi_G}
 \left\{ \begin{array}{rcl}
            \partial_k \pi_G^{pq}&=& c^{p}_{sk}\pi_G^{s q} + c^{q}_{sk} \pi_G^{ps} + \gamma^{pq}_{k}\\
          \pi_G^{pq}|_e &=&0\\
\end{array} \right.       
 \end{equation}
is guaranteed by $(i)$, i.e. $\gamma$ is a $1-$ cocycle on $\g$.
 \end{proof}
 
 \noindent We focus on the sub-class of Lie Poison group, given by the following
 \begin{defin}\label{admi}
 A Lie-Poisson group $\{G, \{ \,, \, \}_G \}$ is called {\bf admissible}, if there exist:
 \begin{itemize}
 \item[(i)]  A skew-symmetric matrix $\k$$\in \Lambda^2 \g$, such that the cohomologous 1-cocycle $\tilde \gamma$, defined by $\tilde \gamma \doteq \delta_\ad \k + \gamma$, i.e. $\tilde \gamma ^{pq}_t  \doteq \gamma ^{pq}_t + c^p_{st} k^{sq} + k^{ps} c^q_{st}$, provides a Lie algebra structure on $\g^\ast$. Notice that $\k$ has to satisfy the Yang-Baxter equation 
  $$
 k^{sq}  c^p_{st} k^{tr} +  k^{ps} c^q_{st} k^{tr} +  k^{ps} c^r_{sq}  k^{tq} = \gamma^{pq}_s k^{sr} + \gamma^{rp}_s k^{sq} + \gamma^{qr}_s k^{sp}.
$$
 \item[(ii)] A Lie algebra homomorphism $\r$  $:(\g^\ast, \gamma^{pq}_s) \longrightarrow (\g, c^s_{pq})
 $  such that $\r_\ast: r_\ast^{pq} = r^{qp}$
 defines a Lie algebra homomorphism $\r_\ast: \left( \g^\ast, \tilde{\gamma}^{pq}_s\right)\rightarrow \left(\g, c^s_{pq}\right)$.
 \end{itemize} 
 \end{defin}
 
\noindent We are now in a position to formulate the Dubrovin's theorem \cite{Dub89}.
\begin{thm}\label{ThmDub} An admissible Lie-Poisson group $\left(G, \left\{ \,, \, \right\}_G \right)$ together with corresponding matrices $\r$, $\k$ defines a Poisson bracket of the form 
\begin{equation}\label{PB-N1}
\begin{array}{rcl}
           \left\{ u^i_n, u^j_{n}\right\}_1 &=& h^{ij}\left( \u_n \right)\\
             \left\{ u^i_n, u^j_{n+1}\right\}_1 &=& g^{ij} \left( \u_n, \u_{n+1} \right)
\end{array}
\end{equation}
where $\u_n \in G$ for all $n$, according to the following formulas
$$
	\left\{\varphi \left(\u_n\right), \psi \left(\u_{n+1}\right)\right\}_1\doteq \partial_\alpha \varphi\left(\u_n\right) r^{\alpha \beta}\partial'_\beta \psi\left(\u_{n+1}\right)
$$
where $\partial_\alpha$ and $\partial'_\beta$ are left- and right-invariant vector fields on $G$,
$$
	\left\{\varphi \left(\u_n\right), \psi\left(\u_n\right)\right\}_1\doteq h^{\alpha \beta}\left(\u_n\right) \partial_\alpha \varphi\left(\u_n\right)\partial_\beta \psi\left(\u_{n}\right)
$$
where $h^{\alpha \beta}(\u_n) \doteq \pi_G^{\alpha \beta}\left(\u_n\right) + \Ad_{u^{-1}}^{(2)} k^{\alpha \beta}$ and $\pi_G^{\alpha \beta}(\u_n)$ is determined by the system~\eqref{pi_G}. Here $\varphi, \psi$ are arbitrary smooth functions on $G$. Viceversa, all brackets of such form are obtained in this way under the non-degeneracy condition $\det g^{ij} \neq 0$.
\end{thm}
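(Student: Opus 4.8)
The plan is to establish the two directions of the stated equivalence separately: first that an admissible Lie-Poisson group, equipped with the matrices $\r$ and $\k$ of Definition \ref{admi}, produces through the displayed formulas a genuine Poisson bracket of the local form \eqref{PB-N1}; and then, under the non-degeneracy hypothesis $\det g^{ij}\neq 0$, that every such bracket arises in this way. For the forward implication I would first dispose of skew-symmetry: the diagonal tensor $h^{\alpha\beta}=\pi_G^{\alpha\beta}+\Ad^{(2)}_{u^{-1}}k^{\alpha\beta}$ is antisymmetric because $\pi_G$ is the Lie-Poisson bivector and $\k\in\Lambda^2\g$, while the off-diagonal relation $\{u^i_n,u^j_{n+1}\}=-\{u^j_{n+1},u^i_n\}$ follows from the identity $r_\ast^{pq}=r^{qp}$ together with the commutation of the left- and right-invariant frames entering $g$. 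The substance is then the Jacobi identity, which by the locality of \eqref{PB-N1} (all brackets vanish beyond nearest neighbours) reduces to a finite list of site-triples: the single-site triple $(n,n,n)$, the two mixed triples $(n,n,n+1)$ and $(n,n+1,n+1)$, and the three-site triple $(n,n+1,n+2)$.

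The core of the argument is to match each family of triples to one admissibility condition. The single-site triple is exactly the assertion that $h^{\alpha\beta}$ is a Poisson tensor on $G$; writing the Schouten condition $[\,h,h\,]=0$ and using $[\pi_G,\pi_G]=0$, the surviving terms $2[\pi_G,\Ad^{(2)}_{u^{-1}}\k]+[\Ad^{(2)}_{u^{-1}}\k,\Ad^{(2)}_{u^{-1}}\k]$ vanish precisely when $\k$ solves the Yang-Baxter equation and $\tilde\gamma\doteq\delta_{\ad}\k+\gamma$ is again a Lie cobracket, that is Definition \ref{admi}(i). The two mixed triples contract $h$ at one site against the $\r$-coupling at the adjacent one; after expanding $\{h^{\alpha\beta}(\u_n),u^j_{n+1}\}$ and invoking the system \eqref{pi_G} with the cocycle property of $\gamma$, their vanishing becomes the requirement that $\r\colon(\g^\ast,\gamma)\to(\g,c)$ and $\r_\ast\colon(\g^\ast,\tilde\gamma)\to(\g,c)$ be Lie-algebra homomorphisms, which is Definition \ref{admi}(ii). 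Finally the three-site triple $(n,n+1,n+2)$, in which the central site is bracketed with both neighbours, collapses once the homomorphism conditions hold to the associativity of the $\r$-coupling and is therefore automatic.

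For the converse, the non-degeneracy $\det g^{ij}\neq 0$ is exactly what permits the reconstruction. Evaluating the off-diagonal bracket through the left- and right-invariant frames recovers a constant matrix $r^{\alpha\beta}$, and since $\pi_G^{\alpha\beta}|_e=0$ the value $h^{\alpha\beta}|_e$ supplies the skew matrix $k^{\alpha\beta}$. Differentiating the diagonal bracket at the identity as in $\gamma^{pq}_k=\partial_k\pi_G^{pq}|_e$ yields the cobracket $\gamma$ and, via \eqref{pi_G}, the whole Lie-Poisson structure on $G$, while the group multiplication furnishes the structure constants $c^k_{ij}$. Reading the single-site and mixed Jacobi identities backwards then forces $\k$ to satisfy the Yang-Baxter equation and $\r,\r_\ast$ to be homomorphisms, so that $(G,\{\cdot,\cdot\}_G)$ is admissible and the formulas reproduce the given bracket.

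The main obstacle is the evaluation of the mixed triples. The term $\{h^{\alpha\beta}(\u_n),u^j_{n+1}\}$ requires differentiating the $u$-dependent coefficient $\Ad^{(2)}_{u_n^{-1}}\k$ and pairing left-invariant derivatives at site $n$ against the right-invariant frame that enters $g$ at site $n+1$; keeping track of these $\Ad$-twists and of the antisymmetrisations produced by the three cyclic terms is where the cocycle identity for $\gamma$ and the passage from $\gamma$ to $\tilde\gamma$ must be invoked in exactly the right combination. Controlling this bookkeeping, rather than any isolated algebraic identity, is the crux of the proof.
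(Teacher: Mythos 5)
First, a point of comparison that matters for this review: the paper itself contains \emph{no} proof of Theorem \ref{ThmDub}. It is stated as Dubrovin's theorem and quoted from \cite{Dub89} (the text reads ``We are now in a position to formulate the Dubrovin's theorem''), and the exposition moves directly to a remark and to Theorem \ref{Thm-5}. So there is no internal argument to compare your proposal against; it can only be judged on its own merits.

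On those merits, your forward direction is structurally reasonable: locality does reduce the Jacobi identity to the site-patterns $(n,n,n)$, $(n,n,n+1)$, $(n,n+1,n+1)$, $(n,n+1,n+2)$; the one-site pattern is the statement that $h=\pi_G+\Ad^{(2)}_{u^{-1}}\k$ is a Poisson tensor, which is where Definition \ref{admi}(i) enters; the mixed patterns encode the homomorphism conditions of Definition \ref{admi}(ii). One detail is off: the three-site identity does not ``collapse once the homomorphism conditions hold''. Writing $g^{ij}(\u_n,\u_{n+1})=\partial_\alpha u^i\, r^{\alpha\beta}\,\partial'_\beta u^j$, that identity reduces to $r^{\alpha\beta}r^{\gamma\delta}\left[\partial_\gamma,\partial'_\beta\right]=0$, which holds identically because left- and right-invariant vector fields on any Lie group commute; no hypothesis on $\r$ is used there. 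The genuine gap is in the converse, which is the substantive half of the theorem (compare the introduction, where the point is that non-degenerate brackets \eqref{dDGPB1} \emph{are induced by} admissible Lie-Poisson group structures). You assert that ``evaluating the off-diagonal bracket through the left- and right-invariant frames recovers a constant matrix $r^{\alpha\beta}$'', but this presupposes exactly what has to be proved: that the a priori arbitrary coefficient $g^{ij}(\u_n,\u_{n+1})$ factorizes as $L(\u_n)\,r\,R(\u_{n+1})$ with $r$ constant in the invariant frames (and, in Dubrovin's formulation, that the bracket data generate the group structure on the target in the first place). The non-degeneracy hypothesis is not a convenience here; it is what makes the three-site Jacobi identity, viewed as a functional equation in the separated variables $\u_n,\u_{n+1},\u_{n+2}$, yield separation of variables after multiplication by $\left(g^{ij}\right)^{-1}$, and hence the factorization. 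Only after that step can one read off $r$, set $\k=h|_e$, recover $\gamma^{pq}_k=\partial_k\pi_G^{pq}|_e$ via \eqref{pi_G}, and verify admissibility. As written, your reconstruction is circular, so the converse direction does not stand without this missing argument.
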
 

\begin{remark}
This theorem does not seem to have simple applications in the lattice systems, studied in the literature. For example, one can notice that the fundamental Toda lattice has three well-known local compatible PBs (see \cite{Sur}), but only the quadratic one is non-degenerate and can be represent using the previous theorem on the algebra of ${\rm Aff}^{0} \R^1$, group of affine transformations of the straight line.
\end{remark}

\noindent Our new vector-valued $(N>1)$ PBs \eqref{PB-N1} are provided by the following

\begin{thm}\label{Thm-5}
Any $(\alpha, \xi)$-bracket, after the consolidation lattice procedure, becomes a non-degenerate PB of the form \eqref{PB-N1}, with associated Lie bi-algebra given by
$$
\begin{array}{llll}
\g_{(\alpha,\xi)} = \span \left\{ L_s\right\}_{s=1, \ldots, M},  & \left[ L_p, L_{q} \right] (\v_n) =   c_{p\,q}^r \, L_{r}(\v_n)\\
%\left[ L_p, L_{\alpha+\xi+q} \right] =   \gamma_{p,q}\, L_{p},  & p = 1, \ldots, \alpha+\xi\\
\g^\ast_{(\alpha,\xi)} = \span \left\{ R^s\right\}_{s=1, \ldots, M},  & \left[ R^p, R^q \right](\v_n) =   \gamma^{p\,q}_r \, R^{r}(\v_n), 
\end{array}
$$ 
where the summation is over the repeated index $r$ and $c_{p\,q}^r$ and $\gamma^{p\,q}_r$ are certain constants depending on the parameters $\tau$'s.
\end{thm}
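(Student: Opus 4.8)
The plan is to reduce the statement to the converse (\emph{vice versa}) direction of Dubrovin's Theorem~\ref{ThmDub}, after establishing non-degeneracy of the consolidated leading term by a direct computation. First I would carry out the consolidation with $K=M$, so that $A=1$ in Proposition~\ref{p-consoli}, setting $v_n^{1+p}\doteq u_{nM+p}$ for $p=0,\ldots,M-1$. Writing $\v_n=(u_{nM},\ldots,u_{nM+M-1})$, the same-site block is $\{v_n^i,v_n^j\}=g_{j-i}(u_{nM+i-1},\ldots,u_{nM+j-1})$ for $j>i$ (extended by antisymmetry), while the nearest-neighbour block is $\{v_n^i,v_{n+1}^j\}=g_{M+j-i}(u_{nM+i-1},\ldots,u_{(n+1)M+j-1})$, which is non-zero only when $j\le i$. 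Since the leading coefficient $g_M=f^\xi\neq0$ occupies the diagonal and every entry with $j>i$ vanishes, the leading matrix $g^{ij}$ is lower-triangular with $\det g^{ij}=(f^\xi)^M\neq0$. This is precisely the non-degeneracy asserted by Proposition~\ref{p-consoli} in the case $M=A\,K$ with $A=1$, and the consolidated bracket is of the first-order form~\eqref{PB-N1} on a target of dimension $M=2\alpha+\xi$.

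Having produced a non-degenerate bracket of the form~\eqref{PB-N1}, I would invoke the converse part of Theorem~\ref{ThmDub}: every such bracket with $\det g^{ij}\neq0$ is induced by an admissible Lie-Poisson group $(G,\{\,,\,\}_G)$ in the sense of Definition~\ref{admi}. By the correspondence between Lie-Poisson groups and Lie bi-algebras recalled before that theorem, $G$ carries a Lie bi-algebra $(\g_{(\alpha,\xi)},\gamma)$ of dimension $M$. Taking $\{L_s\}_{s=1,\ldots,M}$ to be a basis of left-invariant vector fields on $G$ and $\{R^s\}_{s=1,\ldots,M}$ the dual basis of $\g^\ast$ yields exactly the presentation in the statement; in particular the existence of the bi-algebra, together with the admissibility of $G$, is automatic from Dubrovin's converse and requires no separate verification of the Yang-Baxter or cocycle conditions.

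It then remains to exhibit the structure constants and record their $\tau$-dependence. Here I would reverse-engineer Dubrovin's formulas. The algebra constants $c^r_{pq}$ are read off from the group law, equivalently from the factorization $g^{ij}(\v_n,\v_{n+1})=\partial_\alpha v_n^i\, r^{\alpha\beta}\,\partial'_\beta v_{n+1}^j$ of the leading block into left- and right-invariant vector fields. Because each $g_k$ is, by~\eqref{g_k}, a fixed linear combination of shifts of $\exp(\sum_i\tau_i u_{n+i})$ with coefficients $\lambda^s_p$ given explicitly in~\eqref{a-lambda}, the invariant vector fields are of translation-exponential type and the $c^r_{pq}$ are polynomial in the $\tau$'s. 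The cobracket constants are obtained by linearizing the same-site block at the identity, $\gamma^{pq}_r=\partial_r\pi_G^{pq}|_e$, after using $h^{ij}=\pi_G^{ij}+\Ad^{(2)}_{u^{-1}}k^{ij}$ to separate off the $\k$-correction; their $\tau$-dependence is again inherited from~\eqref{g_k}.

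The main obstacle is the explicit identification of $G$, i.e.\ producing the $r$-matrix $r^{\alpha\beta}$ and the skew tensor $\k$ from the banded lower-triangular leading matrix $g^{ij}$. Although Theorem~\ref{ThmDub} guarantees abstractly that such data exist, writing $r^{\alpha\beta}$ in closed form and recovering $\pi_G$ by solving the compatible system~\eqref{pi_G} order by order near the identity is delicate; the lower-triangular band structure of $g^{ij}$, which makes it easily invertible, is the feature I would exploit to make this reconstruction effective and to express both $c^r_{pq}$ and $\gamma^{pq}_r$ as polynomials in the constrained parameters $\tau_0,\ldots,\tau_{\theta-1};\tau_\xi$.
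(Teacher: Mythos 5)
Your first paragraph is sound and matches the paper's setup: consolidating with $K=M$ does produce a lower-triangular leading block with the shifted copies of $g_M=f^\xi$ on the diagonal, so non-degeneracy holds exactly as in Proposition~\ref{p-consoli}. The gap is in everything after that. The content of Theorem~\ref{Thm-5} is not the abstract existence of \emph{some} Lie bi-algebra (which your appeal to the converse of Theorem~\ref{ThmDub} would give), but the explicit presentation: the paper's proof constructs a concrete factorization $g^{ij}(\v_n,\v_{n+1})=L^i_\mu(\v_n)\,R^{\mu j}(\v_{n+1})$ with $L(\v_n)=\Lambda_L\,\mathrm{diag}(l_1,\ldots,l_M)(\v_n)$ and $R(\v_{n+1})=\mathrm{diag}(r_1,\ldots,r_M)\,\Lambda_R$, where the $\Lambda$'s are banded matrices of the $\lambda$'s and the $l_s,r_s$ are explicit exponentials, defines $L_s$ as the columns of $L$ (and $R^s$ as the rows of $R$), and then verifies by direct computation that $[L_p,L_q]=c^r_{pq}L_r$ with \emph{constant} $c^r_{pq}$. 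That verification is the heart of the proof, and it is not automatic: the closure of the commutators hinges on the constraint equations satisfied by the $\tau$'s, specifically the recurrence \eqref{eq-rec} and the multiplication rule \eqref{multi}, which force the sums $\sum_i \lambda^{t-i}_{t-i}\tau_i$ and $\sum_i \lambda^{\alpha-i}_{\alpha-i}\tau_{t-\alpha+i}$ to vanish. Your assertion that the structure constants are "polynomial in the $\tau$'s" because the vector fields are "of translation-exponential type" glosses over precisely this point; for generic exponential vector fields the bracket of two columns need not lie in their span at all.

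Moreover, you yourself flag the reconstruction of $G$, the $r$-matrix, and $\k$ from Dubrovin's converse as the "main obstacle" and leave it unresolved, so the proposal is a plan rather than a proof: to extract the $L_s$, $R^s$ and their structure constants from the converse of Theorem~\ref{ThmDub} you would have to carry out essentially the same factorization and closure computation that the paper does directly, at which point the detour through the Lie-Poisson group is unnecessary. (Note also that the paper's own proof never needs to verify the Yang-Baxter or cocycle conditions either; it builds $\g_{(\alpha,\xi)}$ and $\g^\ast_{(\alpha,\xi)}$ by hand from the columns of $L$ and rows of $R$.) To repair your argument, replace the second through fourth paragraphs with: (i) the explicit $LR$ factorization obtained by substituting formula \eqref{g_k} into the consolidated leading block, and (ii) the case analysis on $q-p$ (constant columns for $\alpha+\xi<q\leq M$, then $q-p\leq\alpha$ and $q-p>\alpha$) showing the commutators close, using \eqref{eq-rec} and \eqref{multi}.
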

\begin{proof}
Choosing $K=M$ in the Proposition \ref{p-consoli}, we have
$$
g^{ij}(\v_n, \v_{n+1})=\left(
\begin{array}{ccc}
  \left\{ v^1_{n}, v^1_{n+1}\right\}_M &  &   \\   
 \vdots  &\ddots&  \\
  \left\{ v^{M}_n, v^{1}_{n+1}\right\}_M  & \ldots &   \left\{ v^M_{n}, v^M_{n+1}\right\}_M 
  \end{array} \right)
$$
and $ h^{ij}(\v_n) = g^{ji}(\v_{n}, \v_{n}) - g^{ij}(\v_n, \v_{n})$. Substituting the explicit formulas \eqref{g_k}, we find out a decomposition of the leading order
$
g^{ij}(\v_n, \v_{n+1}) = L^{i}_\mu (\v_n) R^{\mu\, j}(\v_{n+1}).
$
The matrix $L(\v_n)$ is given by $L(\v_n) \doteq \Lambda_L \, {\rm diag} (l_1, \ldots, l_M) (\v_n)$ where $l_s(v^{\alpha+s}_n, \ldots, v^{\alpha+\xi+s}_n) = \exp \left(\sum_{i=0}^\xi \tau_{i} \, v^{\alpha+s+i}_n \right)$ and
$$
\Lambda_L \doteq \left(
\begin{array}{lll lll}
\lambda^\alpha_\alpha &						&		&	&&\\
\vdots 			     &\lambda^\alpha_\alpha	&		&	&&\\
\lambda^0_0		     & \vdots				&\ddots	&	&&\\	
				     & \lambda^0_0			&		&\ddots &&\\	
				     &						&\ddots	&	&\ddots&\\
				     &						&		& \lambda^0_0 & \ldots & \lambda^\alpha_\alpha
\end{array}
\right)
$$
Analogously the matrix $R(\v_{n+1})$ factorizes into $R(\v_{n+1}) \doteq {\rm diag}(r_1, \ldots, r_M) \, \Lambda_R$, where $r_s(v^{1}_{n+1}, \ldots, v^{s-\alpha}_{n+1}) = \exp \left(\sum_{i=0}^{\alpha-s+1} \tau_{\xi-i} v^{s-\alpha-i}_{n+1}\right)$,
$$
\Lambda_R \doteq \left(
\begin{array}{lll lll}
\lambda^\alpha_\alpha &						&		&	&&\\
\vdots 			     &\lambda^\alpha_\alpha	&		&	&&\\
\lambda^\alpha_0		     & \vdots				&\ddots	&	&&\\	
				     & \lambda^\alpha_0			&		&\ddots &&\\	
				     &						&\ddots	&	&\ddots&\\
				     &						&		& \lambda^\alpha_0 & \ldots & \lambda^\alpha_\alpha
\end{array}
\right)
$$
Notice that we are using the notation $v^s_n \equiv 0$ if $s<1$ or $s>M$. Moreover, to avoid some technicalities, we suppose $\xi>\alpha$.\\
\noindent Now, let $\{L_s\}_{s=1, \ldots, M}$ be the $s$-th column of matrix $L$ and $\g_{(\alpha, \xi)} \doteq {\rm span} \, \left\{ L_s\right\}$ as vector space.  By direct computation, we find that the following commutators 
\begin{equation} \label{com}
\left[ L_p, L_q\right]^k(\v_n) = L_q^s (\v_n)\, L_{p, s}^k(\v_n) - L_p^s(\v_n)\, L_{q, s}^k(\v_n)  = c_{p\,q}^r \, L_r(\v_n)
\end{equation}
equip $\g_{(\alpha, \xi)}$ of a Lie algebra structure. At first, we observe that if  $\alpha+\xi<q \leq M$, $L_q(\v_n)$ are constant vector fields. Then
$$
\left[ L_p, L_q\right]^k(\v_n) = \sigma_{p\,q} \, L_{p} (\v_n), \qquad p=1, \ldots, M,
$$ 
for some constants $\sigma_{p\,q}$ that can be expressed in terms of parameters $\tau$.

\noindent In particular, observe that 
$\left[ L_s, L_{\alpha+\xi+t}\right]^k(\v_n) = \delta_{s,t} \, L_{s}$, where $s,t=1, \ldots, \alpha$ and $\delta$ is the Kronecker symbol.\\
\noindent Let us now consider $p,q =1, \ldots, \alpha+\xi$. It is not restrictive to suppose $q>p$. Denoting $t\doteq q-p$, the formulas \eqref{com} for the commutators reduce to
$$
\left[ L_p, L_{p+t}\right]^k(\v_n) = L_{p+t}^s (\v_n) \, L_{p, s}^k(\v_n). 
$$
We distinguish two cases:
\begin{itemize} 
\item[(i)]When $ t\leq \alpha$, we have $\left[ L_p, L_{p+t}\right]^k(\v_n) = \lambda^{\alpha-k+p}_{\alpha-k+p}\, \sum_{i=0}^{t} \lambda^{t-i}_{t-i} \tau_{i} \, l_{p+t}(\v_n) l_p(\v_n)$. Let us focus on the summation $\sum_{i=0}^{t} \lambda^{t-i}_{t-i} \tau_{i}$. According to formulas \eqref{multi} and \eqref{eq-rec}, we have 
$$
\sum_{i=0}^{t} \lambda^{t-i}_{t-i} \tau_{i}= \lambda^{0}_0 \, \sum_{i=0}^{t} \lambda^{\alpha}_{\alpha-t+i} \tau_{i} \equiv 0.
$$
\item[(ii)] When $t>\alpha$, we obtain $\left[ L_p, L_{p+t}\right]^k(\v_n) = \lambda^{\alpha-k+p}_{\alpha-k+p}\, \sum_{i=0}^\alpha \lambda^{\alpha-i}_{\alpha-i} \tau_{t-\alpha+i}  \, l_{p+t}(\v_n) l_p(\v_n)$, and we have
$ \sum_{i=0}^\alpha \lambda^{\alpha-i}_{\alpha-i} \tau_{t-\alpha+i}= \lambda^{0}_0 \, \sum_{i=0}^\alpha \lambda^{\alpha}_{i} \tau_{t-\alpha+i} \equiv 0$, according to \eqref{eq-rec}.
\end{itemize}
\noindent Finally, starting from the rows of matrix $R$, we can repeat the procedure above finding the Lie algebra structure on $\g^\ast$.
\end{proof}

%\noindent{\bf Acknowledgements.} 

%\oskip\oskip\footnotesize
%\noindent {\it 1991 Mathematics Subject Classification:}
%Primary 54A99, 58A50, secondary 18F99, 54H20.
%\par\noindent
%{\it Keywords:} Locally ringed superspaces, virtual superspaces, topological superspaces, topological spaces of points.

\end{document}